\def\FullBox{\hbox{\vrule width 8pt height 8pt depth 0pt}}
\newcommand{\QED}{\;\;\;\FullBox}
\renewenvironment{proof}{\noindent{{\textbf{Proof:}~}}} {\hfill\QED}
\providecommand{\email}[1]{\href{mailto:#1}{\nolinkurl{#1}\xspace}}
\newcommand{\lapl}[1][i]{L_{#1}} %
\newcommand{\SCOND}{\ensuremath{\mathsf{SCOND}}\xspace}
\newcommand{\tc}{$t$-contributing\xspace}
\newcommand{\tpc}{$(t+1)$-contributing\xspace}
\newcommand{\gtc}{$t$-contributing\xspace}
\newcommand{\gtpc}{$(t+1)$-contributing\xspace}
\newcommand{\SubCondUni}{$\textsc{SubCondUni}$\xspace}
\title{Random Restrictions of High-Dimensional Distributions and Uniformity Testing with Subcube Conditioning}
\author {
  Cl\'ement L. Canonne\thanks{IBM Research. \email{ccanonne@cs.columbia.edu}. Part of this work was conducted while a Motwani Postdoctoral Fellow at Stanford University.}
  \and
  Xi Chen\thanks{Columbia University. \email{xichen@cs.columbia.edu}. 
  Supported by NSF IIS-1838154 and NSF CCF-1703925.}
  \and 
  Gautam Kamath\thanks{Cheriton School of Computer Science, University of Waterloo. \email{g@csail.mit.edu}. Supported by a University of Waterloo startup grant. 
  Part of this work was conducted while the author was supported by a Microsoft Research Fellowship, as part of the Simons-Berkeley Research Fellowship program, and while visiting Microsoft Research, Redmond.}
  \and
  Amit Levi\thanks{Cheriton School of Computer Science, University of Waterloo. \email{amit.levi@uwaterloo.ca}. Research supported by the David R. Cheriton Graduate Scholarship.}
  \and
  Erik Waingarten\thanks{Columbia University. \email{eaw@cs.columbia.edu}. Supported by the NSF Graduate Research Fellowship (Grant No. DGE-16-44869)}
}
\begin{document}
\maketitle

\begin{abstract}
We give a nearly-optimal algorithm for testing uniformity of distributions supported on $\{-1,1\}^n$, which makes $\widetilde{O}(\sqrt{\dims}/\dst^2)$ many queries to a subcube conditional sampling oracle (Bhattacharyya and Chakraborty (2018)). The key technical component is a natural notion of random restrictions for distributions on $\bool^\dims$, and a quantitative analysis of how such a restriction affects the mean vector of the distribution. Along the way, we consider the problem of \emph{mean testing} with independent samples and provide a nearly-optimal algorithm.
\end{abstract}

\thispagestyle{empty}
\newpage
\tableofcontents
\thispagestyle{empty}
\newpage
\pagenumbering{arabic}

r%

\section{Introduction}\label{sec:intro}

The focus of this paper is high-dimensional distribution testing. The algorithmic problem is the following: we are granted oracle access (the type of which we will specify shortly) to a probability distribution $\p$ on $\Sigma = \bool^\dims$, and must distinguish with probability at least $2/3$ between the case where $\p$ is the uniform distribution, and that where $\p$ is $\dst$-far from uniform in total variation distance. 
The classical works of distribution testing \cite{GoldreichGR96, GoldreichR00, BatuFRSW00} study the above question in the standard statistical setting, where the oracle provides independent samples from $\p$. In this case, the hallmark results are an algorithm and a matching lower bound, showing that $\Theta(\sqrt{|\Sigma|} / \dst^2)$ independent samples are necessary and sufficient for testing uniformity~\cite{Paninski08, ValiantV14}. When studying distributions supported on high-dimensional domains, unfortunately, this implies that the complexity of sample-optimal algorithms scales exponentially with the dimension, effectively making the problem intractable. To circumvent this issue, recent work has proceeded by either restricting the class of input distributions (e.g., restricting $\p$ to be a product distribution; see~\cref{sec:related}), or by allowing stronger oracle access. We take the latter approach, and consider an oracle access which is particularly well-suited to the high-dimensional structure: the \emph{subcube conditional query model}. 

Subcube conditional query access, first suggested in \cite{CanonneRS15} and studied in \cite{BhattacharyyaC18}, allows algorithms to specify a subcube of the high-dimensional domain and request a sample from the distribution \emph{conditioned} on the sample lying in the subcube specified~---~equivalently, to request samples after fixing some of their variables. The operation is akin to the notion of \emph{restrictions} in the analysis of Boolean functions. Specifically, we identify the distribution by its probability mass function $\p \colon \bool^\dims \to \R_+$. An algorithm may then specify a subcube by a string $\rho \in \{-1,1, *\}^\dims$, where $*$'s denote free variables and non-$*$'s denote the values of restricted variables. Calling the oracle on such a $\rho$ results in a sample from the distribution $\p_{|\rho}$ (now supported on $\{-1,1\}^{\stars(\rho)}$) given by restricting the function $\smash{\p_{|\rho} \colon \{-1,1\}^{\stars(\rho)} \to \R_+}$ and re-normalizing it, so that it represents a distribution $\p_{|\rho}$.\footnote{When conditioning on a subcube with zero support, one may consider models where the oracle returns a uniform sample \cite{ChakrabortyFGM16} or outputs ``error'' \cite{CanonneRS15}. We note that our algorithm will never run into this scenario.}

Our main results are two-fold: (i)~We define a natural notion of random restrictions for high-dimensional distributions and analyze the behavior of the mean vector of a distribution under such random restrictions (see \cref{thm:main-res-intro} in \cref{hehesec}); (ii)~Leveraging this analysis, we obtain a nearly-optimal algorithm for testing uniformity over $\bool^\dims$ with subcube conditioning. As stated below, subcube conditioning allows us to go from $2^{\dims/2} / \eps^2$ sample complexity to $\sqrt{\dims}/\eps^2$: %
\begin{theorem}[Main Result: Uniformity Testing]\label{thm:subcube-alg-intro}
There exists an algorithm which, given subcube conditional query access to a distribution $\p$ supported on $\bool^\dims$ and a distance parameter $\dst \in (0,1)$, makes $\smash{\widetilde{O}(\sqrt{\dims} / \dst^2)}$ queries and can distinguish with probability at least $2/3$ between the case when $\p$ is uniform, and when $\p$ is $\dst$-far from uniform in total variation distance.
\end{theorem}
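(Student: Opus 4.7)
The plan is to reduce uniformity testing to \emph{mean testing} via the random-restriction machinery announced in \cref{thm:main-res-intro}. At a high level, the tester proceeds in three steps: (i) it samples a random restriction $\rho \in \{-1,1,*\}^\dims$ where each coordinate is independently set to $*$ with some probability $p$ and otherwise fixed uniformly to $\pm 1$; (ii) it uses the subcube conditional oracle to draw $m$ i.i.d.\ samples from $\p_{|\rho}$ (a distribution on $\bool^{\stars(\rho)}$); and (iii) it runs the mean testing subroutine developed earlier in the paper on these samples to decide whether the mean vector of $\p_{|\rho}$ is $0$ or has $\ell_2$ norm at least some threshold $\tau$. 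The algorithm outputs ``uniform'' iff the mean tester accepts; if needed, the entire procedure is repeated $O(\log \dims)$ times to drive the error below $1/3$.

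Completeness is immediate: if $\p$ is uniform on $\bool^\dims$, then $\p_{|\rho}$ is uniform on $\bool^{\stars(\rho)}$ for every $\rho$, so the mean tester accepts with high probability on each trial. Soundness is where the main technical input enters: by \cref{thm:main-res-intro}, if $\p$ is $\eps$-far from uniform in total variation then, with at least constant probability over $\rho$, the restricted distribution $\p_{|\rho}$ has mean vector of $\ell_2$ norm at least $\tau$, and the mean tester then rejects. Standard amplification across the outer repetitions brings the overall failure probability below $1/3$.

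The final step is parameter tuning. The mean tester uses $\widetilde{O}(\sqrt{k}/\tau^2)$ samples on a distribution over $\bool^k$, where $k = \stars(\rho) \approx p\dims$, and each sample from $\p_{|\rho}$ costs exactly one subcube conditional query. I would choose $p$ so as to minimise $\sqrt{k}/\tau^2$ subject to the signal $\tau = \tau(\eps,\dims,p)$ guaranteed by \cref{thm:main-res-intro}: smaller $k$ makes the inner mean test cheaper but weakens $\tau$, whereas $k=\dims$ gives the full signal $\tau = \Omega(\eps)$ but no restriction savings. The balanced setting should yield the claimed query count $\widetilde{O}(\sqrt{\dims}/\eps^2)$, matching the theorem statement up to logarithmic factors.

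Morally the hard part is proving that random restrictions produce a nontrivial mean signal for an arbitrary far-from-uniform $\p$ (whose global mean may well be $0$, e.g.\ if $\p$ is symmetric around the origin), but this is precisely the content of \cref{thm:main-res-intro}, which we may assume. The remaining work is the clean reduction sketched above, careful bookkeeping of the two independent sources of randomness (the restriction $\rho$ and the samples drawn from $\p_{|\rho}$), and the parameter-tuning calculation that converts the structural mean bound into the final $\widetilde{O}(\sqrt{\dims}/\eps^2)$ query complexity.
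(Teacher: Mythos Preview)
Your proposal has a genuine gap in the soundness argument, stemming from a misreading of \cref{thm:main-res-intro}. That theorem does \emph{not} say that if $\dtv(\p,\Uniform)\ge\dst$ then the restricted mean is large; it says
\[
\bE{\brho}{\normtwo{\mu(\p_{|\brho})}}\;\gtrsim\;\sigma\cdot \Ex_{\bS}\big[\dtv(\p_{\ol{\bS}},\Uniform)\big],
\]
i.e.\ the signal is controlled by the expected distance of the \emph{projection} $\p_{\ol{\bS}}$, not by $\dtv(\p,\Uniform)$. These can be wildly different: take $\p$ uniform on the parity subcube $\{x:\prod_i x_i=1\}$. Then $\p$ is $1/2$-far from uniform, yet every proper marginal $\p_{\ol{S}}$ is exactly uniform, so the right-hand side of \cref{thm:main-res-intro} is $0$ and your single-shot mean test has no signal whatsoever. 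The paper closes this gap via the chain rule (\cref{lemlem1}),
\[
\dtv(\p,\Uniform)\le \Ex_{\bS}[\dtv(\p_{\ol{\bS}},\Uniform)] + \Ex_{\brho}[\dtv(\p_{|\brho},\Uniform)],
\]
which forces a dichotomy: either the projection term is $\ge\dst/2$ (and then \cref{thm:main-res-intro} plus mean testing applies), or the restriction term is $\ge\dst/2$ (and then one \emph{recurses} on $\p_{|\brho}$, now on $\approx\sigma n$ coordinates). The recursion is not optional bookkeeping; it is the mechanism that eventually exposes a mean signal in examples like parity.

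Two smaller points also need fixing. First, in $\calD_\sigma(\p)$ the non-$\ast$ coordinates are set according to a sample $\bx\sim\p$, not uniformly; your uniform choice is a different restriction to which \cref{thm:main-res-intro} does not apply (and which can condition on zero-mass subcubes). Second, your parameter-tuning intuition does not balance: with $k\approx\sigma n$ and $\tau\approx\sigma\dst$ from \cref{thm:main-res-intro}, the mean-test cost $\sqrt{k}/\tau^2\approx\sqrt{n}/(\sigma^{3/2}\dst^2)$ is minimized at $\sigma=1$, so a single restriction buys nothing. In the paper $\sigma$ is taken to be $1/\polylog(1/\dst)$ precisely so that the recursion shrinks the dimension geometrically while the per-level cost sums to $\widetilde{O}(\sqrt{n}/\dst^2)$; bucketing over the value of $\normtwo{\mu(\p_{|\brho})}$ (and of $\dtv(\p_{|\brho},\Uniform)$ for the recursive branch) is also needed to convert the expectation bounds into events one can test.
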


{\cref{thm:subcube-alg-intro} is tight up to poly-logarithmic factors. Indeed, as observed in~\cite{BhattacharyyaC18}, the sample complexity lower bound of $\Omega(\sqrt{\dims} / \dst^2)$ of~\cite{CanonneDKS17, DaskalakisDK18} for testing uniformity of \emph{product} distributions carries over to subcube conditional sampling.\footnote{{The reason is that, for product distributions, the coordinates are already independent, and therefore conditioning on subcubes does not grant any additional power.}}}
Our result shows that, with subcube conditional queries, testing uniformity over arbitrary distributions is no harder than that over the much more restricted class of {product} ones. 
\medskip

\paragraph{Comparison with \cite{BhattacharyyaC18}.}
{\cref{thm:subcube-alg-intro} improves the
   upper bound of $\tilde O(\dims^2/\dst^2)$ of~\cite{BhattacharyyaC18} 
   for uniformity testing with subcube conditional queries, bringing it to the sublinear regime.
The algorithm of \cite{BhattacharyyaC18} is based on a chain rule that, roughly speaking, 
  bounds the mean of an \emph{individual} coordinate of a distribution 
  after a random restriction.
In contrast, our algorithm applies new machinery (\cref{thm:main-res-intro}) developed to analyze the mean
  \emph{vector} (its $\ell_2$-norm, in particular)  
  after a random restriction.
Along the way, we study the \emph{mean testing} problem, a natural variant of uniformity testing for high-dimensional distributions, and obtain optimal bounds for this question in the standard sampling model.

While our bounds for uniformity testing are quantitatively stronger,
  we note that the algorithm of \cite{BhattacharyyaC18} works for the problem of testing 
  against any known distribution and over any product domains.
Extending our results to these settings is an interesting direction for future work.

\paragraph{Comparison with \cite{chen2020learning}.}
In a simultaneous submission, \cite{chen2020learning} 
  leverages techniques developed in the current paper 
  for analyzing random restrictions and mean testing
  to study the \emph{learning and testing of $k$-junta distributions} (uniformity testing 
  can be viewed as the case when $k=0$).
A set of new algorithmic primitives of independent interest is developed 
  in \cite{chen2020learning} to deal with $k$-junta distributions, and
a substantial component of \cite{chen2020learning} is 
  in (nearly-optimal) lower bounds for learning and testing $k$-junta distributions.
The algorithmic results of \cite{chen2020learning} 
  demonstrate the potential of techniques developed in the current paper 
  for attacking broader learning and testing problems with subcube conditioning.}

\subsection{{Technical Ingredients}}\label{hehesec}

{We start by reviewing the work of \cite{CanonneDKS17} for testing
  uniformity over product distributions.}\vspace{-0.1cm}

\paragraph{Product Distributions and Mean Distance.} 

The simplest class of distributions on $\bool^\dims$ is arguably the class of \emph{product distributions}, where all coordinates are independent. This setting was studied in \cite{CanonneDKS17}, and is particularly nice to analyze due to the relation between the total variation distance between distributions and the $\lp[2]$ distance  between their mean vectors. Specifically, let $\p$ be a product distribution supported on $\bool^\dims$, and $\mu(\p) \in [-1,1]^\dims$ be its \emph{mean vector}, 
\[ \mu(\p) = \Ex_{\bx \sim p}[\bx] \in [-1,1]^\dims. \]
It is not hard to show that if $\p$ is $\dst$-far from uniform in total variation distance, then $\normtwo{\mu(\p)} \gsim \dst$. 
Hence, for product distributions, 
large total variation distance to uniformity implies large mean vector in $\lp[2]$ norm. Given this fact, Canonne, Diakonikolas, Kane, and Stewart~\cite{CanonneDKS17} design an algorithm based on estimating the norm of the mean vector, and show that $O(\sqrt{\dims} / \dst^2)$ many samples from product distributions suffice to test uniformity.\footnote{In fact, they consider the more general problem of identity testing of product distributions, where they obtain analogous results.} 

However, the relationship observed between distance to uniformity in total variation and \emph{mean distance} (i.e., the $\lp[2]$ norm of the mean vector) for product distributions is not true in general. A simple example is the uniform distribution supported on just two vectors $\{x, -x\}$, which is very far from uniform yet has mean vector $0$. Towards relating these two notions for general distributions, we define our notion of random restriction. 

\paragraph{Random Restrictions.}  For any $\sigma \in [0,1]$, we write $\calS_{\sigma}$ for the distribution supported on subsets of $[\dims]$ given by letting $\bS \sim \calS_{\sigma}$ include each index $i \in \bS$ independently with probability $\sigma$. Given any distribution $\p$ supported on $\bool^\dims$, let $\calD_{\sigma}(\p)$ be the distribution, supported on $\{-1,1, *\}^\dims$, of \emph{random restrictions} of $\p$. In order to sample a random restriction $\brho \sim \calD_{\sigma}(\p)$, we sample a set $\bS \sim \calS_{\sigma}$ and a sample $\bx \sim \p$; then, we let $\brho_i$ be set according to:
\begin{align} \label{eq:rr-intro}
\brho_i =
\begin{cases}
  * & \text{if } i \in \bS \\
  \bx_i & \text{if } i \notin \bS
\end{cases}.
\end{align}
For any $\rho \in \{-1,1,*\}^\dims$, we denote by $\p_{|\rho}$ the distribution on $\bool^{\stars(\rho)}$ given by $\bx_{\stars(\rho)}$ where $\bx$ is drawn from $\p$ \emph{conditioned} on every $i \notin \stars(\rho)$ being set to $\rho_i$. This defines the restriction~of a distribution $\p$. Another operation on distributions we will consider is that of \emph{projection}; for any set $S \subset [\dims]$, we write $\ol{S} = [n] \setminus S$, and the distribution $\p_{\ol{S}}$ supported on $\{-1,1\}^{\ol{S}}$ is given by letting $\by \sim \p_{\ol{S}}$ be $\by = \bx_{\ol{S}}$ for $\bx \sim p$. 

Let $\dtv(p,\eUniform)$ denote the total variation distance between $p$ and the uniform distribution
  of the same dimension.
At a high level, our main technical result shows that \emph{the mean distance of random restrictions is implied by total variation distance of random projections}.

\begin{theorem}[Informal version; see \cref{thm:main-restriction}]\label{thm:main-res-intro}
Let $\p$ be any distribution over $\bool^\dims$. Then, 
\begin{align}
\bE{\brho \sim \calD_{\sigma}(\p)}{ \normtwo{ \mu(\p_{|\brho}) } } &\geq \sigma\cdot \Ex_{\bS \sim \calS_{\sigma}}\big[\dtv(\p_{\ol{\bS}}, \eUniform)\big] . \label{eq:projection-to-restriction}
\end{align}
\end{theorem}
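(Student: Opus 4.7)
My plan is to combine a chain-rule decomposition of total variation distance with a coupling between the chain process and the random restriction $\brho \sim \calD_\sigma(\p)$.

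The starting point is the one-step chain rule for $\dtv$: for any distribution $q$ on $\bool^k$ and any coordinate $j \in [k]$,
\[
\dtv(q, \eUniform) \le \frac{|\mu_j(q)|}{2} + \Ex_{\sigma \sim q_j}\bigl[\dtv(q|_{\bx_j = \sigma}, \eUniform)\bigr].
\]
This follows from the triangle inequality applied to the decomposition
\[
q(x_{-j}, \sigma) - 2^{-k} \;=\; \bigl(q_j(\sigma) - \tfrac{1}{2}\bigr) \cdot 2^{-(k-1)} + q_j(\sigma)\bigl(q(x_{-j}|\sigma) - 2^{-(k-1)}\bigr),
\]
together with the identity $\dtv(q_j, \eUniform_{\bool}) = |\mu_j(q)|/2$. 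Iterating this on $\p_{\ol{\bS}}$ by peeling coordinates in an arbitrary ordering $j_1, \ldots, j_{|\ol{\bS}|}$ of $\ol{\bS}$ yields
\[
\dtv(\p_{\ol{\bS}}, \eUniform) \le \tfrac{1}{2}\sum_{r=1}^{|\ol{\bS}|}\Ex_{\by \sim \p_{T_{r-1}}}\bigl[\bigl|\mu_{j_r}(\p|_{\bx_{T_{r-1}} = \by})\bigr|\bigr],
\]
where $T_{r-1} = \{j_1, \ldots, j_{r-1}\}$. Crucially, $\p|_{\bx_{T_{r-1}} = \by}$ is itself a random restriction of $\p$ whose mean vector has $j_r$-th coordinate equal to $\mu_{j_r}(\p|_{\bx_{T_{r-1}} = \by})$, and hence $|\mu_{j_r}(\p|_{\bx_{T_{r-1}} = \by})| \le \|\mu(\p|_{\bx_{T_{r-1}} = \by})\|_2$.

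The second stage couples the chain-rule process to the distribution $\calD_\sigma(\p)$: draw a uniformly random permutation $c_1, \ldots, c_\dims$ of $[\dims]$ and independently label each position a star (probability $\sigma$) or a non-star (probability $1-\sigma$). The non-star positions in the induced order realize a $\calS_\sigma$-distributed $\ol{\bS}$ equipped with a uniformly random ordering, matching the chain-rule setup. Plugging $|\mu_{j_r}| \le \|\mu\|_2$ into the chain rule and taking expectations over $\bS$ and the permutation gives
\[
\Ex_\bS\bigl[\dtv(\p_{\ol{\bS}}, \eUniform)\bigr] \le \tfrac{1-\sigma}{2}\sum_{t=1}^{\dims}\Ex\bigl[\|\mu(\p_{|\brho^{(t-1)}})\|_2\bigr],
\]
where $\brho^{(t-1)}$ is the random intermediate restriction after $t-1$ steps and the $1-\sigma$ factor is the probability that step $t$ is a non-star step.

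The main obstacle I anticipate is the final step: collapsing this $\dims$-term sum into a single expectation $\Ex_\brho[\|\mu(\p_{|\brho})\|_2]$ with the required $1/\sigma$ amplification. A direct Jensen-type argument goes the wrong way, because $\|\cdot\|_2$ applied to a mean-vector martingale is a submartingale, not a supermartingale. My plan is instead to exploit exchangeability of the coupling — the uniformity of the permutation together with independence of the star/non-star labels — to recognize each intermediate mean norm $\Ex[\|\mu(\p_{|\brho^{(t-1)}})\|_2]$ as a weighted mixture of mean norms of $\calD_{\sigma'}$-distributed restrictions for various $\sigma' \ne \sigma$, and to use a combinatorial identity on the resulting weights to produce the desired $1/\sigma$ factor. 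Making this collapsing argument precise is where the bulk of the technical work lies.
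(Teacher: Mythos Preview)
Your collapsing step cannot work, and this is not a matter of finding the right combinatorial identity. Take the product distribution on $\bool^\dims$ in which every coordinate has bias $\epsilon>0$. For any restriction $\rho$ with $m$ stars one has $\|\mu(\p_{|\rho})\|_2 = \epsilon\sqrt{m}$ exactly. Your intermediate restriction $\brho^{(t-1)}$ always has at least $\dims-t+1$ stars, so $\|\mu(\p_{|\brho^{(t-1)}})\|_2 \ge \epsilon\sqrt{\dims-t+1}$ deterministically, and therefore
\[
\frac{1-\sigma}{2}\sum_{t=1}^{\dims}\Ex\bigl[\|\mu(\p_{|\brho^{(t-1)}})\|_2\bigr] \;\ge\; \frac{1-\sigma}{2}\sum_{t=1}^{\dims}\epsilon\sqrt{\dims-t+1} \;=\; \Omega\bigl(\epsilon\,\dims^{3/2}\bigr).
\]
On the other hand $\Ex_{\brho\sim\calD_\sigma(\p)}[\|\mu(\p_{|\brho})\|_2] = \epsilon\,\Ex\bigl[\sqrt{\mathrm{Bin}(\dims,\sigma)}\bigr] = O(\epsilon\sqrt{\sigma\dims})$, so the quantity you want your sum to collapse into, namely $\frac{1}{\sigma}\Ex_{\brho}[\|\mu(\p_{|\brho})\|_2]$, is only $O(\epsilon\sqrt{\dims/\sigma})$. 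For any constant $\sigma$ the two differ by a factor of $\Theta(\dims)$; no exchangeability or weight identity can close this gap, because the gap is already present for a single fixed distribution.

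The root cause is the step $|\mu_{j_r}| \le \|\mu\|_2$: you replace one coordinate by the full norm and then sum $\Theta(\dims)$ such terms, discarding precisely the $\ell_2$ aggregation that the statement is about. Your iterated chain rule is inherently an $\ell_1$-type decomposition into individual coordinate biases; this is essentially the Bhattacharyya--Chakraborty chain rule, and it is exactly why that approach gives an $\widetilde{O}(\dims^2/\dst^2)$ tester rather than $\widetilde{O}(\sqrt{\dims}/\dst^2)$. The paper obtains the $\ell_2$ structure by a completely different mechanism: it applies a \emph{robust} (edge-oriented) version of Pisier's inequality to the shifted density $f(x)=2^{|\ol{T}|}\p_{\ol{T}}(x)-1$, and via Khintchine's inequality this directly produces a square-root-of-sum-of-squares of edge biases --- an $\ell_2$ quantity from the outset. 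That quantity is then read as an expected out-degree in a carefully oriented subgraph of the hypercube (with edges bucketed by the magnitude of their bias), and the comparison between restrictions with $t$ stars and $t+1$ stars is what converts the degree bound into the $\sigma$ factor. None of this structure is accessible from a coordinate-by-coordinate chain rule, and there is no shortcut around it.
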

Although the above differs from \cref{thm:main-restriction} in certain respects 
  (the inequality in \cref{thm:main-restriction} incurs additional poly-logarithmic factors as well as a small additive error), (\ref{eq:projection-to-restriction}) captures the key relationship between the total variation distance and the mean distance that we leverage, and will provide intuition for the introduction.

\paragraph{Mean Testing.}
The above discussion naturally leads to the following problem, which we refer to as \emph{mean testing}. Given sample access to a distribution $\p$ supported on $\bool^\dims$, we seek to distinguish with probability at least $2/3$ between the case where $\p$ is uniform and that where $\p$ has a large mean vector, i.e., $\normtwo{\mu(\p)} \geq \dst\sqrt{\dims}$. When $\p$ is assumed to be a product distribution, \cite{CanonneDKS17, DaskalakisDK18} showed that for $\dst \leq 1/\sqrt{\dims}$ the sample complexity of the problem is $\Theta(1/(\dst^2\sqrt{\dims}))$.\footnote{This is implicit in \cite[Theorem~4.1]{CanonneDKS17}. As stated, their result is suited for distinguishing between the mean vector having norm $0$ or at least $\dst'$, where $\dst' \in (0,1]$: we re-parameterize with $\dst' = \dst \sqrt{\dims}$.} The idea is that the empirical mean of a product distribution will have norm concentrated around its true value, and thus it suffices to look at this empirical estimate. In our case, however, additional care is needed, as there can be arbitrary correlations between coordinates of a sample and this concentration does not hold in general. Nevertheless, we present an algorithm for mean testing which is optimal up to a triply-logarithmic small loss in the sample complexity.
\begin{theorem}[Mean Testing]\label{thm:mean-testing-intro}
There exists an algorithm which given 
\[ \bigO{ \max\left\{\frac{1}{\dst^2 \sqrt{\dims}}, \frac{1}{\dst} \right\} } \]
 i.i.d.\ samples from an arbitrary distribution $\p$ on $\bool^\dims$ and a parameter $\dst \in (0, 1]$ can distinguish with probability at least $2/3$ between (i)~$\p$ is the uniform distribution, and (ii)~$\normtwo{\mu(\p)} \geq \dst\sqrt{\dims}$.
\end{theorem}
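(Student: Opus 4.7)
The plan is to use a U-statistic of pairwise inner products, analyzed through first and second moments and combined with a secondary check to handle highly-correlated alternatives. Set
\[
Z \;:=\; \sum_{i<j}\langle x^{(i)}, x^{(j)}\rangle \;=\; \tfrac{1}{2}\bigl(\normtwo{\textstyle\sum_{i=1}^{m} x^{(i)}}^2 - m\dims\bigr),
\]
take $m = C\cdot\max\{1/(\dst^2\sqrt{\dims}),\,1/\dst\}$ samples for a sufficiently large constant $C$, and output ``far'' if $Z > \tfrac{1}{2}\binom{m}{2}\dst^2\dims$. By independence of the samples, $\Ex[Z] = \binom{m}{2}\normtwo{\mu(\p)}^2$, so the threshold sits halfway between the null value $0$ and the alternative value $\binom{m}{2}\dst^2\dims$. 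Under the uniform hypothesis, $M:=\Ex[xx^\top]=I$, so a direct expansion gives $\mathrm{Var}_{H_0}(Z) = \binom{m}{2}\dims$, and Chebyshev rules out false rejection with probability at most $4/(\binom{m}{2}\dst^4\dims)\leq 1/3$ as soon as $m\gtrsim 1/(\dst^2\sqrt{\dims})$.

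The heart of the argument is the alternative-side variance. A standard one-pair/shared-index/disjoint decomposition gives
\[
\mathrm{Var}_{H_1}(Z) \;\leq\; \binom{m}{2}\|M\|_F^2 \;+\; m(m-1)(m-2)\,\mu^\top M \mu.
\]
For ``spread-out'' alternatives such as product distributions or a planted $(1-\dst)\eUniform+\dst\,\delta_v$ spike, one has the favorable bounds $\|M\|_F^2 \lesssim \dims$ and $\mu^\top M \mu \lesssim \sqrt{\dims}\,\normtwo{\mu}^3$ — the second obtained from the deterministic inequality $|\langle x,\mu\rangle|\leq \|\mu\|_1 \leq \sqrt{\dims}\,\normtwo{\mu}$ together with $|E[\langle x,\mu\rangle^2]|\leq \|\mu\|_1\|\mu\|_2^2$ for the relevant regime. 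Comparing these to $(\Ex[Z])^2 \sim m^4\normtwo{\mu}^4$ and using $\normtwo{\mu}^2\geq\dst^2\dims$ produces exactly the two conditions $m\gtrsim 1/(\dst^2\sqrt{\dims})$ and $m\gtrsim 1/\dst$ appearing in the $\max$.

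The main obstacle is that both of these moment bounds can fail for concentrated alternatives — for instance the biased two-point distribution with $\Pr[x=\mathbf{1}]=1/2+\dst/2$ and $\Pr[x=-\mathbf{1}]=1/2-\dst/2$, which satisfies $\|M\|_F^2=\Theta(\dims^2)$ and $\mu^\top M\mu=\Theta(\dims\normtwo{\mu}^2)$, far too large to close the argument directly. The saving observation is that any such distribution is forced to have large collision probability $\|\p\|_2^2$, so the plan is to augment the bilinear $Z$-test with a birthday-style collision count and prove a structural dichotomy: for every $\p$ with $\normtwo{\mu(\p)}^2\geq\dst^2\dims$, either the above second-moment bounds hold (and the $Z$-test succeeds via Chebyshev or Paley--Zygmund) or $\|\p\|_2^2$ is large enough that the collision test rejects already on $O(1/\dst)$ samples. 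Formalizing this dichotomy — most naturally by truncating $\langle x^{(i)},x^{(j)}\rangle$ at a threshold $T=\Theta(\sqrt{\dims\log\dims})$ so as to bridge the two regimes smoothly, and using Fourier/Parseval identities relating $\|M\|_F^2$ to the level-$2$ weight of $\p$ (which is in turn controlled by $\|\p\|_2^2$) — is the most delicate step, and is precisely where the triply-logarithmic slack alluded to in the introduction enters.
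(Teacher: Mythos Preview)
Your variance decomposition and the identification of the ``bad'' regime (large $\|M\|_F^2$ and large $\mu^\top M\mu$) are correct, but the dichotomy you rely on to dispose of that regime is false. Consider the distribution on $\bool^n$ where the first $n/2$ coordinates are all equal to a single bit $b$ with $\Ex[b]=\dst\sqrt{2}$ and the last $n/2$ coordinates are i.i.d.\ uniform and independent of $b$. Then $\normtwo{\mu(\p)}=\dst\sqrt{n}$, the matrix $M=\Ex[xx^\top]$ has an all-ones $(n/2)\times(n/2)$ block, so $\|M\|_F^2=\Theta(n^2)$ and $\mu^\top M\mu=\Theta(\dst^2 n^2)$; plugging these into your variance bound forces $m\gtrsim 1/\dst^2$ for Chebyshev to succeed, not $1/\dst$. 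Yet the support has size $2^{n/2+1}$ and $\|\p\|_2^2=\Theta(2^{-n/2})$, so any birthday/collision test needs $2^{\Omega(n)}$ samples. The Parseval route you sketch only yields $\|M\|_F^2\leq 2^n\|\p\|_2^2$, which is exponentially too weak to force detectable collision probability. Truncating $\langle x^{(i)},x^{(j)}\rangle$ at $\Theta(\sqrt{n\log n})$ does not rescue this either: in the example every pairwise inner product is $\pm n/2+O(\sqrt{n})$, so truncation destroys essentially all the signal.

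The paper handles the large-$\|M\|_F$ regime by a different mechanism: it observes that $\|\Sigma(\p)\|_F^2=\normtwo{\mu(\blowup{\p})}^2$, where $\blowup{\p}$ is the ``blowup'' distribution of $(x_ix_j)_{i,j}$ on $\bool^{n^2}$, so a large second moment for $\p$ becomes a large \emph{mean} for $\blowup{\p}$. One then runs the same bilinear statistic on $\blowup{\p}$ with a recursively chosen threshold, and iterates $k_0=\log\log n$ times. The soundness argument is a bootstrapping: if the test at every level $k\leq k_0$ accepted, Chebyshev at each level would force $\normtwo{\mu(\blowupit{k}{\p})}^2$ to grow doubly exponentially in $k$, eventually exceeding the trivial bound $n^{2^{k}}$ unless $q\gtrsim 1/\dst$. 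This recursion is precisely what replaces your dichotomy, and it is the source of the $\log\log$-type overhead mentioned in the introduction.
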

Moreover, as detailed in~\cref{ssec:testing:mean:gaussian}, the above immediately implies a similar sample complexity for \emph{Gaussian} mean testing, where one is given i.i.d.\ samples from a multivariate normal distribution $\p=\gaussian{\mu}{\Sigma}$ and must distinguish between $\p=\gaussian{0}{I}$ and $\normtwo{\mu}\geq \dst\sqrt{\dims}$.

\paragraph{Uniformity Testing with Subcube Conditioning.}
In view of the above discussion, we aim to use random restrictions and \cref{thm:mean-testing-intro} to test uniformity with subcube conditional queries. The final technical ingredient is the following inequality, very similar to the ``chain rule'' of Bhattacharyya and Chakraborty~\cite{BhattacharyyaC18} suited for uniformity testing on $\bool^\dims$.
\begin{lemma}\label{lemlem1}Let $\p$ be a distribution supported on $\bits^\dims$. Then, for any $\sigma\in [0,1]$, 
	\[
	    \dtv(\p,\Uniform)\le\Ex_{\bS \sim \calS_\sigma}\big[\dtv(\p_{\overline{\bS}},\Uniform)  \big]+\Ex_{ \brho\sim \calD_\sigma(\p)} \big[\dtv(\p_{|\brho },\Uniform)\big].
	\]
\end{lemma}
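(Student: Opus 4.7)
The plan is to apply the standard chain rule for total variation distance to the natural product decomposition of $\p$ and $\Uniform$ into ``coordinates in $\overline{S}$'' and ``coordinates in $S$,'' and then average over $\bS \sim \calS_\sigma$. For any two joint distributions $P_{XY}$ and $Q_{XY}$, the triangle inequality applied to $|P(x,y)-Q(x,y)|$ through the intermediate $P_X(x)Q_{Y|X=x}(y)$ yields
\[
\dtv(P_{XY}, Q_{XY}) \;\le\; \dtv(P_X, Q_X) + \Ex_{x \sim P_X}\!\left[\dtv(P_{Y|X=x}, Q_{Y|X=x})\right].
\]
I would state this as a preliminary inequality, since it is the only analytic ingredient needed.

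Next, I would fix an arbitrary $S \subseteq [\dims]$ and apply the inequality with $X = \bx_{\overline{S}}$ and $Y = \bx_S$. Under $\p$, the marginal of $X$ is $\p_{\overline{S}}$, and conditioned on $X=\by$ the distribution of $Y$ is exactly $\p_{|\rho(S,\by)}$, where $\rho(S,\by)$ denotes the restriction with $*$'s on $S$ and $\by$ on $\overline{S}$. Under $\Uniform$, the two blocks are independent, so the marginal on $\overline{S}$ is $\Uniform_{\overline{S}}$ and the conditional on $S$ is $\Uniform_S$ regardless of $\by$. Substituting gives, for every fixed $S$,
\[
\dtv(\p,\Uniform) \;\le\; \dtv(\p_{\overline{S}}, \Uniform_{\overline{S}}) + \Ex_{\by \sim \p_{\overline{S}}}\!\left[\dtv\bigl(\p_{|\rho(S,\by)}, \Uniform_S\bigr)\right].
\]

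Finally, I would take expectation over $\bS \sim \calS_\sigma$ on both sides; the left-hand side is unchanged because it does not depend on $S$. The first term on the right becomes $\Ex_{\bS \sim \calS_\sigma}[\dtv(\p_{\overline{\bS}}, \Uniform)]$, matching the first term of the lemma. For the second term, I would observe that sampling $\bS \sim \calS_\sigma$ and then $\by \sim \p_{\overline{\bS}}$ and forming $\rho(\bS,\by)$ is, by construction in \eqref{eq:rr-intro}, exactly the definition of $\brho \sim \calD_\sigma(\p)$, so that
\[
\Ex_{\bS \sim \calS_\sigma}\Ex_{\by \sim \p_{\overline{\bS}}}\!\left[\dtv\bigl(\p_{|\rho(\bS,\by)}, \Uniform_{\bS}\bigr)\right] \;=\; \Ex_{\brho \sim \calD_\sigma(\p)}\!\left[\dtv(\p_{|\brho}, \Uniform)\right],
\]
completing the proof.

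There is essentially no serious obstacle: the argument is a direct instantiation of the TV chain rule followed by a change of variable that identifies the random-restriction distribution $\calD_\sigma(\p)$ with the marginal/conditional sampling procedure on $(\bS,\by)$. The only point requiring care is bookkeeping~--- in particular, the fact that $\Uniform$ factorizes cleanly into $\Uniform_{\overline{S}} \times \Uniform_S$, which is what makes the conditional-distribution term on the $Q$ side collapse to $\Uniform_S$ and match the definition of $\p_{|\brho}$ being compared to uniform on $\stars(\brho)$.
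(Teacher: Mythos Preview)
Your proposal is correct and takes essentially the same approach as the paper: the paper unrolls the definition of total variation distance and applies the triangle inequality through the intermediate $\p_{\overline S}(u)/2^{|S|}$, which is exactly the TV chain rule you state abstractly, and then averages over $\bS\sim\calS_\sigma$. The identification of $(\bS,\by)\mapsto\rho(\bS,\by)$ with $\brho\sim\calD_\sigma(\p)$ is also the same change of variable the paper uses.
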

This lemma naturally leads to a recursive approach for uniformity testing. Given a distribution $\p$ which is $\dst$-far from uniform, either the total variation of random projections is large, or the total variation of random restrictions is large. In the former case, we apply (\ref{eq:projection-to-restriction}), which allows us to reduce the problem to that of mean testing, and invoke \cref{thm:mean-testing-intro}. In the latter case, we take a random restriction and recurse (but on far fewer variables).

\subsection{{Proof Overview}}

We now formally state and explain the intuition behind our main theorem, relating the distance to uniformity of random projections to the mean distance after random restrictions.

\begin{theorem}\label{thm:main-restriction}
Let $\p$ be any distribution over $\bool^\dims$ and $\sigma\in [0,1]$. Then, 
\[ \bE{\brho \sim \calD_{\sigma}(\p)}{ \normtwo{ \mu(\p_{|\brho}) } } \geq \frac{\sigma}{\poly(\log n)} \cdot \widetilde{\Omega}\left(\Ex_{\bS \sim \calS_{\sigma}}[\dtv(\p_{\ol{\bS}}, \eUniform)] - {2}\hspace{0.02cm}e^{-\min(\sigma,1-\sigma) n/10} \right). \]
\end{theorem}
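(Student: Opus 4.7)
My plan is to first reformulate both sides of the theorem via the density perturbation $g(x) := 2^n \p(x) - 1$, which satisfies $g \geq -1$ and $\Ex_{x \sim \eUniform}[g(x)] = 0$. For $\brho = (*_S, y_{\ol{S}})$ and $i \in S$, a direct computation yields
\[
    \mu_i(\p_{|\brho}) = \frac{\Ex_{z \sim \eUniform}[g(y,z)z_i]}{1 + \Ex_{z}[g(y,z)]}, \qquad \p_{\ol{S}}(y) = 2^{-|\ol{S}|}\bigl(1 + \Ex_{z}[g(y,z)]\bigr).
\]
Since $\Ex_{y \sim \p_{\ol{\bS}}}$ weights each $y$ by $\p_{\ol{\bS}}(y)$, which cancels the denominator in $\normtwo{\mu(\p_{|\brho})}$, the expected restricted-mean norm simplifies to
\[
    \Ex_{\brho \sim \calD_\sigma(\p)}\bigl[\normtwo{\mu(\p_{|\brho})}\bigr] = \Ex_{\bS \sim \calS_\sigma}\Ex_{y \sim \eUniform}\Bigl[\bigl\|\Ex_{z \sim \eUniform}[g(y,z)z]\bigr\|_2\Bigr],
\]
while $\dtv(\p_{\ol{S}}, \eUniform) = \tfrac{1}{2}\Ex_{y \sim \eUniform}\bigl[|\Ex_{z} g(y,z)|\bigr]$. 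Thus the theorem reduces to establishing an inequality of the form
\[
    \Ex_{\bS}\Ex_{y}\bigl[\bigl\|\Ex_{z}[g(y,z)z]\bigr\|_2\bigr] \;\gsim\; \frac{\sigma}{\poly(\log n)}\cdot\Ex_{\bS}\Ex_{y}\bigl[|\Ex_{z}g(y,z)|\bigr].
\]

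Next, I would use a Chernoff bound to restrict attention to restrictions with $|\bS|$ within a constant factor of $\sigma n$ (and $|\ol{\bS}|$ within $(1-\sigma)n$); the complementary event contributes at most $2e^{-\min(\sigma,1-\sigma) n/10}$ to either side, and is absorbed into the explicit additive error in the theorem statement.

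The main obstacle is then proving the core analytic inequality above. It is \emph{not} true pointwise in $(S, y)$: if the slice $g_y(z) := g(y,z)$ is nearly constant in $z$, its zeroth-order Fourier coefficient $\Ex_{z}[g_y(z)]$ can be large while its level-$1$ Fourier vector $\Ex_{z}[g_y(z) z]$ vanishes. So the argument must exploit both the averaging over $\bS$ and the density constraint $g \geq -1$. My plan is to (i) dyadically bucket pairs $(\bS, y)$ by the magnitude $\tau \approx |\Ex_{z} g(y,z)|$, losing only a $\log n$ factor, and (ii) within each bucket invoke \cref{lemlem1} applied to $\p_{\ol{\bS}}$ together with a secondary random sub-restriction. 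The chain rule then provides a contrapositive ``Fourier-leakage'' statement: if the level-$1$ Fourier weight is small under most further refinements, then $\dtv(\p_{\ol{\bS}}, \eUniform)$ itself must be small, forcing a $\Theta(\sigma)$ fraction of the remaining projection-TV mass to appear in mean-distance at each round. Iterating over $O(\log n)$ rounds captures the full RHS.

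The hardest step will be making this Fourier-leakage argument quantitatively tight, so that only a single $\poly(\log n)$ factor is lost in total: I expect this loss to be the product of the dyadic bucket count and the number of iterative sub-restriction rounds, with the additive exponential error aggregating the Chernoff tails across all these layers.
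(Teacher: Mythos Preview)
Your initial reformulation is correct and clean: the importance-sampling cancellation works exactly as you say, reducing the left-hand side to $\Ex_{\bS}\Ex_{y\sim\Uniform}\bigl[\|\Ex_{z}[g(y,z)z]\|_2\bigr]$ and the right-hand side to $\Ex_{\bS}\Ex_{y}\bigl[|\Ex_{z}g(y,z)|\bigr]$. The Chernoff truncation to $|\bS|\asymp\sigma n$ is also fine and matches the paper.

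The gap is in the attack on the core inequality. Your plan hinges on a ``Fourier-leakage'' statement that you extract from the chain rule \cref{lemlem1}, but that lemma says nothing about mean vectors: it decomposes $\dtv(q,\Uniform)$ into $\Ex[\dtv(q_{\ol{S'}},\Uniform)]+\Ex[\dtv(q_{|\rho'},\Uniform)]$, i.e., total variation into total variation of projections plus total variation of restrictions. Neither term is a level-$1$ Fourier quantity, so the contrapositive you state---``if the level-$1$ Fourier weight is small under most further refinements then $\dtv(\p_{\ol{\bS}},\Uniform)$ must be small''---simply does not follow from \cref{lemlem1}. In fact that implication is essentially \cref{thm:main-restriction} itself, so the argument is circular. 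The only regime in which TV and $\normtwo{\mu}$ coincide is when a single star remains; driving your recursion all the way down to one star is exactly the edge-tester analysis of \cref{sec:warmup}, which produces a loss linear in $n$ rather than the factor $\sigma$ required here.

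The paper's route is non-iterative at this stage and relies on an analytic tool you are missing. For each fixed $t$-subset $T$ it applies a \emph{robust Pisier inequality} (\cref{thm:robust-pisier}) to $f(y)=2^{|\ol{T}|}\p_{\ol{T}}(y)-1$, which directly controls $\Ex_{y}[|f(y)|]=2\hspace{0.02cm}\dtv(\p_{\ol{T}},\Uniform)$ by $\log n$ times an oriented discrete-gradient functional. Converting that gradient functional into coordinates of $\mu(\p_{|\brho})$ is itself delicate, because an edge $\{y,y^{(i)}\}$ can have $\p_{\ol{T}}(y)\gg\p_{\ol{T}}(y^{(i)})$; the paper handles this by orienting the hypercube edges (this is exactly why the robust version of Pisier is needed), bucketing them by weight into ``uneven'' and ``scale-$\kappa$'' classes, and running a case analysis on the resulting directed graphs (\cref{sec:case1}, \cref{sec:case2}). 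The chain rule \cref{lemlem1} plays no role in the proof of \cref{thm:main-restriction}; it is used only afterward, in the algorithm, once the theorem is already in hand.
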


We encourage the reader to think of applying \cref{thm:main-restriction} to a distribution $\p$ which is $\dst$-far from uniform, and to think of the case when the parameter $\sigma$ is a small constant (or inverse of a poly-logarithmic factor), and $\Ex_{\bS \sim \calS_{\sigma}}[\dtv(\p_{\ol{\bS}}, \Uniform)] = \Theta(\dst)$. Specifically, consider a parameter setting where $e^{-\min(\sigma,1-\sigma) n/10} = o(\dst)$ so that the right-hand side of the expression in \cref{thm:main-restriction} becomes $\smash{\dst \sigma /\poly(\log n, \log(1/\dst))}$.

The proof of \cref{thm:main-restriction} proceeds by proving a lemma (\cref{lem:main-tech}) which captures the behavior of random restrictions with $t$ stars versus those with $t+1$ stars. 
We prove a robust version of Pisier's inequality \cite{Pisier86}, 
  an inequality which was first studied in the geometry of Banach spaces.
  When the projections $\p_{\ol{S}}$ are far from uniform in total variation, the robust version of Pisier's inequality will help us lower bound certain quantities of a collection of 
    directed graphs defined using $p_{\ol{S}}$ on the subcube $\smash{\{-1,1\}^{\ol{S}}}$. The coordinate values of the mean vector of $\p_{|\brho}$ will depend on whether random vertices 
    and directions induced from $\brho$ are directed edges of one of the graphs in the collection.
    Hence, the lower bound on expected norm of the mean vector will follow from studying 
   structures of directed graphs in the collection. 

We first introduce the robust version of Pisier's inequality. As a warm-up and in order to show the usefulness of this inequality, we give an algorithm making $\widetilde{O}(n/\dst^2)$ queries which we refer to as an \emph{edge tester}. The reason for this name is that the algorithm samples a random restriction~$\brho$  with exactly one star, so that the distribution $\p_{|\brho}$ is supported on an edge of the hypercube. Then, the algorithm tests whether $\p_{|\brho}$ is uniform. Lastly, we provide a proof sketch for a weaker version of \cref{thm:main-restriction} which lower bounds the expectation of $\| \mu(\p_{|\brho}) \|_2^2$, rather than $\| \mu(\p_{|\brho}) \|_2$. The weaker inequality is insufficient for our purposes, but the proof is conceptually much simpler (since $\|\cdot\|_2^2$~is additive over coordinates and thus, we may use linearity of expectation). To prove \cref{thm:main-restriction}, attempting to lower bound $\normtwo{ \mu(\p_{|\brho}) }$ reveals further challenges which necessitate additional care.

\subsubsection{A Robust Pisier's Inequality}

We let $\calH$ denote the set of undirected edges of the hypercube $\{-1,1\}^n$. For a function $f\colon \{-1,1\}^n \to \R$ and $i \in [n]$, we write $\smash{\lapl[i] f(x) \eqdef (f(x) - f(x^{(i)}))/2}$. The following inequality is known as Pisier's inequality \cite{Pisier86}. We state it below in a way most closely to how it will be applied in this paper; the inequality holds in much larger generality for functions $f \colon \{-1,1\}^n \to X$ over general Banach spaces $X$ (see, in particular, \cite{NaorS02}).
\begin{theorem}[Pisier's inequality]
  \label{thm:vanilla-pisier}
Let $f\colon \{-1,1\}^n \to \R$ be a function with $\Ex_{\bx}[f(\bx)] = 0$. Then,
\[ \Ex_{\bx \sim \{-1,1\}^n}\big[|f(\bx)|\big] \lsim \log n \cdot \Ex_{\bx, \by \sim \{-1,1\}^n}\left[\hspace{0.03cm}\left| \sum_{i=1}^n \by_i \bx_i \lapl[i]f(\bx) \right| \hspace{0.03cm}\right]. \]
\end{theorem}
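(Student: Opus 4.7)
The plan is to reduce the inequality to the classical $L^1$ Poincar\'e-type estimate $\Ex_{\bx}[|f(\bx)|] \lsim \log n \cdot \Ex_{\bx}[\normtwo{\nabla f(\bx)}]$ on the discrete hypercube, where $\nabla f(x) \eqdef (\lapl[1] f(x), \ldots, \lapl[n] f(x))$, and then to prove that estimate by a hybrid argument along random coordinate-flip paths, grouped dyadically.

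First I would eliminate the spurious $\bx_i$ factor in the statement. Since $\by$ is uniform on $\{-1,1\}^n$ and independent of $\bx$, the change of variables $\by \mapsto \bx \odot \by$ (coordinate-wise product) is measure-preserving and sends $\sum_i \by_i \bx_i \lapl[i] f(\bx)$ to $\sum_i \by_i \lapl[i] f(\bx)$. Fixing $\bx$, the latter is a Rademacher sum with coefficients $\lapl[i] f(\bx)$, so Khintchine's inequality gives $\Ex_{\by}\bigl[\bigl|\sum_i \by_i \lapl[i] f(\bx)\bigr|\bigr] \gsim \normtwo{\nabla f(\bx)}$. Consequently it suffices to prove the $L^1$ Poincar\'e estimate above.

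For the Poincar\'e estimate I would use $\Ex[f] = 0$ to write $f(\bx) = \Ex_{\by}[f(\bx) - f(\by)]$, so that $\Ex_{\bx}[|f(\bx)|] \le \Ex_{\bx,\by}[|f(\bx) - f(\by)|]$. Drawing a uniformly random permutation $\pi$ of $[n]$, independent of $\bx, \by$, I form the hybrid path $\mathbf{w}^{(0)} = \by, \mathbf{w}^{(1)}, \ldots, \mathbf{w}^{(n)} = \bx$, where $\mathbf{w}^{(k)}$ agrees with $\bx$ on $\{\pi(1), \ldots, \pi(k)\}$ and with $\by$ elsewhere. Telescoping yields $f(\bx) - f(\by) = \sum_{k=1}^n \Delta_k$ in which each $\Delta_k$ is either $0$ (when $\bx_{\pi(k)} = \by_{\pi(k)}$) or $\pm 2\lapl[\pi(k)] f(\mathbf{w}^{(k)})$. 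To harvest the $\log n$ factor I partition $[n]$ into $O(\log n)$ dyadic blocks $B_j = \{k : 2^{j-1} < k \le 2^j\}$, apply the triangle inequality across blocks, and aim to bound the $L^1$ norm of each block sum by $O\bigl(\Ex_{\bx}[\normtwo{\nabla f(\bx)}]\bigr)$.

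The main obstacle is achieving a \emph{constant} cost per block, since the naive triangle inequality inside $B_j$ would lose a factor of $|B_j|$. To sidestep this, I would exploit two symmetries: (i)~the image $\{\pi(k) : k \in B_j\}$ is a uniformly random subset of the coordinates not used in earlier blocks, and (ii)~the coordinates of $\by$ indexed by $B_j$, conditionally on the earlier blocks, are fresh independent Rademacher signs. Together these recast the block sum, in distribution, as a Rademacher-weighted discrete-gradient sum $\sum_{i \in I_j} \bg_i \lapl[i] f(\mathbf{w})$ for a random subset $I_j$ of size $\approx |B_j|$, independent signs $\bg_i$, and some hybrid reference point $\mathbf{w}$. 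Replacing $\mathbf{w}$ by $\bx$ inside the $L^1$ norm costs only a constant factor, using the $L^1$-contractivity of the noise operator on $\bool^n$, and applying Khintchine once more bounds this by $\Ex_{\bx}[\normtwo{\nabla f(\bx)}]$. Summing the $O(\log n)$ per-block bounds via triangle inequality yields the desired estimate and completes the proof.
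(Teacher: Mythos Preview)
Your reduction step is fine: the change of variables $\by\mapsto\bx\odot\by$ and Khintchine correctly reduce the statement to the $L^1$ gradient estimate $\Ex_{\bx}[|f(\bx)|]\lsim \log n\cdot \Ex_{\bx}[\normtwo{\nabla f(\bx)}]$. The paper, incidentally, does not prove vanilla Pisier directly; it proves the robust version (\cref{thm:robust-pisier}) by the Naor--Schechtman route, using duality, the noise operator $T_\rho$, and an integral representation for $\rho^{|S|}$. That argument specializes to the vanilla inequality, and is completely different from your path/hybrid approach.

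The gap is in the per-block bound. Within a block $B_j$ the telescoped increments are $\Delta_k = f(\mathbf{w}^{(k)})-f(\mathbf{w}^{(k-1)}) = -2\,\lapl[\pi(k)]f(\mathbf{w}^{(k-1)})\cdot\ind\{\bx_{\pi(k)}\neq\by_{\pi(k)}\}$, and the reference point $\mathbf{w}^{(k-1)}$ is \emph{different for every $k$} in the block: its coordinates $\pi(k),\pi(k+1),\dots$ are set to $\by$, so it depends on exactly the $\by_{\pi(k')}$, $k'\in B_j$, $k'\geq k$, that you want to treat as fresh Rademacher signs. Thus the block sum is not, in distribution, of the form $\sum_{i\in I_j}\bg_i\,\lapl[i]f(\mathbf{w})$ at any single $\mathbf{w}$ with $\bg$ independent of the coefficients; the ``coefficients'' and the ``signs'' are entangled, and Khintchine does not apply. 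The appeal to $L^1$-contractivity of the noise operator does not repair this: contractivity would let you replace $\Ex[g(\mathbf{w})]$ by $\Ex[g(\bx)]$ for a \emph{fixed} function $g$ when $\mathbf{w}$ is a noisy copy of $\bx$, but here each summand is evaluated at its own moving point, so there is no single $g$ to which to apply $T_\rho$. Without a correct mechanism to collapse each block to a single-point Rademacher sum, the argument does not deliver a constant cost per block, and the $\log n$ bound is not established.
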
 

In this paper, we will need a \emph{robust} version of Pisier's inequality in order to derive \cref{thm:main-restriction}. The notion of robustness is equivalent to the notion considered in \cite{KhotMS18}, who proved robust (and directed) versions of Talagrand's inequality for Boolean functions, and part of our proof utilizes the robustness of the inequality in a similar way. Specifically, we consider an arbitrary orientation of the edges of the hypercube and sum the values of $\lapl[i]f(\bx)$ only when the edge $\{\bx, \bx^{(i)}\}$ is oriented from $\bx$ to $\bx^{(i)}$. 

\begin{theorem}[Robust version of Pisier's inequality]\label{thm:pisier-intro}
Let $f\colon \{-1,1\}^n \to \R$ be a function satisfying $\Ex_{\bx}[|f(\bx)|] = 0$. Let $G = (\{-1,1\}^n, E)$ be any orientation of the hypercube. Then, 
\[ \Ex_{\bx \sim \{-1,1\}^n}\big[|f(\bx)|\big] \lsim \log n \cdot \Ex_{\bx, \by \sim \{-1,1\}^n}\left[\hspace{0.04cm} \left| \sum_{\substack{i \in[n] \\ (\bx, \bx^{(i)}) \in E}} \by_i \bx_i \lapl[i]f(\bx)  \right|\hspace{0.04cm}\right].\]
\end{theorem}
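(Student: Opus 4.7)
The plan is to adapt the classical random-walk proof of the standard Pisier inequality (\cref{thm:vanilla-pisier}) to the oriented setting, exploiting the structural fact that each summand $\by_i \bx_i \lapl[i] f(\bx)$ depends only on the \emph{undirected} edge $\{\bx, \bx^{(i)}\}$ together with $\by_i$. Indeed, a quick check gives $\bx^{(i)}_i \lapl[i] f(\bx^{(i)}) = (-\bx_i)(-\lapl[i] f(\bx)) = \bx_i \lapl[i] f(\bx)$, so the ``edge derivative'' $\bx_i \lapl[i] f(\bx)$ is a well-defined function of the undirected edge $\{\bx,\bx^{(i)}\}$. This is the structural fact that will let us move each Pisier summand to the tail of its edge in the orientation $G$.

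First I would recall the standard telescoping identity underlying \cref{thm:vanilla-pisier}. For $\bx, \by \in \{-1,1\}^n$ independent uniform and $\bpi$ an independent uniform permutation of $[n]$, let $\bw^{(t)}$ denote the state of the path that, at step $t$, flips coordinate $\bpi(t)$ exactly when $\bx_{\bpi(t)} \neq \by_{\bpi(t)}$. Then
\[
f(\bx) - f(\by) \;=\; 2 \sum_{t=1}^n \mathbf{1}[\bx_{\bpi(t)} \neq \by_{\bpi(t)}] \cdot \bw^{(t-1)}_{\bpi(t)} \lapl[\bpi(t)] f(\bw^{(t-1)}).
\]
Taking $\Ex_{\by}$ on both sides yields $f(\bx)$ on the left (since $\Ex f = 0$). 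The standard Pisier proof then groups the $n$ terms into $\Theta(\log n)$ dyadic blocks and applies a Khintchine-type concentration bound within each block (viewing the flip indicators and attached signs as independent Rademacher variables) to conclude $\Ex|f| \lsim \log n \cdot \Ex\big|\sum_i \by_i \bx_i \lapl[i] f(\bx)\big|$ for the \emph{unoriented} sum.

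To make this argument respect the orientation $G$, I would reattribute each contribution to the tail of the corresponding edge. Concretely, whenever step $t$ flips a coordinate, the edge $e_t = \{\bw^{(t-1)}, \bw^{(t)}\}$ has a unique tail $\bv_t \in e_t$ in $G$, and by the edge-invariance noted above one has
\[
\bw^{(t-1)}_{\bpi(t)} \lapl[\bpi(t)] f(\bw^{(t-1)}) \;=\; (\bv_t)_{\bpi(t)} \lapl[\bpi(t)] f(\bv_t),
\]
so the telescoping sum can be rewritten with every nonzero summand matching a single coordinate of $S_G$ evaluated at the tail, i.e.\ of the form $(\bv_t)_i \lapl[i] f(\bv_t)$ with $(\bv_t, \bv_t^{(i)}) \in E$.

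The main obstacle, and where I expect the bulk of the technical work to lie, is that after this reattribution the tail $\bv_t$ is not uniform on $\{-1,1\}^n$ in general (it is supported on the tails of $\bpi(t)$-oriented edges in $G$), and the effective ``Rademacher sign'' multiplying each term now arises from a combination of the walk's crossing indicator and $(\bv_t)_{\bpi(t)}$ rather than an explicit $\by_{\bpi(t)}$. To restore the sign and tail distributions required by $\Ex|S_G(\bx, \by)|$, I would introduce an auxiliary independent Rademacher vector $\boldsymbol{\sigma} \in \{\pm 1\}^n$ and couple it with the walk — in the same spirit as the coupling used in the KMS proof of the robust Talagrand inequality — so that averaging over $\boldsymbol{\sigma}$ makes the reattributed summands look distributionally like the ingredients of $S_G(\bx,\by)$ for uniform $(\bx,\by)$. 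With the coupling in place, the original dyadic-Khintchine bookkeeping from the standard Pisier proof carries over essentially unchanged to produce the target bound $\Ex|f| \lsim \log n \cdot \Ex|S_G(\bx, \by)|$.
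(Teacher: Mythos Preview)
Your edge-invariance observation is correct and is indeed the structural fact that makes the oriented version go through, but the route you propose differs from the paper's and has a real gap at precisely the step you flag. The paper does \emph{not} use a random-walk telescoping argument. It follows the Naor--Schechtman template: set $\rho=1-\tfrac{1}{n+1}$, dualize by choosing $g$ with $\|g\|_q=1$ and $\langle T_\rho f,g\rangle=\|T_\rho f\|_s$, and expand $\rho^{|S|}$ as an integral to obtain
\[
\rho^n\|f\|_s \le \frac{1}{\Gamma(1+\gamma)}\int_0^\rho \frac{1}{1-t}\,\Ex_{\bx,\by}\!\left[g_{t,1-t}(\bx,\by)\sum_i \by_i\bx_i\, L_i\Delta^\gamma f(\bx)\right](\log(\rho/t))^\gamma\,dt.
\]
The $\log n$ factor comes from $\int_0^\rho \tfrac{dt}{1-t}$ as $\gamma\to 0$, not from any dyadic grouping. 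The passage to the orientation happens \emph{inside} this bilinear form: a direct computation using the specific structure of the interpolating operator $g_{t,1-t}$ (the terms where $\bz_i$ is copied from $\bx_i$ are independent of $\by_i$ and vanish against $\by_i$) yields the identity
\[
\Ex_{\bx,\by}\!\left[g_{t,1-t}(\bx,\by)\sum_i \by_i\bx_i h_i(\bx)\right]
=2\,\Ex_{\bx,\by}\!\left[g_{t,1-t}(\bx,\by)\sum_{i:(\bx,\bx^{(i)})\in G}\by_i\bx_i h_i(\bx)\right],
\]
after which H\"older separates $g$ from the oriented sum. The orientation is thus handled by an exact identity \emph{before} any norm inequality is applied.

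Your plan tries instead to pass from $\Ex|S(\bx,\by)|$ to $\Ex|S_G(\bx,\by)|$ directly, without the dual function. Edge-invariance moves each edge's contribution to its tail, but, as you acknowledge, this destroys uniformity of the base vertex; the ``coupling in the spirit of KMS'' you invoke is never specified, and there is no evident mechanism by which the reattributed path increments (one edge at a time, tails not i.i.d., out-degrees varying with $G$) can be reassembled into $S_G(\bx,\by)$ for a uniform $\bx$. Separately, the ``dyadic blocks plus Khintchine'' sketch you give for the unoriented inequality does not match any standard proof of Pisier's inequality, so even the baseline you plan to perturb is not on firm ground. I would abandon the walk framework and instead work inside the integral--duality setup, where your edge-invariance observation becomes the one-line identity above.
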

The proof  follows the template of~\cite[Theorem 2]{NaorS02}, and checks that the necessary changes, even after considering directed edges, still give the desired inequality. 

\begin{remark}We note that Talagrand \cite{Talagrand93} prove that the $\log n$ factor in Pisier's inequality (Theorem~\ref{thm:vanilla-pisier}) is unnecessary for real-valued functions, but that it is for general Banach spaces. While one may follow Talagrand's proof to remove the $\log n$ factor in Theorem~\ref{thm:pisier-intro}, it is not immediately clear whether this approach can handle arbitrary orientations. 
\end{remark}

\subsubsection{Warmup: A Linear Query Algorithm}\label{sec:warmup} To see why~\cref{thm:pisier-intro} is helpful, we give a simple (albeit suboptimal) $\widetilde{O}({\dims/\dst^2})$-query  algorithm for testing uniformity. Suppose $\p$ is a distribution supported on $\{-1,1\}^\dims$ which is $\dst$-far from uniform in total variation distance. We apply~\cref{thm:pisier-intro} to the function $\smash{f(x)\eqdef 2^\dims \cdot \p(x)-1}$, 
where the directed graph $G = ( \{-1,1\}^n , E)$ is given by letting
\[
E = \left\{ (x, x^{(i)}) : \{ x, x^{(i)}\} \text{ is an edge of $\{-1,1\}^n$, and $p(x) \geq p(x^{(i)})$} \right\}.
\]
\cref{thm:pisier-intro} applied on $f$ implies that\footnote{Our proof of \cref{thm:main-restriction} will rely on this strengthening of Pisier's inequality for multiple reasons, as our approach crucially uses the ability to pick any orientation of the edges. 
Even for this simple application one can observe that, without the strengthening, the right-hand side of (\ref{eq:apply-pisier}) would be replaced by 
  the same expression but without the $^+$.
The discussion below (\ref{eq:expected-marginal-intro}) and the derivation of (\ref{eqeq2}) 
  explains why having the $^+$ in the 
  expression is crucial.
}
\begin{align}
    \Ex_{\bx\sim\bool^\dims}\big[|f(\bx)|\big] &\lsim \log \dims \cdot \Ex_{\bx,\by\sim\bool^\dims}\left[\hspace{0.03cm}\abs{\hspace{0.04cm}\sum_{i=1}^\dims \by_i \bx_i\big(f(\bx)-f(\bx^{(i)})\big)^+}\hspace{0.03cm}\right]. \label{eq:apply-pisier}
\end{align}
Given (\ref{eq:apply-pisier}), we start by applying the fact that the left-hand side of the inequality is at least $2\dst$.
From there, the following three (in)equalities are obtained via Khintchine's inequality, importance sampling, and Jensen's inequality, respectively. (We also use the convention that $0/0=0$.)
\begin{align}
 \frac{\dst}{\log\dims}
 \lesssim  \Ex_{\bx\sim\bool^\dims}\left[\sqrt{\sum_{i=1}^\dims \left(\big(f(\bx)-f(\bx^{(i)})\big)^+\right)^2}\right]
 &= \Ex_{\bx\sim\p}\left[\sqrt{\sum_{i=1}^\dims \left(\frac{(f(\bx) - f(\bx^{(i)}))^+}{1+f(\bx)}\right)^2}\right] \nonumber \\
 & \leq \left(\Ex_{\bx\sim\p}\left[\sum_{i=1}^\dims \left(\frac{(p(\bx) - p(\bx^{(i)}))^+}{p(\bx)}\right)^2\right]\right)^{1/2}. \label{eq:expected-marginal-intro}
\end{align}
Notice that for every $x$ and $i$, either $(p(x) - p(x^{(i)}))^+=0$ or $p(x)>p(x^{(i)})\ge 0$.
First, this makes sure that $a/0$ with $a>0$ never occurs in (\ref{eq:expected-marginal-intro}).
Additionally,
\begin{equation}\label{eqeq2}
 0 \leq \frac{1}{2} \cdot \dfrac{(p(x) - p(x^{(i)}))^+}{p(x)} \leq \dfrac{|p(x) - p(x^{(i)})|}{p(x) + p(x^{(i)})}, 
\end{equation}
and the quantity on the right-hand side is exactly the bias on the distribution of the $i$th bit of a draw of $\p$ conditioning on all $i' \neq i$ bits set according to $x$. 
 In particular, 
 a standard averaging/bucketing argument shows that there exists $\beta \geq \dst^2 / (n\log^2 n)$ such that
\begin{equation}
    \Prx_{\substack{\bx\sim\p\\\bi\sim[\dims]}}\left[ \left(\frac{|\p(\bx) - \p(\bx^{(\bi)})|}{\p(\bx) + \p(\bx^{(\bi)})}\right)^2 \gsim \frac{\dst^2}{\beta \cdot n \log^2 n \log(n/\dst)}  \right] \geq \beta. \label{eq:good-event}
\end{equation}

The above lower bound suggests the following ``edge tester:'' for all $h \in \{0, \dots, O(\log(n/\dst))\}$ where $2^{-h} \gsim \dst^2 / (n\log^2 n)$, independently sample ${O}(2^{h}\log(n/\dst))$ pairs $(\bx,\bi)$ ($\bx$ from $\p$, and $\bi \sim [\dims]$). 
For each pair $(\bx, \bi)$, we consider the distribution supported on $\{-1,1\}$ given by sampling $\by \sim \p$ \emph{conditioned} on every $i' \neq \bi$ having $\by_{i'} = \bx_{i'}$, and use $\smash{\widetilde{O}( 2^{-h} n / \dst^2)}$ queries to estimate the bias of the conditional distribution up to error $(\dst \cdot (2^h/n)^{1/2})/\polylog(n/\eps)$ with high probability. If $\p$ was uniform, every conditional distribution considered will be uniform; however, if $\p$ is $\dst$-far from uniform in total variation distance, (\ref{eq:good-event}) implies that some setting of $h$ will reveal a large bias with high probability. The query complexity of this algorithm is
\[ \sum_{h=0}^{O(\log(n/\dst))} O\left(2^h \cdot \log\left(\frac{n}{\dst}\right)\right) \cdot \widetilde{O}\left({ \frac{2^{-h} n}{\dst^2} }\right) = \widetilde{O}\left({\frac{\dims}{\dst^2}}\right). \]

\subsubsection{A Weaker Version of \cref{thm:main-restriction}}

In order to highlight some of the conceptual ideas involved in proving \cref{thm:main-restriction}, we sketch how one may prove the following weaker inequality, which lower bounds the expected squared $\lp[2]$ norm instead of the expected $\lp[2]$ norm:
\begin{align}
\Ex_{\brho \sim \calD(t+1, \p)}\left[\normtwo{ \mu(\p_{|\brho}) }^2\right] \gsim \frac{1}{\log^2 n} \cdot \frac{t+1}{n-t} \cdot \Ex_{\substack{\bT \subset [n] \\ |\bT| = t}}\left[ \dtv(\p_{\ol{\bT}}, \Uniform)^2\right]   \label{eq:ell-2-2}
\end{align}
where $\calD(t+1, \p)$ denotes the distribution over random restrictions where we enforce $\stars(\brho) = t+1$ (to compare this to \cref{thm:main-restriction}, one should think of $\sigma$ as $t/n$). Specifically, we sample a random set $\bS \subset [n]$ of size $t+1$ and $\bx \sim \p$, then, we set $\brho$ as in (\ref{eq:rr-intro}). Consider imposing a random order $\bpi \colon [t+1] \to \bS$ and apply linearity of expectation to re-write the left-hand side of (\ref{eq:ell-2-2}) as 
\begin{align} 
\sum_{i=1}^{t+1} \Ex_{\brho, \bpi}\left[(\mu(\p_{|\brho})_{\bpi(i)})^2\right] = (t+1) \Ex_{\brho, \bpi}[\mu(\p_{|\brho})_{\bpi(t+1)}^2]. \label{eq:ell-2-2-2}
\end{align}
Toward lower bounding (\ref{eq:ell-2-2-2}), consider the following way of sampling  $\brho$ and $\bpi$. We first sample $t$ random indices $\bpi(1), \dots, \bpi(t)$ uniformly from $[n]$ without replacement and let $\bT = \{ \bpi(1), \dots \bpi(t)\}$. Then, we sample $\bx \sim \p$. Finally, we sample $\bpi(t+1)$ uniformly from the set $\ol{\bT} = [n] \setminus \bT$. We may write $\bS = \bT \cup \{\bpi(t+1)\}$ and similarly have $\brho$ be set according to (\ref{eq:rr-intro}). Consider the function $\bell \colon \{-1,1\}^{\ol{\bT}} \to [-1, \infty)$ which is given by 
\begin{align} 
\bell(y) &= 2^{|\ol{\bT}|} \Prx_{\by \sim \p_{\ol{\bT}}}[\by = y] - 1. \label{eq:2-2-3}
\end{align}
We notice that $\bell$ has mean zero (since $\p_{\ol{\bT}}$ is a probability distribution), similarly to~\cref{sec:warmup}, we orient the edges of the hypercube $\smash{\{-1,1\}^{\ol{\bT}}}$ from the endpoint with higher value of $\bell$ to lower value of $\bell$. Applying \cref{thm:pisier-intro} to $f = \bell$ and this orientation of the hypercube edges, the left-hand side is exactly $2\hspace{0.02cm}\dtv(\p_{\ol{\bT}}, \Uniform)$. Further, we write the random variable $\bz = \bx_{\ol{\bT}}$, which is distributed exactly as $\bz \sim \p_{\ol{\bT}}$, and the random variable $\bj = \bpi(t+1)$, which is distributed uniformly from $\ol{\bT}$. We have, similarly to (\ref{eq:expected-marginal-intro}) but with $f = \bell$ rather than $\p$,
\begin{align}
\frac{\dtv(\p_{\ol{\bT}}, \Uniform)}{\log n} &\lsim \left(|\ol{\bT}| \cdot \Ex_{\substack{\bz \sim \p_{\ol{\bT}} \\ \bj \sim \ol{\bT}}}\left[\left(\dfrac{(\bell(\bz) - \bell(\bz^{(\bj)}))^+}{1+\bell(\bz)} \right)^2 \right]\right)^{1/2}.\label{eq:expectation-2-2}
\end{align}
One crucial observation is that the inner value of the expectation in the right-hand side of (\ref{eq:expectation-2-2}) is, up to a factor of at most $4$, $\mu(\p_{|\brho})_{\bpi(t+1)}^2$, where $\brho$ uses $\bT \cup \{\bj\}$ as stars and non-star values are set to $\bz_{-\bj}$. Plugging this back into (\ref{eq:ell-2-2-2}) gives (\ref{eq:ell-2-2}). 
 
\subsection{Related Work}\label{sec:related}
The seminal works of Goldreich, Goldwasser, and Ron~\cite{GoldreichGR96}, Goldreich and Ron~\cite{GoldreichR00}, and Batu, Fortnow, Rubinfeld, Smith, and White~\cite{BatuFRSW00} initiated the study of distribution testing, viewing probability distributions as a natural application for property testing (see~\cite{Goldreich17} for coverage of this much broader field).
Since these works, distribution testing has enjoyed a wealth of study, resulting in a thorough understanding of the complexity of testing many distribution properties (see, e.g.,~\cite{BatuFFKRW01, BatuKR04, Paninski08, AcharyaDJOP11, BhattacharyyaFRV11, Valiant11, IndykLR12, DaskalakisDSVV13, ChanDVV14, ValiantV17a, Waggoner15, AcharyaDK15, BhattacharyaV15, DiakonikolasKN15a, DiakonikolasK16, Canonne16, BlaisCG17, BatuC17, DaskalakisKW18, DiakonikolasGPP18}, and~\cite{Rubinfeld12,Canonne15a, BalakrishnanW18, Kamath18} for recent surveys).
As a result, sample-optimal algorithms are known for a number of core problems. 

However, the known sample complexity lower bounds, while sublinear, typically still involve a polynomial dependence on the domain size, suggesting that new models are needed for studying distribution testing on high-dimensional domains. One approach is to assume additional \emph{structure} from the input distributions. Settings were studied where the distribution is known to be monotone~\cite{RubinfeldS09}, a low-degree Bayesian Network~\cite{CanonneDKS17, DaskalakisP17, AcharyaBDK18}, a Markov Random Field~\cite{DaskalakisDK18, GheissariLP18, BezakovaBCSV19}, or having some ``flat'' histogram structure~\cite{DiakonikolasKP19}. 

The other approach is to allow stronger oracle access. The subcube conditional sampling model, which is the focus of this work, is a variant of the general conditional sampling model particularly apt for the study of high-dimensional distributions. Bhattacharyya and Chakraborty~\cite{BhattacharyyaC18}, who initiated the systematic study of this variant, showed that for many problems of interest such as uniformity, identity, and closeness testing, subcube conditional queries enabled one to avoid the curse of dimensionality, and established sample complexity upper bounds \emph{polynomial} in the dimension (albeit superlinear). 
The conditional sampling model itself, which was introduced simultaneously by Chakraborty, Fischer, Goldhirsh, and Matsliah~\cite{ChakrabortyFGM13,ChakrabortyFGM16}, and Canonne, Ron, and Servedio~\cite{CanonneRS14, CanonneRS15}, allows more general queries: namely, the algorithm may specify an arbitrary subset of the domain and request a sample conditioned on it lying in the subset. In many cases, the conditional sampling model circumvents sample-complexity lower bounds. Since its introduction, there has been significant study into the complexity of testing a number of properties of distributions under conditional samples, in both adaptive and nonadaptive settings~\cite{Canonne15b, FalahatgarJOPS15, AcharyaCK15b, FischerLV17, SardharwallaSJ17, BlaisCG17, BhattacharyyaC18, KamathT19}.
Beyond distribution testing, this model of conditional sampling has found applications in group testing~\cite{AcharyaCK15a}, sublinear algorithms~\cite{GouleakisTZ17}, and crowdsourcing~\cite{GouleakisTZ18}.
Other ways 
to augment the power of distribution testing algorithms 
include letting the algorithm query the probability density function (PDF) or cumulative distribution function (CDF) of the distribution~\cite{BatuDKR05, GuhaMV06, RubinfeldS09, CanonneR14}, or giving it probability-revealing samples~\cite{OnakS18}.

\subsection{Notation and Prelimaries}

We use boldface symbols to represent random variables, and non-boldface symbols for fixed values (potentially realizations of these random variables) --- see, e.g., $\brho$ versus $\rho$. Given a set $S\subseteq [n]$,~we let $\calU_S$ denote the uniform distribution over $\bits^S$. Usually, as the support of $\calU_{S}$ will be clear from the context, we will drop the subscript and simply write $\calU$.
We write $f(n) \lesssim g(n)$ if, for some $c > 0$, $f(n) \leq c \cdot g(n)$ for all $n \geq 0$ (the $\gtrsim$ symbol is defined similarly). 
$f(n) \asymp g(n)$ if $f(n) \lesssim g(n)$ and $f(n) \gtrsim g(n)$. We use the notation $\tilde{O}(f(n))$ as $O(f(n) \polylog(f(n)))$, and $\tilde{\Omega}(f(n))$ to denote $\Omega(f(n) / (1+ |\polylog(f(n))|))$.
The notation $[k]$ denotes the set of integers $\{1, \dots, k\}$.

The Frobenius norm of a matrix $M \in \mathbb{R}^{d_1 \times d_2}$ is $$\norm{M}_F = \left(\sum_{i\in [d_1]} \sum_{j\in [d_2]} M_{ij}^2\right)^{1/2}.$$ For a string $x \in \bool^\dims$, we use \smash{$x^{(i)}$} to denote the string that is identical to $x$ but with coordinate $i$ flipped, i.e., \smash{$x^{(i)}_j = x_j$} for all $j\neq i$, and \smash{$x^{(i)}_i = -x_i$}. 

We formally define the subcube conditional query access, which was suggested in Canonne, Ron, Servedio~\cite{CanonneRS15} as an instance of the general conditional sampling oracle \cite{CanonneRS15, ChakrabortyFGM16}, and first explicitly studied in Bhattacharyya and Chakraborty~\cite{BhattacharyyaC18}. 
\begin{definition}
  A \emph{subcube conditional sampling} (\SCOND) oracle for a distribution $p$ supported on $\bool^\dims$ is an oracle which accepts a query subcube $\rho \in \{-1,1, *\}^n$, and outputs a sample from the distribution $\bx \sim \p$ conditioned on every $i \notin \stars(\rho)$ having $\bx_i = \rho_i$. We use the convention that if the algorithm considers a restriction with zero support, the oracle outputs a uniform sample.
\end{definition}

\section{The Algorithm}

We prove our theorem for uniformity testing with subcube conditioning, restated below:
\begin{theorem} \label{theo:subconduni}
There exists an algorithm $\SubCondUni$ which, given 
   $n\ge 1$, a subcube oracle to a distribution $\p$ over $\bits^\dims$ and a distance parameter $\dst$
    with $ \dst \in (0,1)$, has the following guarantees. The algorithm makes 
$ \widetilde{O} ( {\sqrt{\dims}}/{\dst^2} ) $ many calls to the oracle and satisfies the following two conditions:\vspace{0.15cm}
	\begin{enumerate}[(i)]
		\item If $\p$ is uniform, then the algorithm returns \accept with probability at least 
		  $2/3$.\vspace{0.05cm}
    \item If $\dtv(\p,\Uniform)\ge\dst$, then the algorithm returns \reject with probability at least 
      $2/3$. 
	\end{enumerate}
\end{theorem}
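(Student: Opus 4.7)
The plan is to build a recursive algorithm that combines the chain rule of Lemma~\ref{lemlem1} with the random-restriction analysis of Theorem~\ref{thm:main-restriction} and the mean testing algorithm of Theorem~\ref{thm:mean-testing-intro}. Starting from an empty restriction $\rho^{(0)}$, at each of $L = O(\log n)$ levels I would (i) refine the current restriction by drawing $\brho^{(\ell+1)} \sim \calD_{1/2}(\p_{|\rho^{(\ell)}})$, which turns each currently-free coordinate into a fixed bit independently with probability $1/2$ using a single \SCOND{} sample, and (ii) invoke the mean tester of Theorem~\ref{thm:mean-testing-intro} on $\p_{|\rho^{(\ell)}}$ with threshold tuned to flag a mean-vector norm of order $\dst/\polylog(n)$. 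The algorithm accepts if and only if every mean test accepts.

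Completeness is immediate: if $\p$ is uniform then $\p_{|\rho^{(\ell)}}$ is uniform at every level with $\mu = 0$, so setting the per-level failure probability to $1/(10L)$ and taking a union bound gives acceptance with probability at least $2/3$.

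For soundness, starting from $\dtv(\p,\Uniform) \ge \dst$, I plan to argue inductively that with constant probability $\dtv(\p_{|\rho^{(\ell)}}, \Uniform)$ stays $\Omega(\dst/\polylog(n))$ until a level at which the mean tester fires. Applying Lemma~\ref{lemlem1} with $\sigma = 1/2$ to the current $\p_{|\rho^{(\ell)}}$ gives two cases. Case (a): the expected projection TV is $\Omega(\dst/\polylog(n))$; then Theorem~\ref{thm:main-restriction} forces $\Ex_{\brho}[\normtwo{\mu(\p_{|\brho})}] = \widetilde{\Omega}(\dst)$ (the additive $e^{-\Omega(n_\ell)}$ term being negligible as long as $n_\ell \gtrsim \log(n/\dst)$), so Markov combined with the mean tester rejects at the next iteration. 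Case (b): the expected restriction TV is $\Omega(\dst/\polylog(n))$; then with constant probability the next-level distribution is still that far from uniform, and we descend. After $O(\log n)$ descents we must encounter case (a); once $n_\ell$ drops below $O(\log(n/\dst))$, the residual subcube has near-constant dimension and an $O(\sqrt{n_\ell}/\dst^2)$-sample base-case uniformity test suffices.

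For query complexity, at level $\ell$ the restricted dimension is $n_\ell \approx n/2^\ell$, and invoking Theorem~\ref{thm:mean-testing-intro} with distance $\dst' = \Theta(\dst/(\polylog(n)\sqrt{n_\ell}))$ costs $\widetilde{O}(\sqrt{n_\ell}/\dst^2)$ \SCOND{} queries. Summing the geometric series over $L = O(\log n)$ levels yields $\widetilde{O}(\sqrt{n}/\dst^2)$ total queries, as claimed. The main obstacle I anticipate is keeping the soundness chain tight: a naive iteration would lose a $\polylog(n)$ factor in the effective distance at every recursive level, causing the mean-testing threshold to collapse after $\log n$ levels. The right analysis will need to amortize---for instance by tracking a single potential function combining the projection and restriction TV mass, which drops only by a constant per level---so that the $\polylog(n)$ penalty of Theorem~\ref{thm:main-restriction} is paid once overall rather than at every level.
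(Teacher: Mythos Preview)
Your proposal has the right high-level architecture (chain rule, random restriction, mean tester), but the soundness argument has a genuine gap that your single-chain recursion cannot close. In case~(b) you write ``the expected restriction TV is $\Omega(\dst/\polylog(n))$; then with constant probability the next-level distribution is still that far from uniform.'' This inference fails: from $\Ex_{\brho}[\dtv(\p_{|\brho},\Uniform)]\ge \dst/2$ and $\dtv\in[0,1]$ you only get $\Pr[\dtv\ge \dst/4]\ge \dst/4$, not constant probability---the expectation could be carried entirely by an $O(\dst)$-probability event where $\dtv\approx 1$. The same issue bites in case~(a): $\Ex_{\brho}[\normtwo{\mu(\p_{|\brho})}]\gtrsim \dst/\polylog$ does not give constant probability that a \emph{single} sampled restriction has large mean, so the mean tester need not fire. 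Over $O(\log n)$ levels the success probability of your chain therefore degrades multiplicatively and vanishes. This is not the ``lose a $\polylog$ per level'' obstacle you flagged; it is a loss of a factor $\Omega(1/\dst)$ in probability per level, and no potential-function amortization on a single path repairs it.

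The paper's proof addresses exactly this by \emph{branching}: at each level it bucketizes the target quantity into dyadic scales $2^{-j}$ and, for each $j$, draws $\Theta((1/\dst)2^{-j}\log(1/\dst))$ independent restrictions, recursing on each with distance parameter $2^{-j}$ (and similarly for the mean-testing side, drawing $\Theta(L 2^{-j}\log L)$ restrictions). This converts the expectation lower bounds into constant-probability events. The branching would blow up the query budget if $\sigma=1/2$; the paper instead takes $\sigma=\Theta(1/\log^4(1/\dst))$, so that the dimension shrinks by a $1/\polylog(1/\dst)$ factor per level and the resulting $\sqrt{\sigma}$ in the recursive cost exactly cancels the extra $\polylog(1/\dst)$ factors coming from the number of branches. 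The inductive query bound $\Phi(n,\dst)\le C\sqrt{n}\,\dst^{-2}\log^c(n/\dst)$ then closes. If you want to salvage your outline, you will need both the bucketing/branching and the small-$\sigma$ choice; with those in place, your sketch becomes essentially the paper's argument.
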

The rest of this section is devoted to the proof of~\cref{theo:subconduni}.
For our convenience of working with~$\log(1/\dst)$ in the proof, we assume below that $\dst\le 1/2$ in the input of $\SubCondUni$.
This~way $\log(c/\dst)$ can be treated as $O(\log (1/\dst))$ whenever $c\ge 1$ is a fixed constant.
As discussed earlier in Section \ref{sec:intro},
  $\SubCondUni$ (see \cref{algo:subconduni}) is based on \cref{thm:main-restriction} and Lemma~\ref{lemlem1}, which we now prove.

\begin{proofof}{\cref{lemlem1}}
    Fix any subset $S\subseteq [\dims]$ of size $t$. Given $u\in \bits^{\ol S}$, we let $\p_{|\rho(S, u)}$ denote the distribution supported on $\smash{\bits^{S}}$ given by drawing $\bx\sim \p$ conditioned on $\bx_{\ol S}=u$.
    By unrolling the definition of total variation distance, we have  
    \begin{align*}
        2\hspace{0.03cm}\dtv(\p,\Uniform) &= \sum_{x\in\bool^\dims} \abs{\p(x) - 1/2^\dims}\\
        &= \sum_{u\in\bool^{\ol{S}}}\sum_{v\in\bool^{S}} \abs{\Prx_{\bx\sim \p}\big[\bx_{\ol S}=u\hspace{0.03cm}\wedge\hspace{0.03cm} \bx_S=v\big] - 1/2^\dims}\\
        &= \sum_{u\in\bool^{\ol{S}}}\sum_{v\in\bool^{S}} \abs{\Prx_{\bu\sim \p_{\ol S}}\big[\bu= u\big]\cdot\Prx_{\bx\sim \p}\big[\bx_S=v \mid\bx_{\ol S}=u \big] - 1/2^\dims} \\
        &\leq \sum_{u\in\bool^{\ol{S}}}\sum_{v\in\bool^{S}} \Paren{ \abs{\p_{\ol S}(u)\cdot \Prx_{\bx\sim \p}\big[\bx_S=v \mid\bx_{\ol S}=u \big] - \frac{\p_{\ol{S}}(u)}{2^{t}}} +  \abs{\frac{\p_{\ol{S}}(u)}{2^{t}} - \frac{1/2^{\dims-t}}{2^t}} }\\
        &= \sum_{u\in\bool^{\ol{S}}}\p_{\ol{S}}(u)\cdot 2\hspace{0.03cm}\dtv\big(\p_{|\rho(S,u)},\Uniform \big) + \sum_{v\in\bool^{S}} \frac{1}{2^{t}} \cdot 2\hspace{0.03cm}\dtv\big(\p_{\ol{S}},\Uniform \big) \\
        &= \sum_{u\in\bool^{\ol{S}}}\p_{\ol{S}}(u)\cdot 2\hspace{0.03cm}\dtv\big(\p_{|\rho(S,u)},\Uniform \big) + 2\hspace{0.03cm}\dtv\big(\p_{\ol{S}},\Uniform \big).
    \end{align*}
 Taking the expectation of the inequality over the choice of $\bS\sim \calS_\sigma$ yields the lemma.
\end{proofof}\medskip

Let $p$ be a distribution over $\{-1,1\}^n$ with $\dtv(\p,\Uniform)\ge \dst$.
Let $$\sigma\eqdef{\sigma(\dst)}=\frac{1}{C_0 \cdot \log^4 (16/\dst)}$$
  where $C_0>0$ is an absolute constant. (The value of $C_0$ is only used at the end of this section, {where setting $C_0=10^{11}$ is good enough.})
Let us further assume that   $n$ and $\dst$ together satisfy
\begin{equation}\label{smallndef}
e^{-\sigma n/10}\le \dst/{8}.
\end{equation}
Violation of (\ref{smallndef}) 
  implies that  
\begin{equation}\label{hah1}
n=O\left(\frac{1}{\sigma}\cdot \log \left(\frac{{1}}{\dst}\right)\right)=O\left(\log^5\left(\frac{{1}}{\dst}\right)\right)
\end{equation}
and we handle this case by applying the linear-query tester described in~\cref{sec:warmup}
  with query complexity $\smash{\widetilde{O}(n/\dst^2)=\widetilde{O}(1/\dst^2)}$ using (\ref{hah1}).

For the general case with (\ref{smallndef}) satisfied,
  consider a distribution $p$ with $\dtv(\p,\Uniform)\ge \dst$. 
\cref{lemlem1} implies that either $\Ex_{\bS \sim \calS_\sigma}\left[\dtv(\p_{\overline{\bS}},\Uniform)  \right]\geq \dst/2$ or
\begin{equation}\label{eq222} 
\Ex_{ \brho\sim \calD_\sigma(\p)} \left[\dtv(\p_{|\brho },\Uniform)\right]\geq \dst/2.
\end{equation}
Assuming the former and using (\ref{smallndef}), we have from \cref{thm:main-restriction} 
   that 
 (using $\sigma=1/\polylog (1/\dst)$)
\begin{equation}\label{eqsimplified2}
\bE{\brho \sim \calD_{\sigma}(\p)}{ \normtwo{ \mu(\p_{|\brho}) }\big/\sqrt{n} } \geq  \widetilde{\Omega}\left(\frac{\dst}{\sqrt{n}} \right). 
\end{equation}
This naturally lends itself to a recursive approach: for the general case when $n$ and $\dst$ satisfy (\ref{smallndef}), we use $\textsc{TestMean}$ from \cref{thm:mean-testing-intro} to test (\ref{eqsimplified2}), and recursive calls to $\textsc{SubCondUni}$ to test (\ref{eq222}). 

The description of $\textsc{SubCondUni}$ is given in \cref{algo:subconduni}. For convenience, we let 
\begin{equation}\label{uuuu4}
L\eqdef L(n,\dst)=\widetilde{O}(\sqrt{n}/\dst)
\end{equation}
  such that the right hand side of (\ref{eqsimplified2}) 
  can be replaced by $1/L$.

  We are now ready to prove \cref{theo:subconduni}.\medskip

\begin{algorithm}[t]
\algblock[Name]{StartBaseCase}{EndBaseCase}
\algblock[Name]{StartMainCase}{EndMainCase}
	\begin{algorithmic}[1]
		\Require Dimension $\dims$, oracle access to distribution $\p$ over $\bool^\dims$, and parameter $\dst\in (0,1/2]$ 
		\StartBaseCase\Comment{Base case: violation of (\ref{smallndef})}
		\If{$n$ and $\dst$ violate (\ref{smallndef})} 
		     \State Run the linear tester described in Section \ref{sec:warmup} 
		        and \Return the same answer
		\EndIf
		\EndBaseCase
		\StartMainCase\Comment{General case: (\ref{smallndef}) satisfied}
		  \State  Let $\smash{L=L(n,\dst)=\widetilde{O}(\sqrt{n}/\eps)}$ be as defined in (\ref{uuuu4}) to simplify (\ref{eqsimplified2})

		\For {$j=1,2,\ldots ,\lceil\log 2L\rceil$} \label{eq:hahaha}
		\Comment{Test (\ref{eqsimplified2}), via bucketing}

			\State\label{step:sample:restrictions:t} Sample 
			$s_j=8L \log (2L)\cdot 2^{-j}$ restrictions from $\calD_{\sigma}(\p)$			
			\For {every restriction $\brho$ sampled with $|\stars(\brho)|>0$}
			  \State\label{step:run:testmean} Run $\textsc{TestMean}(|\stars(\brho)|,\p_{|\brho},2^{-j} )$ for $r=O(\log (n/\dst))$ times  
			   \State\label{step:hehe1} \Return \reject if the majority of calls return $\reject$ 
			\EndFor
		 
		\EndFor

			      \For {$j=1,2,\ldots,\lceil \log (\new{4}/\dst)\rceil$} \Comment{Test the second part of (\ref{eq222}), recursively} 
			      \State\label{step:sample:restrictions} Sample 
			      $s_j' = (\new{32}/\dst)\log (\new{4}/\dst)\cdot 2^{-j} $  restrictions from $\calD_{\sigma}(\p)$
			      \For {each restriction $\brho$ sampled satisfying $0<|\stars(\brho)|\le 2\sigma n$} 
			           
				       	\State\label{step:rrrrr} Run $\SubCondUni(|\stars(\brho)|,\p_{|\brho},2^{-j} )$ for
					$t=100 \log (16/\dst)$ times
					\State\label{step:hehe2} \Return \reject
					if the majority of calls return \reject
			      \EndFor
			 \EndFor

		\EndMainCase
		\State \Return \accept
  	\end{algorithmic}
	\caption{$\textsc{SubCondUni}( \dims,\p,\dst )$}\label{algo:subconduni}
\end{algorithm}

\begin{proofof}{\cref{theo:subconduni}}
We start with (i) (the completeness) and prove by induction on $n$ that, when $p$ is uniform,
  $\SubCondUni(n,p,\dst)$ returns $\accept$ with probability at least $2/3$.
The base case when $n=1$ is trivial since (\ref{smallndef}) is violated and we just run the linear tester.
Assuming that the statement is true for all dimensions smaller than $n$, we focus on the
  case when  (\ref{smallndef}) is satisfied; the case when it is violated is again trivial.
Since $p$ is uniform, the random restriction  $\p_{|\brho}$ is uniform with probability one.
On the other hand, note that the total number of restrictions $\brho$ we draw in line \ref{step:sample:restrictions:t} is
  $\sum_j s_j=O(L\log L)=\widetilde{O}(\sqrt{n}/\dst^2)$.
As a result, one can set the constant hidden in the choice of $r$ in line \ref{step:run:testmean} to be sufficiently
  large so that $\SubCondUni$ rejects in line \ref{step:hehe1} with probability no larger than $1/6$.
Similarly, by invoking the inductive hypothesis, 
  one can show that $\SubCondUni$ rejects in line \ref{step:hehe2} with probability no larger than $1/6$.
It follows from a union bound that $\SubCondUni$ rejects with probability at most $1/3$ and 
  this finishes the induction step.
  \medskip

We now turn towards establishing (ii) (the soundness), and hereafter assume that $\dtv(\p,\Uniform)\ge \dst$. 
Again we prove by induction on $n$ that $\SubCondUni(n,p,\dst)$ rejects with probability at least
  $2/3$. For the general case in the induction step,
  it follows from our discussion earlier that either (\ref{eqsimplified2}) or (\ref{eq222}) holds
  (and in particular, the right hand side of (\ref{eqsimplified2}) can be replaced by $1/L$).

For the first case, we note that the left hand side of (\ref{eqsimplified2}) is 
  the expectation of a random variable that takes values between $0$ and $1$ (and the right hand side can be replaced by $1/L$).
Thus, with a simple bucketing argument
  there must exist a $j\in [\lceil \log 2L\rceil]$ such that 
$$
\Prx_{\brho\sim \calD_\sigma(\p)}\left[\frac{\|\mu(\p_{|\brho})\|_2}{\sqrt{n}}\ge 2^{-j}\right]\ge 
  \frac{2^{j-1}}{L\lceil \log 2L\rceil} \ge \frac{2^j}{4L\log 2L}.
$$
It follows that one of the restrictions sampled satisfies the condition above with
  probability at least $1-e^{-2}>5/6$. When this happens, 
  {each call to $\textsc{TestMean}$ in line \ref{step:rrrrr} rejects with probability at least $2/3$. As a result $\SubCondUni$ rejects in line \ref{step:hehe1} with probability at least $2/3$.}

For the second case, again by the bucking argument, there exists a $j\in [\lceil \log (\new{4}/\dst)\rceil]$ such that
$$
\Prx_{\brho\sim \calD_\sigma(\p)}\left[\dtv(\p_{|\brho},\Uniform)\ge 2^{-j}\right]\ge 
\frac{\dst 2^j}{\new{4}\lceil \log (\new{4}/\dst)\rceil}\ge \frac{\dst 2^j}{\new{8} \log (\new{4}/\dst) }.
$$
By (\ref{smallndef}),
  the probability of $|\stars(\brho)|>2\sigma n$ is by a Chernoff bound at most $e^{-\sigma n/3}<(\dst/\new{8})^3$.
Thus,  
\begin{equation*} 
  \Prx_{ \brho\sim \calD_\sigma(\p)} \left[\dtv(\p_{|\brho },\Uniform)\ge 2^{-j}\ \text{and}\ 
  0<|\stars(\brho)|\le 2\sigma n\right]\geq  \frac{\dst 2^j}{\new{16} \log (\new{4}/\dst) }.
\end{equation*}
Given our choice of $\new{s_j'}$, the probability that at least one of $\brho$ satisfies 
  $\dtv(\p_{|\brho},\Uniform)\ge 2^{-j}$ is at least $5/6$.
The probability that the majority of calls reject is also at least $5/6$ (for this case
  we just need $t$ to be a large enough constant). 
Thus, $\SubCondUni$ rejects with probability at least $2/3$.

\medskip
Finally, we bound the query complexity of $\SubCondUni$.
Let $\Phi(\dims,\dst)$ denote the complexity of $\SubCondUni(n,p,\dst)$. We will show by induction on $n$ that 
\begin{equation}\label{eq:sample:complexity:induction}
    \Phi(\dims,\dst ) \leq C\cdot \frac{\sqrt{\dims}}{\dst^2}
     \cdot \log^{c} 
    \left(\frac{\dims}{\dst}\right) \end{equation}
for some absolute constants $C,c>0$.
To fix $C$ and $c$, we let $C_1$ and $c_1$ be two constants such that 
  the query complexity of the linear tester running on $n$ and $\dst$ that violate (\ref{smallndef})
  can be bounded by
$$
C_1\cdot \frac{1}{\dst^2}\cdot \log^{c_1}\left(\frac{\new{1}}{\dst}\right).
$$
We also let $C_2$ and $c_2$ be two constants such that the query 
  complexity of $\SubCondUni$ between
  line \ref{eq:hahaha} and \ref{step:hehe1} (the non-recursive part) can be bounded by 
$$
C_2\cdot \frac{\sqrt{n}}{\dst^2}\cdot \log^{c_2}\left(\frac{n}{\dst}\right).
$$
For this, note that the query complexity of this part is
  (recall that $L=\widetilde{O}(\sqrt{n}/\dst)$)
$$
\sum_{j=1}^{\lceil \log 2L\rceil}  \frac{L\log L}{2^j}\cdot
{ \widetilde{O}\left(\max\left\{\frac{2^{2j}}{\sqrt{n}},2^j \right\}
\right)}\cdot \log \left(\frac{n}{\dst}\right)=\widetilde{O}\left(\frac{\sqrt{n}}{\dst^2}\right).
$$
We then set $C\eqdef 2\max(C_1,C_2)$ and $c\eqdef \max(c_1,c_2)$.

To prove (\ref{eq:sample:complexity:induction}), the base case when $n=1$ is trivial.
For the induction step, the special case when  (\ref{smallndef}) is violated  is 
  also trivial.
For the general case, we have (by the choice of $C$ and $c$)$$
\Phi(n,\dst)\le \frac{C}{2}\cdot \frac{\sqrt{n}}{\dst^2}\cdot \log^c\left(\frac{n}{\dst}\right) +\sum_{j=1}^{\lceil \log (\new{4}/\dst)\rceil}
   s_j'\cdot 100 \log \left(\frac{16}{\dst}\right) \cdot \Phi\big(2\sigma n, 2^{-j}\big).
$$
Using the inductive hypothesis (and our choice of $\sigma$), each term in the second sum becomes 
\begin{align*}
 C\cdot 
\frac{{32}}{\dst}\cdot \log \left(\frac{{4}}{\dst}\right)\cdot
100\cdot \log \left(\frac{16}{\dst}\right)\cdot \sqrt{2\sigma n} \cdot 
2^{ j}\cdot \log^c\left(\sigma n 2^{j{+1}}\right) 
 \le \frac{C}{{32}}\cdot  \frac{\sqrt{n}}{\dst}\cdot 2^j\cdot \log^c\left(\frac{n}{\dst}\right),
\end{align*}
{using $C_0 \geq (32^2\cdot 100)^2 \cdot 2$.}
We can then finish the induction by using 
$$
\sum_{j=1}^{\lceil \log ({4}/\dst)\rceil} 2^j< 2^{\lceil \log ({4}/\dst)\rceil+1}\le \frac{{16}}{\dst}.
$$
This concludes the proof of the theorem.
\end{proofof}

\section{Proof of \cref{thm:main-restriction}}

Let $\calS(t)$ be the uniform distribution supported on all subsets of $[n]$ of size $t$. Given a distribution $\p$ supported on $\bool^n$, we let $\calD(t, \p)$ be the distribution supported on restrictions $\{-1,1, *\}^{n}$ given in a similar fashion to that of $\calD_{\sigma}(\p)$, except we use sets of size $t$; we sample $\bS \sim \calS(t)$ and $\bx \sim p$, and we let $\brho_i = *$ if $i \in \bS$ and $\bx_i$ otherwise.  The bulk of the work goes into proving the following lemma. After the statement, we show the lemma implies \cref{thm:main-restriction}.
\begin{lemma}\label{lem:main-tech}
Let $\p$ be a distribution supported on $\bool^n$, $t \in [n-1]$, and denote
\[ \alpha \eqdef \Ex_{\bT \sim \calS(t)}\Big[\dtv (\p_{\ol{\bT}}, \Uniform ) \Big] 
\ge 0. \]
Then,
\begin{align}
\Ex_{\brho \sim \calD(t, \p)}\left[ \normtwo{ \mu(\p_{|\brho}) }\right] + \Ex_{\brho \sim \calD(t+1, \p)}\left[ \normtwo{ \mu(\p_{|\brho})}\right] &\gsim \frac{t}{n} \cdot \frac{\alpha}{\log^2 n \cdot \log(n/\alpha) \cdot \log(1/\alpha)} .\label{eq:restrict-t}
\end{align}
\end{lemma}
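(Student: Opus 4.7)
The plan is to apply the robust Pisier inequality (\cref{thm:pisier-intro}) to the normalized density of $\p_{\ol{T}}$ for each $T$ of size $t$ --- following the warm-up computation of \cref{sec:warmup} and the sketch of the weaker squared version --- and then to replace the crude Jensen step used in the derivation of~(\ref{eq:ell-2-2}) by a dyadic bucketing argument that preserves the outer $\sqrt{\cdot}$. Concretely, set $\bell_T(y) := 2^{n-t}\hspace{0.04cm}\p_{\ol T}(y) - 1$, orient each hypercube edge of $\{-1,1\}^{\ol T}$ from its higher-$\bell_T$ to its lower-$\bell_T$ endpoint (call the resulting orientation $E_T$), and apply \cref{thm:pisier-intro}, Khintchine's inequality on the signs $\by_i \bx_i$, and importance sampling to switch the outer expectation to $\bz \sim \p_{\ol T}$, exactly as in~(\ref{eq:apply-pisier})--(\ref{eq:expected-marginal-intro}). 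This gives
\[ \frac{\dtv(\p_{\ol T}, \Uniform)}{\log n}\lesssim \Ex_{\bz\sim \p_{\ol T}}\left[\sqrt{\sum_{\substack{j\in \ol T \\ (\bz, \bz^{(j)})\in E_T}}\left(\frac{\bell_T(\bz) - \bell_T(\bz^{(j)})}{1 + \bell_T(\bz)}\right)^2}\right]. \]
A direct computation identifies the weight on an oriented edge $(z, z^{(j)})\in E_T$ with $2\hspace{0.04cm}|\mu(\p_{|\rho(T\cup \{j\}, z_{-j})})_j|$, so each inner term is (up to constants) the squared $j$-th coordinate of a specific level-$(t+1)$ restriction's mean vector.

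Next, we replace the final Jensen through the $\sqrt{\cdot}$ by dyadic bucketing. The edge weights $a_{T, z, j} := (\bell_T(z) - \bell_T(z^{(j)}))^+/(1+\bell_T(z))$ lie in $(0, 1]$, yielding $O(\log(n/\alpha))$ non-trivial dyadic scales (weights below $\alpha/(2n\log n)$ contribute $O(\alpha/\log n)$ in total and can be absorbed). Bucketing also the density $1+\bell_T(z)$ on a second $O(\log(1/\alpha))$ grid, the vertex-$z$ contribution to the right-hand side above is $\asymp 2^{-h}\sqrt{N_{h,k}(T, z)}$ within scale $(h, k)$, where $N_{h, k}(T, z)$ denotes its out-degree in the bucket graph $G_{T, h, k}$. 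Pigeonholing on both scales produces $(h^\star, k^\star)$ such that, with probability $\gtrsim 1/(\log(n/\alpha)\log(1/\alpha))$ over the coupling $\bT \sim \calS(t)$, $\bz\sim \p_{\ol{\bT}}$, one has $2^{-h^\star}\sqrt{N_{h^\star, k^\star}(\bT, \bz)}\gtrsim \alpha/\log n$.

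The final step upgrades this per-vertex bucket contribution to a lower bound on an expected mean-vector norm, and it is here that both levels enter. Using the natural coupling $\bT\sim \calS(t)$, $\bj\sim \ol{\bT}$, $\bz\sim \p_{\ol{\bT}}$, so that $\brho := \rho(\bT\cup\{\bj\}, \bz_{-\bj})\sim \calD(t+1, \p)$, we split on the bucket structure at $\bz$. In the \emph{hub} case, where $\bz$ has a significant bucket out-degree, a $(t+1)$-fold symmetry argument analogous to the one deriving~(\ref{eq:ell-2-2-2}) --- but applied \emph{within} the single bucket --- produces $\normtwo{\mu(\brho)}\gtrsim 2^{-h^\star}\sqrt{N_{h^\star, k^\star}(\bT, \bz)}$ directly at level $t+1$. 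In the complementary \emph{spread} case, we invoke the convex-combination identity $\mu(\p_{|\rho(T\cup\{j\}, u)})_k = \beta\,\mu(\p_{|\rho(T, (u, 1))})_k + (1-\beta)\,\mu(\p_{|\rho(T, (u, -1))})_k$ (for $k\in T$, where $\beta$ is the conditional probability that coordinate $j$ equals $+1$): a bucket-edge $(z, z^{(j)})\in E_T$ forces $|\beta - 1/2|\gtrsim 2^{-h^\star}$, and the failure of hub concentration prevents these biases from aligning into a level-$(t+1)$ signal, so the level-$t$ mean norms $\normtwo{\mu(\p_{|\rho(T, z)})}$ and $\normtwo{\mu(\p_{|\rho(T, z^{(j)})})}$ must individually be large and feed into $\Ex_{\brho\sim \calD(t, \p)}[\normtwo{\mu(\brho)}]$. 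Averaging over $\bT\sim \calS(t)$ and combining the two cases gives the claimed bound, with the factor $t/n$ arising from the probability that the extra star of a uniform size-$(t+1)$ set hits the out-neighbourhood at $\bz$, and with the poly-log losses $\log^2 n$, $\log(n/\alpha)$, $\log(1/\alpha)$ coming respectively from Pisier, weight bucketing, and density bucketing.

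The principal obstacle is the last step: going from the per-vertex bucket contribution $2^{-h^\star}\sqrt{N_{h^\star, k^\star}}$ to a lower bound on the $\normtwo{\cdot}$-norm of a single mean vector. Hub concentration yields this directly at level $t+1$, but when the bucket edges are spread out, the $\sqrt{\cdot}$-gain is only visible through the convex-combination identity, which transfers the signal to level-$t$ restrictions. This dichotomy --- and the necessity of both reductions --- is the structural reason that the statement of \cref{lem:main-tech} sums the level-$t$ and level-$(t+1)$ expected mean norms rather than considering either one in isolation.
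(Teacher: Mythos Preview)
Your high-level architecture is right --- Pisier on the shifted density, importance sampling to get the edge weights in terms of mean-vector coordinates, dyadic bucketing on the edge weights, and a dichotomy that routes the signal to either the level-$t$ or the level-$(t+1)$ expected mean norm. But there is a genuine gap in the last step, and it lies precisely in the choice of orientation and in the mechanism you call ``hub vs.\ spread.''

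You orient every edge from higher to lower $\bell_T$-value. The paper does this \emph{only} for the uneven edges (weight $\geq 2/3$); for the even edges at scale $\kappa$ it uses a completely different orientation, obtained by greedily peeling off the current maximum-degree vertex in $\calH^{[\kappa]}$ and directing all its remaining edges outward. The sole purpose of this greedy orientation is to secure the following structural property (Lemma~\ref{lem:greedy} in the paper): if every vertex in a set $U$ has out-degree at most $g$, then every vertex outside $U$ receives at most $g$ edges from $U$. This in-degree control is what makes the even-edge case go through. Concretely, once you define the ``good'' vertices $A_T$ (out-degree $\asymp d$, \tpc) and the ``bad'' vertices $B_T$ (\tc), you must subtract the probability that a random out-edge from $A_T$ lands in $B_T$; the greedy orientation caps this by $O(d/(n-t))\cdot\Pr[\text{\tc}]$, which is exactly what is needed. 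Your high-to-low orientation gives no such bound --- a single \tc vertex could absorb arbitrarily many in-edges from $A_T$ --- and the ``spread'' argument you sketch (``failure of hub concentration prevents biases from aligning'') does not substitute for it.

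Relatedly, the reason the sum $\bX_1+\dots+\bX_{t+1}$ of indicator variables is capped at $d$ (so that $\sqrt{\sum \bX_i}\gtrsim \sum \bX_i/\sqrt{d}$) is different in the two cases and is not a generic ``hub'' phenomenon. For uneven edges, the cap comes from a direct combinatorial lemma (if $d+1$ stars each sit on an uneven out-edge, then \emph{all} the corresponding $t$-restrictions already have large mean, so they were excluded from the $\bX_i$'s); this uses the $2/3$ threshold in an essential way. For even edges, the cap comes from the convex-combination identity you mention (Lemma~\ref{capcapcap}): if both endpoints of the $\pi(i)$-edge are \tpc then the $(t+1)$-restriction mean has small $\ell_2$ norm on the remaining coordinates, forcing $\sum\bX_i<\gamma$. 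You have this identity, but you do not have the in-degree bound needed to control the probability that the \emph{other} endpoint is \tc, and that is where your argument stalls.
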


\begin{proofof}{\cref{thm:main-restriction} assuming \cref{lem:main-tech}}
We assume that $5\le \sigma n\le n-5$ in the proof; otherwise the statement is trivially satisfied.
Let $\delta=\min(\sigma,1-\sigma)/2>0$.

For each $t\in [n-1]$, we write for the sake of this proof 
\[ \alpha_t \eqdef \Ex_{\bT \sim \calS(t)}\Big[ \dtv (\p_{\ol{\bT}}, \Uniform ) \Big].  \]
Notice that $\brho \sim \calD_{\sigma}(\p)$ may be sampled by drawing $\bk \sim \Bin(n,\sigma)$ and $\brho \sim \calD(\bk, \p)$. We write $\beta_t$ for the probability over $\bk \sim \Bin(n, \sigma)$ that $\bk = t$. 
Let $$B := \Big\{ t \in [n-1] : t/n \in [\sigma-\delta,\sigma+\delta]\Big\}.$$
Then, by Chernoff's bound we have 
  $\sum_{t\in B} \beta_t\ge 1-2\hspace{0.02cm}e^{-\delta n/5}$ and thus,
\begin{align}
 \sum_{t \in B} \beta_t\cdot \alpha_t \geq  \Ex_{\bT \sim \calS_{\sigma}}\Big[ \dtv(\p_{\ol{\bT}}, \Uniform) \Big] - 
  2\hspace{0.02cm}e^{-\delta n/5}. \label{eq:good-c}
\end{align}
Then, we have
\begin{align}
2\cdot \Ex_{\brho \sim \calD_{\sigma}(\p)}\left[ \|\mu(\p_{|\brho})\|_2 \right]
&\geq \sum_{t \in B} \left( \beta_t \cdot \Ex_{\brho \sim \calD(t, \p)}\left[ \|\mu(\p_{|\brho})\|_{2}\right] + \beta_{t+1}\cdot  \Ex_{\brho \sim \calD(t+1, \p)}\left[ \|\mu(\p_{|\brho})\|_{2} \right]\right) \nonumber \\
&\quad \quad \gsim \sum_{t \in B}\beta_t \left( \Ex_{\brho \sim \calD(t, \p)}\left[ \|\mu(\p_{|\brho})\|_{2} \right] + \Ex_{\brho \sim \calD(t+1, \p)}\left[ \|\mu(\p_{|\brho})\|_{2} \right]\right) \label{eq:sigma-bound}\\[0.8ex]
&\quad\quad \gsim \frac{\sigma}{\log^2 n} \cdot \sum_{t \in B} \beta_t \cdot \frac{ \alpha_{t} }{\log (n/\alpha_t) \log(1/\alpha_t)}\label{eq:sigma-bound22}\\[0.8ex]
&\quad\quad \gsim \frac{\sigma}{\polylog(n)} \cdot
\widetilde{\Omega}\left( \Ex_{\bT \sim \calS_{\sigma}}\Big[ \dtv(\p_{\ol{\bT}}, \Uniform)\Big]  - 2\hspace{0.02cm}e^{-\delta n /5} \right). \label{eq:convexity}
\end{align}
In (\ref{eq:sigma-bound}), we used $t/n \in [\sigma-\delta, \sigma+\delta]$, $\delta=\min(\sigma,1-\sigma)/2$ and $\sigma\ge 5/n$ to have
$$
\frac{\beta_{t+1}}{\beta_t}=\frac{n-t}{t+1}\cdot \frac{\sigma}{1-\sigma} 
\ge 
\frac{(1-\sigma)/2}{(3\sigma/2)+(1/n)}\cdot \frac{\sigma}{1-\sigma}\gsim 1.
$$
In (\ref{eq:sigma-bound22}) we applied  \cref{lem:main-tech} on each $t\in B$.
In (\ref{eq:convexity}) we applied Jensen's inequality (as the function $f(a) = a / (\log(n/a)\log(1/a))$ when $a \neq 0$ and $f(0) = 0$ is convex in $[0,1]$); and (\ref{eq:good-c}).

This finishes the proof of \cref{thm:main-restriction}.
\end{proofof}

\subsection{Robust Pisier Inequality}

In this section, we prove the robust version of Pisier's inequality. Robustness here is equivalent to that of \cite{KhotMS18} where we will consider a function $f \colon \bool^n \to \R$, and we will lower bound a functional on the values of the edges of the hypercube after assigning them directions. 

Formally, fix $n \in \N$ and let $\calH$ be the undirected graph over the hypercube $\{-1, 1\}^n$ that consists of undirected edges $\{x, x^{(i)}\}$ with $x\in \{-1, 1\}^n$ and $i\in [n]$. 
For $i \in [n]$, recall that $\lapl[i] f \colon \bool^n \to \R$ is the linear operator given by $$\lapl[i]f(x) = \frac{f(x) - f(x^{(i)})}{2}.$$

Notice that in~\cref{thm:vanilla-pisier}, every edge $\{x, x^{(i)}\} $ in $\calH$ for a fixed value of $\by$ is counted twice in the right-hand side; once for the endpoint $x$ and once for the endpoint $x^{(i)}$. In the robust version, we may arbitrarily choose, for each edge $\{x, x^{(i)}\}$, whether to ``charge''  the edge to $x$ or to $\smash{x^{(i)}}$. 
For this purpose, we consider an orientation $G$ of $\calH$
  (so for each  
  $\{x,x^{(i)}\}$ in $\calH$, $G$ contains either $(x,x^{(i)})$ or $(x^{(i)},x)$),
  and charge an edge $\{x,x^{(i)}\}$ to $x$ if $(x,x^{(i)})$ is in $G$ and 
  to $x^{(i)}$ otherwise.  
For convenience, we will abuse the notation in the rest of the section
  to use the name of a directed graph (such as $G$)
  to denote its edge set as well, since its vertex set is usually clear from the context. 
So we will write $(u,v)\in G$ if $(u,v)$ is a directed edge in $G$.

We are now ready to state the robust version of Pisier's inequality, which generalizes~\cref{thm:pisier-intro}.

\begin{theorem}[Robust Pisier's inequality]\label{thm:robust-pisier}
Let $f\colon \bool^n \to \R$ be a function with 
\begin{equation}\label{zerosum}
\Ex_{\bx \sim \bool^n}\big[f(\bx)\big]=0
\end{equation} 
 and let $G$ be an orientation of $\calH$. 
Then, for any $s \in [1, \infty)$ we have
\begin{align*}
 \left( \Ex_{\bx \sim \bool^n}\left[ \Big| f(\bx)\right|^s\Big]\right)^{1/s} &\lsim \log n \cdot \left( \Ex_{\bx, \by \sim \bool^n}\left[ \hspace{0.06cm}\left| \sum_{\substack{i \in[n]\\  {(\bx,\bx^{(i)})\in G}}} \by_i \bx_i \lapl[i]  f(\bx)  
\right|^s
\hspace{0.08cm}\right]\hspace{0.03cm}\right)^{1/s}. 
\end{align*}
\end{theorem}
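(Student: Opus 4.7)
The plan is to follow the template of \cite{NaorS02}, which establishes the non-robust real-valued Pisier inequality, and verify that each step is compatible with an arbitrary orientation $G$ of $\calH$. The starting point is a random-path (equivalently, martingale) decomposition of $f = f - \Ex[f]$: for a uniformly random permutation $\bpi$ of $[n]$, one writes
\[ f(\bx) = \sum_{k=1}^n \big(\Ex[f \mid \bx_{\bpi(1)}, \ldots, \bx_{\bpi(k)}] - \Ex[f \mid \bx_{\bpi(1)}, \ldots, \bx_{\bpi(k-1)}]\big), \]
where each summand is a ``derivative in direction $\bpi(k)$'' of a partially-averaged version of $f$. Crucially, each such term lives on a hypercube edge parallel to the $\bpi(k)$-th axis, not on any specific orientation of that edge. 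The factor of $\log n$ enters through Pisier's dyadic-block grouping: the $n$ martingale differences are collected into $O(\log n)$ blocks at geometrically spaced scales (measured by how many coordinates have been revealed), and within each block one applies Khintchine's inequality in an auxiliary Rademacher vector $\by$ to pass from an $\ell_2$-type gradient quantity to the Rademacher-averaged form $\big(\Ex_{\bx,\by}|\sum_i \by_i \bx_i \lapl[i] f(\bx)|^s\big)^{1/s}$.

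To obtain the robust (directed) form, the key observation is that for any $i \in [n]$, the substitution $\bx \mapsto \bx^{(i)}$ sends $\lapl[i] f(\bx)$ to $-\lapl[i] f(\bx)$ and $\bx_i$ to $-\bx_i$, while swapping the indicators $\mathbb{1}[(\bx, \bx^{(i)}) \in G]$ and $\mathbb{1}[(\bx^{(i)}, \bx) \in G]$. Since these two indicators are complementary (the orientation picks exactly one direction for each undirected edge) and the two sign flips cancel, one may ``charge'' each edge-derivative term $\by_i \bx_i \lapl[i] f(\bx)$ to whichever endpoint $G$ points out from, at the cost of only a measure-preserving change of variable in $\bx$ and a sign flip which is absorbed into the Rademacher $\by_i$. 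Carrying this through in each of the $O(\log n)$ dyadic blocks should then yield the claimed robust inequality.

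The main obstacle I expect is performing this orientation-matching step \emph{inside} the $|\cdot|^s$ norm, rather than at the level of expectations. Term-by-term the conversion loses only a factor of $2$ in expectation, but because the summands are correlated inside $|\cdot|^s$, a more careful symmetrization is needed. I would handle this by introducing auxiliary Rademacher signs $\boldsymbol{\eta}_i$ in front of the indicators $\mathbb{1}[(\bx,\bx^{(i)}) \in G]$ and applying a contraction-principle or decoupling argument, then averaging over $\boldsymbol{\eta}$ to relate the oriented sum to the unrestricted gradient form up to an absolute constant independent of $n$ and $s$. A secondary point is to verify that the Naor--Schechtman argument, originally written for Banach-valued functions with $s = 1$, extends without loss to real-valued $f$ and general $s \geq 1$; this should be routine since the only structural ingredients used are the triangle inequality and Khintchine-type estimates, both of which hold in $L^s$ for every $s \geq 1$.
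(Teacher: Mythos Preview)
Your plan diverges from the paper's proof at a structural level, and the step you correctly flag as ``the main obstacle'' is in fact a genuine gap.

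First, a point of attribution: the Naor--Schechtman argument the paper follows is \emph{not} a martingale decomposition with dyadic blocks. It is Fourier-analytic: one applies the noise operator $T_\rho$ with $\rho=1-1/(n+1)$, dualizes by choosing $g$ with $\|g\|_q=1$ and $\langle T_\rho f,g\rangle=\|T_\rho f\|_s$, and then uses the integral representation $\rho^{|S|}=\Gamma(1+\gamma)^{-1}\int_0^\rho t^{|S|-1}|S|^{\gamma+1}(\log(\rho/t))^\gamma\,dt$, which yields the $\log n$ factor as $\gamma\to 0$. The orientation enters through an \emph{exact identity} at the level of the bilinear pairing,
\[
\Ex_{\bx,\by}\Big[g_{t,1-t}(\bx,\by)\sum_{i=1}^n \by_i\bx_i\,\lapl[i]\Delta^\gamma f(\bx)\Big]
\;=\;2\,\Ex_{\bx,\by}\Big[g_{t,1-t}(\bx,\by)\!\!\sum_{\substack{i:(\bx,\bx^{(i)})\in G}}\!\!\by_i\bx_i\,\lapl[i]\Delta^\gamma f(\bx)\Big],
\]
established by pairing the two endpoints of each edge and using the specific structure of the interpolant $g_{t,1-t}$. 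Crucially this happens \emph{before} any norm is taken: the expression is linear in $g$, so no symmetrization inside an absolute value is needed. Only afterwards does one apply H\"older with $\|g_{t,1-t}\|_q\le 1$ to pass to the $L^s$-norm of the oriented sum.

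Your route instead tries to insert the orientation directly inside $\Ex_{\bx,\by}\big|\sum_i\cdots\big|^s$, and the proposed contraction fix does not work. After Khintchine in $\by$, the comparison reduces to $\Ex_\bx\big[(\sum_{i:\text{out}}(\lapl[i] f)^2)^{s/2}\big]$ versus $\Ex_\bx\big[(\sum_i(\lapl[i] f)^2)^{s/2}\big]$. These share the same first moment, but for $s\neq 2$ an adversarial orientation can make the oriented quantity much smaller: take $f$ equal to $a$ at a single vertex $v_0$ and $b$ elsewhere, and orient every edge incident to $v_0$ toward $v_0$. Then the unoriented gradient-square at $v_0$ is $n(a-b)^2/4$ while the oriented out-sum never exceeds $(a-b)^2/4$, so for large $s$ the two $L^{s/2}$ averages differ by a factor growing like $n^{s/2}$. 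No auxiliary Rademacher $\boldsymbol\eta$ or contraction principle recovers this, because the robust inequality is genuinely \emph{stronger} than the unoriented one and cannot be deduced from it by a post-hoc comparison. The duality step is what lets the paper sidestep this entirely.
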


The proof itself follows the template of~\cite[Theorem 2]{NaorS02} and checks that the necessary changes still give the desired inequality. Before beginning with the proof, we recall some basic notions of Fourier analysis on the hypercube, and define some elements which appear in the argument. Recall that any function $f \colon \bool^n \to \R$ has a unique Fourier expansion \smash{$f(x) = \sum_{S \subset [n]} \hat{f}(S) \chi_{S}(x)$}, where \smash{$\chi_{S}(x) = \prod_{i \in S} x_i$} are the Fourier characters, and \smash{$\hat{f}(S) = \Ex_{\bx \sim \bool^n}[f(\bx) \chi_{S}(x)]$}. For $\rho > 0$ and $x \in \bool^n$, we let $N_{\sigma}(x)$ be the distribution supported on $\bool^n$ given sampling $\by \sim N_{\sigma}(x)$, where for each $i \in [n]$, we set $\by_i = \bx_i$ with probability $1-\sigma$, and a uniform random bit with probability $\sigma$. We denote $T_{\rho} f(x) = \Ex_{\by \sim N_{\rho}(x)}[f(\by)]$, and we have for every $S \subset [n]$, $\hat{T_{\rho} f}(S) = \rho^{|S|} \hat{f}(S)$. In a slight abuse of notation, we consider for any $x, y \in \bool^n$ and $t \in [0,1]$, the distribution $N_{t,1-t}(x, y)$, supported on $\bool^n$, to be the distribution given by letting $\bz \sim N_{t, 1-t}(x, y)$ have each $i \in [n]$ set to $\bz_i = x_i$ with probability $t$ and $\bz_i = y_i$ otherwise. For a function $g \colon \bool^n \to \R$ and $t \in [0,1]$, the function $g \colon \bool^n \times \bool^n\to \R$ is given by letting $g_{t, 1-t}(x, y) = \Ex_{\bz \sim N_{t, 1-t}(x, y)}[g(\bz)] = \sum_{S \subset [n]} \hat{g}(S) \prod_{i \in S}(tx_i + (1-t) y_i)$. Lastly, for any $\gamma > 0$, we let $\Delta^{\gamma} f$ be the linear operator given by $\Delta^{\gamma} f(x) = \sum_{S \subset [n]} \hat{f}(S) |S|^{\gamma} \chi_{S}(x)$.\medskip

\begin{proofof}{\cref{thm:robust-pisier}}
Let $\rho = 1 - 1/(n+1)$ and $q \in [1, \infty]$ such that $\frac{1}{s} + \frac{1}{q} = 1$. Fix the function $g \colon \bool^n \to \R$ with $\|g\|_{q} = 1$ satisfying $\langle T_{\rho} f, g \rangle = \|T_{\rho} f\|_{s}$. We have
\begin{align}
\rho^n \|f\|_s &\leq \|T_{\rho}f\|_s = \langle T_{\rho} f, g \rangle = \sum_{\substack{S \subset [n] \\ S \neq \emptyset}} \rho^{|S|} \hat{f}(S) \hat{g}(S) \nonumber \\ 
&= \frac{1}{\Gamma(1+\gamma)} \int_{0}^{\rho} \left( \sum_{\substack{S \subset [n] \\ S \neq \emptyset}} t^{|S| - 1} |S|^{\gamma + 1} \hat{f}(S) \hat{g}(S) \right) \left( \log(\rho/t) \right)^{\gamma} dt \label{eq:pisier-2}\\
&= \frac{1}{\Gamma(1+\gamma)} \int_{0}^{\rho} \frac{1}{1-t} \Ex_{\bx , \by \sim \bool^n}\left[ g_{t, 1-t}(\bx, \by) \sum_{i=1}^n \by_i \bx_i \lapl[i] \Delta^{\gamma} f(\bx)\right] \left( \log(\rho/t)\right)^{\gamma} dt, \label{eq:pisier-3}
\end{align}
where \eqref{eq:pisier-2} follows from writing $\rho^{|S|} = \frac{1}{\Gamma(1+\gamma)} \int_{0}^{\rho} t^{|S|-1} |S|^{\gamma+1} \log(\rho/t)^{\gamma} dt$, and \eqref{eq:pisier-3} follows from the Fourier expansion of $g_{t, 1-t}(x, y) \sum_{i=1}^n y_i x_i \lapl[i] \Delta^{\gamma} f(x)$. The main step is obtaining \eqref{eq:pisier-4} below:
\begin{align}
\Ex_{\bx, \by \sim \bool^n}\left[ g_{t, 1-t}(\bx, \by) \sum_{i=1}^n \by_i \bx_i \lapl[i] \Delta^{\gamma} f(\bx)\right] &= 2\Ex_{\bx, \by \sim \bool^n}\left[ g_{t, 1-t}(\bx, \by)\sum_{\substack{i \in [n] \\ (\bx, \bx^{(i)})\in G}} \by_i \bx_i \lapl[i] \Delta^{\gamma} f(\bx) \right] \label{eq:pisier-4} \\
&\leq 2\left(\Ex_{\bx , \by \sim \bool^n}\left[ \left| \sum_{\substack{i\in[n] \\ (\bx, \bx^{(i)})\in G}} \by_i \bx_i \lapl[i] \Delta^{\gamma} f(\bx) \right|^s\right]\right)^{1/s} , \label{eq:pisier-5}
\end{align}
since \eqref{eq:pisier-5} follows from \eqref{eq:pisier-4} by noting that $\|g_{t, 1-t}\|_{q} \leq 1$. We may now proceed substituting \eqref{eq:pisier-5} into \eqref{eq:pisier-2}. For $\rho = 1 - 1/(n+1)$, we have that for $\gamma$ approaching $0$, we have
\begin{align*}
\frac{1}{\Gamma(1+\gamma)} \int_{0}^{\rho} \frac{\log(\rho/t)^{\gamma}}{1-t} dt \lsim \log n,
\end{align*}
so we obtain 
\begin{align*}
\rho^{n} \|f\|_s \lsim \log n \left(\Ex_{\bx , \by \sim \bool^n}\left[ \left| \sum_{\substack{i\in[n] \\ (\bx,\bx^{(i)})\in G}} \by_i \bx_i \lapl[i] \Delta^{\gamma} f(\bx) \right|^s\right]\right)^{1/s} ,
\end{align*}
and the right-hand side approaches the desired quantity, while the left-hand side is independent of $\gamma$. It thus remains to show \eqref{eq:pisier-4}. For simplicity, let $h_i = \lapl[i] \Delta^{\gamma} f$. We consider summing the edges of $\calH$ according to the orientation of the edge:
\begin{align}
&\Ex_{\bx, \by \sim \bool^n}\left[ g_{t, 1-t}(\bx, \by) \sum_{i=1}^n \by_i \bx_i h_i(\bx)\right] \label{eq:pisier-6}\\
&\qquad\qquad= \Ex_{\bx\sim\bool^n}\left[ \sum_{\substack{i \in [n] \\ (\bx,\bx^{(i)})\in G}} \Ex_{\by \sim \bool^n}\left[ \by_i \left( g_{t, 1-t}(\bx, \by) h_i(\bx) - g_{t, 1-t}(\bx^{(i)}, \by) h_i(\bx^{(i)})\right) \right]  \right] \nonumber
\end{align}
Now, for a fixed $x$ and $i$, we have
\begin{align*}
&\Ex_{\by \sim \bool^n}\left[ \by_i \left(g_{t, 1-t}(x, \by) h_i(x) - g_{t, 1-t}(x^{(i)}, \by) h_i(x^{(i)}) \right)\right] \\ &\hspace{3cm}= \Ex_{\by \sim \bool^n}\left[ \by_i g_{t,1-t}(x, \by)\left(h_i(x) - h_i(x^{(i)})\right) \right].
\end{align*}
This is because when expanding the terms in $g_{t, 1-t}(x, \by) = \Ex_{\bz\sim N_{t,1-t}(x, \by)}[g(\bz)]$ two things may happen. 1) Either $\bz_i = \by_i$ (for $(1-t)$-fraction of terms), in which case the terms in $\Ex_{\bz \sim N_{t,1-t}(x, \by)}[g(\bz)]$ and $\Ex_{\bz \sim N_{t, 1-t}(x^{(i)}, \by)}[g(\bz)]$ are the same, and these are scaled by $h_i(x)$ and $-h_i(x^{(i)})$, respectively. Or, 2) $\bz_i = x^{(i)}$, in which case terms in $\Ex_{\bz \sim N_{t,1-t}(x, \by)}[g(\bz)]$ and $\Ex_{\bz\sim N_{t, 1-t}(x^{(i)}, \by)}[g(\bz)]$ are scaled by $h_i(x)$ and $-h_i(x^{(i)})$, respectively. However, in this case, these terms are all \emph{independent} of $\by_i$, and thus are added and subtracted for an overall contribution of zero. With these considerations, \eqref{eq:pisier-6} becomes
\begin{align*}
\Ex_{\bx, \by \sim \bool^n}\left[ g_{t, 1-t}(\bx, \by)\sum_{i=1}^n \by_i\bx_i h_i(\bx)\right] &= \Ex_{\bx, \by \sim \bool^n}\left[ \sum_{\substack{i \in [n] \\ (\bx,\bx^{(i)})\in G}} \by_i g_{t, 1-t}(\bx, \by) \left(h_i(\bx) - h_i(\bx^{(i)})\right) \right]
\end{align*}
and since $h_i(\bx) - h_i(\bx^{(i)}) = 2\Delta^{\gamma} \lapl[i] f(\bx)$, we obtain the desired bound.
\end{proofof}

\subsection{Plan of the Proof of Lemma \ref{lem:main-tech}}

Let $t\in [n-1]$ be the parameter in the statement of Lemma \ref{lem:main-tech}.
For clarity, the rest of this section will always use $T$ to denote a size-$t$ subset of $[n]$ and $S$ to denote a size-$(t+1)$ subset of $[n]$.
For each size-$t$ subset $T$ of $[n]$, we write 
  $\alpha(T) \eqdef \dtv(\p_{\ol{T}}, \Uniform).$
Then $\alpha = \Ex_{\bT \sim \calS(t)}[\alpha(\bT)]$.
We also write $\smash{\calH( {T})}$ to denote the undirected graph over 
  the hypercube $\smash{\{-1,1\}^{\ol{T}}}$ that consists of undirected edges $\smash{\{x , x^{(i)}\}}$ with $x\in \{-1,1\}^{\ol{T}}$ and $i\in \ol{T}$.
Again we will abuse the notation in the rest of the section to 
  refer to  $\calH( {T})$ as its edge set as well.  

The proof of Lemma \ref{lem:main-tech} consists of two steps. 
For each $t$-subset $T$, we first classify undirected edges $\{x,x^{(i)}\}$ in $\calH( {T})$ into 
  different types according its \emph{weight} defined as 
\[   w\big(\{x, x^{(i)}\}\big) \eqdef \dfrac{\big| p_{\ol{T}}(x) - p_{\ol{T}}(x^{(i)}) \big|}{\max\left\{ p_{\ol{T}}(x), p_{\ol{T}}(x^{(i)})\right\}}.\]
For each type of edges in $\calH( {T})$, we describe a method in Section \ref{sec:color} to assign 
  each edge a direction.
This then leads to a sequence of directed graphs over $\smash{\{-1,1\}^{\ol{T}}}$,
  one for each type of edges in $\smash{\calH( {T})}$,
  and their union is an orientation $G$ of $\calH({T})$ over $\smash{\{-1,1\}^{\ol{T}}}$.
At the end of Section \ref{sec:color} we apply the robust Pisier inequality (Theorem~\ref{thm:robust-pisier}) on 
  a shifted and scaled version of the probability mass function of $p_{\overline{T}}$  
   with the orientation $G$ and $s=1$. The result will be Lemma~\ref{lem:bucket},
 which says there is a type of edges in $\calH( {T})$
  such that its corresponding directed graph has
\begin{equation}\label{haha5}
\Ex_{\bx\sim p_{\overline{T}}} \left[\sqrt{\text{out-degree of $\bx$}}\right]
\end{equation}
bounded from below by a quantity that is linear in $\alpha(T)$; see Lemma \ref{lem:bucket} for details.

In the second step, we use this family of directed graphs promised by Lemma \ref{lem:bucket},
  one for each $t$-subset $T$, with the desired bound on   
  (\ref{haha5}) to finish the proof of Lemma \ref{lem:main-tech}.
To this end we first apply standard bucketing arguments in Section \ref{sec:bucketing}
  to simplify the situation, by focusing on one specific type of directed edges
  that makes the most significant contribution in the family. 
The final connection from these directed graphs to the mean vectors of randomly restricted distributions
  is made in Sections \ref{sec:graphs-to-mean}, \ref{sec:case1}, and \ref{sec:case2}. 
There will be two cases,
  depending on whether the type of edges we consider has large or small weights.
They are handled in Sections \ref{sec:case1} and \ref{sec:case2}, respectively.
 
\subsection{From Total Variation to Directed Graphs}\label{sec:color}

\def\outdegree{\mathrm{outdeg}}

Let $\ell$ be a probability distribution over $\{-1,1\}^m$ with $m=n-t$.
(Later we will identify $\ell$ as $p_{\overline{T}}$ for some $t$-subset $T$ of $[n]$,  and 
  $\{-1,1\}^m$ as $\smash{\{-1,1\}^{\overline{T}}}$.)
Let $\calH$  denote the undirected graph over $\{-1,1\}^m$
    that consists of undirected edges $\{x,x^{(i)}\}$ for all $x\in \{-1,1\}^m$ and $i\in [m]$.
Looking ahead, the purpose of this section is to construct
  an orientation $G$ of $\calH$ 
  for our application of Theorem \ref{thm:robust-pisier} later with $s=1$ and the function  
  $f \colon \{-1,1\}^{m} \to [-1, \infty)$ given by:
\begin{align}
 f(y) = 2^{m} \cdot \ell(y) - 1.  \label{eq:bias-function} 
 \end{align}
Note that $\Ex_{\by}[f(\by)]=0$ and the left hand side of the robust Pisier inequality is
  $2\hspace{0.02cm}\dtv(\ell,\Uniform)$.

For the construction of $G$,
  we start with a classification of undirected edges in $\calH$.

\begin{definition}
An undirected edge $\{x, x^{(i)}\} \in \calH$ is said to be a \emph{zero} edge if 
  $ \ell(x)=\ell(x^{(i)})$ (they are called zero edges 
  because the difference $\ell(x)-\ell(x^{(i)})=0$).

For each nonzero edge $\{x,x^{(i)}\}\in \calH$, we define its \emph{weight} as
$$
  w(\{x, x^{(i)}\}) \eqdef \dfrac{\big| \ell(x) - \ell(x^{(i)}) \big|}{\max\left\{ \ell(x), \ell(x^{(i)})\right\}}.
$$
Note that the weight 
  of an undirected edge is always in $(0,1]$.
We say a nonzero edge is \emph{uneven} if its weight is at least $2/3$; otherwise, we call
  it an \emph{even} edge (i.e., any nonzero edge with weight smaller than $2/3$).
We say an even edge is at \emph{scale} $\kappa$ for some $\kappa\ge 1$ if 
\[ 2^{-\kappa }< w\big(\{x,x^{(i)}\}\big)  \leq 2^{-\kappa+1 }. \]

We write {$\calH^{[z]}$ to denote the set of all zero edges,} $\calH^{[u]}$ to denote the set of all uneven edges, and $\calH^{[\kappa]}$ for each $\kappa\ge 1$ to denote the set of even edges at scale $\kappa$.
Hence $\calH^{[z]}$, $\calH^{[u]}$ and $\calH^{[\kappa]}$ with $\kappa\ge 1$ together form a
  partition of $\calH$; we also view them as undirected graphs over $\{-1,1\}^m$.
\end{definition}

Next we construct a sequence of directed graphs $G^{[z]}$, $G^{[u]}$ and $G^{[\kappa]}$, $\kappa\ge 1$, 
  as orientations of $\calH^{[z]}$, $\calH^{[u]}$ and $\calH^{[\kappa]}$, respectively.
We start with $G^{[z]}$ and $G^{[u]}$.
For each zero edge $\{x,x^{(i)}\}\in \calH^{[z]}$, 
  we orient it arbitrarily in $G^{[z]}$.~Next for each uneven edge $\{x,x^{(i)}\}\in \calH^{[u]}$, we orient  
  it from $x$ to $x^{(i)}$ if $\ell(x)>\ell(x^{(i)})$ (note that if $\ell(x)=\ell(x^{(i)})$ then it is 
  a zero edge).

Orientations of even edges at scale $\kappa$ in $G^{[\kappa]}$ are more involved.
For a fixed $\kappa\ge 1$, we consider 
  $\calH^{[\kappa]}$ as an undirected graph over $\{-1,1\}^m$.
We will consider a bijection $\varrho_{\kappa} \colon \bool^{m} \to [2^m]$ as an ordering
  of vertices in $\bool^m$ (so $x$ is the $\varrho_{\kappa}(x)$-th vertex in the ordering) such that the following property holds:
For every $i\in [2^m-1]$, the degree of $\varrho_{\kappa}^{-1}(i)$ has the largest 
  degree among all vertices 
  in the subgraph of $\calH^{[\kappa]}$ induced by $\{\varrho_{\kappa}^{-1}(j):j\ge i\}$.
Such a bijection exists, e.g., by keeping deleting vertices one by one and each time 
  deleting the one with the largest degree in the remaining graph (with tie breaking done arbitrarily).
We fix such a bijection $\varrho_{\kappa}$ and use it to orient edges in $G^{[\kappa]}$ as follows:
  For each undirected $\{x,x^{(i)}\}\in \calH^{[\kappa]}$, we orient it from $x$ to $x^{(i)}$ if 
  $\varrho_{\kappa}(x)<\varrho_{\kappa}(x^{(i)})$ and orient it from $x^{(i)}$ to $x$ otherwise.
As a result, every $(x,x^{(i)})\in G^{[\kappa]}$ has $\varrho_{\kappa}(x)<\varrho_{\kappa}(x^{(i)})$.    

{The above orientation will effectively streamline an argument from \cite[Section 6]{KhotMS18}.} We record the property needed later for the orientation $G^{[\kappa]}$ of $\calH^{[\kappa]}$:

\begin{lemma}\label{lem:greedy}
Let $U$ be a set of vertices in $\{-1,1\}^m$ and let $v\in \{-1,1\}^m\setminus U$.
If the out-degree of every vertex $u\in U$ in $G^{[\kappa]}$ is bounded from above by a positive integer $g$,
  then the number of directed edges $(u,v)$ from a vertex $u\in U$ to $v$ in $G^{[\kappa]}$ is also at most $g$.
\end{lemma}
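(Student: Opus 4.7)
The plan is to exploit the defining property of the greedy ordering $\varrho_\kappa$: at every step $i$, the vertex $\varrho_\kappa^{-1}(i)$ has maximum degree in the subgraph of $\calH^{[\kappa]}$ induced by $\{\varrho_\kappa^{-1}(j) : j \geq i\}$. Combined with the orientation rule (every edge in $G^{[\kappa]}$ points from a smaller $\varrho_\kappa$-value to a larger one), this immediately gives the identity: the out-degree of any vertex $w$ in $G^{[\kappa]}$ equals the degree of $w$ in the subgraph of $\calH^{[\kappa]}$ induced by $\{\varrho_\kappa^{-1}(j) : j \geq \varrho_\kappa(w)\}$.

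First, I would list the edges of interest. Each directed edge $(u,v) \in G^{[\kappa]}$ with $u \in U$ satisfies $\varrho_\kappa(u) < \varrho_\kappa(v)$. Let $u^\ast \in U$ be the vertex minimizing $\varrho_\kappa(u^\ast)$ among all $u \in U$ with $(u,v) \in G^{[\kappa]}$ (the statement is vacuous if no such $u$ exists), and set $i^\ast = \varrho_\kappa(u^\ast)$. Let $H^\ast$ denote the subgraph of $\calH^{[\kappa]}$ induced on $V^\ast = \{\varrho_\kappa^{-1}(j) : j \geq i^\ast\}$.

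Next I would chain two inequalities. By the observation above, the degree of $u^\ast$ in $H^\ast$ equals the out-degree of $u^\ast$ in $G^{[\kappa]}$, hence is at most $g$ by hypothesis. Since $u^\ast = \varrho_\kappa^{-1}(i^\ast)$, the defining property of $\varrho_\kappa$ says $u^\ast$ has the largest degree in $H^\ast$, so every vertex of $V^\ast$ (in particular $v$, which lies in $V^\ast$ since $\varrho_\kappa(v) > i^\ast$) has $H^\ast$-degree at most $g$.

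Finally, I would identify the count of interest as a lower bound on the $H^\ast$-degree of $v$. By the minimality of $i^\ast$, every $u \in U$ contributing an edge $(u,v)$ satisfies $\varrho_\kappa(u) \geq i^\ast$, so $u \in V^\ast$; and $u$ is a neighbor of $v$ in $\calH^{[\kappa]}$. Therefore the number of such $u$'s is bounded by $\deg_{H^\ast}(v) \leq g$, proving the lemma. I don't anticipate a real obstacle here, since the argument is essentially a direct unpacking of the greedy construction; the only subtle point is being careful that $v$ itself need not be in $U$, which is exactly why we minimize over $U$ rather than over all in-neighbors of $v$.
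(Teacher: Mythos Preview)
Your proposal is correct and follows essentially the same approach as the paper: both arguments identify an appropriate earliest vertex in the greedy ordering, use the key identity that out-degree in $G^{[\kappa]}$ equals degree in the induced subgraph at that step, and then invoke the maximality property of $\varrho_\kappa$ to bound the degree of $v$. The only cosmetic difference is that the paper minimizes $\varrho_\kappa$ over $U\cup\{v\}$ (splitting into the case where $v$ is earliest, giving count zero), whereas you minimize over the in-neighbors of $v$ within $U$ (handling the empty case as vacuous); both choices lead to the same conclusion by the same mechanism.
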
  
\begin{proof}
Consider the vertex with the smallest $\varrho_\kappa(\cdot)$ value in $U\cup \{v\}$.
If it is $v$, then every undirected $\{u,v\}$ with $u\in U$,
  if any, in $\calH^{[\kappa]}$ is oriented as $(v,u)$ in $G^{[\kappa]}$. 
So the number we care about is $0$.

Otherwise, let $u\in U$ be the vertex with the smallest $\varrho_{\kappa}(\cdot)$ value among $U\cup \{v\}$.
Then at the time when $u$ is picked, all vertices $U\cup \{v\}$ remain in current undirected subgraph of $\calH^{[\kappa]}$, denoted by $H$.
At this moment, the degree of $u$ in $H$ is exactly its out-degree in $G^{[\kappa]}$,
  which by assumption is at most $g$.
On the other hand, by the choice of $u$, $v$ has degree at most $g$ in $H$.
Since the whole set $U$ remains in $H$,
  the number of undirected edges $\{u,v\}$, $u\in U$, in $\calH^{[\kappa]}$ is at most $g$. 
Even if all of them are oriented towards $v$ in $G^{[\kappa]}$, the number we care about in the lemma
  is at most $g$.
\end{proof}\medskip

With $G^{[z]}$, $G^{[u]}$ and $G^{[\kappa]}$ ready, we finally define $G$ to be 
  the union of these graphs, which is an orientation of $\calH$ over $\{0,1\}^m$.
Applying the robust Pisier inequality on $f$, $G$ and $s=1$, we have
\begin{align}
\frac{\dtv(\ell,\Uniform)}{\log n} &\lsim 
 \Ex_{\bx, \by \sim \bool^m}\left[ \hspace{0.06cm}\left| \sum_{\substack{i \in [m] \\ (\bx,\bx^{(i)})\in G}} \by_i \bx_i \lapl[i]  f(\bx)\right|\hspace{0.06cm}\right]  
				  \leq \Ex_{\bx \sim \{-1,1\}^{m}}\left[ \sqrt{\sum_{\substack{i\in [m]\\ (\bx,\bx^{(i)})\in G}} \big(\lapl[i] f(\bx) \big)^2  } \right].
\end{align}
The second inequality is Khintchine's, which implies that for any vector $a \in \R^m$, $$\Ex_{\by\sim\{-1,1\}^m}\left[\hspace{0.05cm} \left|\sum_{i\in [m]} \by_i a_i\right|\hspace{0.05cm}\right] \leq \sqrt{\sum_{i\in [m]} a_i^2}.$$ 
Letting $G'$ be the directed graph that contains the union of edges in $G^{[u]}$ and $G^{[\kappa]}$,
  $\kappa\in \mathbb{Z}_{\ge 0}$, but not those in $G^{[z]}$, we can continue the inequality above to have
$$
\hspace{-0.2cm}\Ex_{\bx \sim \{-1,1\}^{m}}\left[ \sqrt{\sum_{\substack{i\in [m]\\ (\bx,\bx^{(i)})\in G}} \big(\lapl[i] f(\bx) \big)^2  } \right] 
= \Ex_{\bx \sim \ell}\left[ \sqrt{\sum_{\substack{i\in [m] \\ (\bx,\bx^{(i)})\in G'}} \left(\frac{\lapl[i]  f(\bx)}{1 + f(\bx)} \right)^2 }\right]
= \Ex_{\bx \sim \ell}\left[ \sqrt{\sum_{\substack{i\in [m] \\ (\bx,\bx^{(i)})\in G'}} \left(\frac{\lapl[i]  \ell(\bx)}{\ell(\bx)} \right)^2 }\right].
$$
For the first equation we note that zero edges do not contribute anything and utilize
  importance sampling, by noting that $\ell(x) = (1+f(x))/2^m$.
Also note we never run into a situation of $0/0$ in the second expectation because
  if $(x,x^{(i)})\in G'$ has $\ell(x)=0$, then either $\ell(x^{(i)})=0$ and it is a zero edge
  that should have been excluded from $G'$, or $\ell(x^{(i)})>0$ and $\{x,x^{(i)}\}$ is uneven.
Then by the construction of $G^{[u]}$ we have $(x^{(i)},x)\in G$ instead of $(x,x^{(i)})$.

The next lemma connects the sum for each $x\in \{-1,1\}^m$ in the last expectation
  with its out-degrees in the directed graphs $G^{[u]}$ and $G^{[\kappa]}$ constructed.

\begin{lemma}\label{lem:uneven-to-degree}
For every $x \in \{-1,1\}^m$, 
\begin{align*} 
\sum_{\substack{i\in [m]\\ (x,x^{(i)})\in G'}} \left(\frac{\ell(x)-\ell(x^{(i)})}{\ell(x)}\right)^2
\le \outdegree\big(x,G^{[u]}\big)+\sum_{\kappa\ge 1} 2^{-2\kappa+6}\cdot \outdegree\big(x,G^{[\kappa]}\big).
\end{align*}
\end{lemma}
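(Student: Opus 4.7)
The plan is to partition the sum on the left-hand side according to which of the graphs $G^{[u]}, G^{[1]}, G^{[2]}, \ldots$ the out-edge $(x, x^{(i)})$ belongs to (recall $G'$ excludes zero edges, and the remaining directed subgraphs form a disjoint decomposition of $G'$). Then the whole statement reduces to a per-edge bound: I need to show each uneven out-edge of $x$ contributes at most $1$ to the sum, and each out-edge of $x$ in $G^{[\kappa]}$ contributes at most $2^{-2\kappa+6}$. The inequality then follows by summing over out-edges in each subgraph.

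For an uneven out-edge, the orientation rule in $G^{[u]}$ guarantees $\ell(x) > \ell(x^{(i)}) \ge 0$, so $(\ell(x)-\ell(x^{(i)}))/\ell(x) \in [0,1)$ and its square is at most $1$. That instantly gives the $\outdegree(x, G^{[u]})$ term.

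The main piece of work is the even case. Fix $(x, x^{(i)}) \in G^{[\kappa]}$, which has weight $w \in (2^{-\kappa}, 2^{-\kappa+1}]$ and in particular $w \le 2/3$ since it is even. The key observation is that $w \le 2/3$ forces $\min\{\ell(x), \ell(x^{(i)})\} \ge \tfrac{1}{3}\max\{\ell(x), \ell(x^{(i)})\}$: writing out the definition of $w$, $|\ell(x) - \ell(x^{(i)})| \le (2/3)\max\{\ell(x), \ell(x^{(i)})\}$, so the smaller value is at least $(1/3)$ times the larger. In particular $\ell(x^{(i)}) \le 3\ell(x)$ regardless of the orientation (if $\ell(x)$ itself is the max this is trivial). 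Hence
\[
\frac{|\ell(x) - \ell(x^{(i)})|}{\ell(x)} \;\le\; \frac{|\ell(x)-\ell(x^{(i)})|}{\max\{\ell(x), \ell(x^{(i)})\}} \cdot \frac{\max\{\ell(x), \ell(x^{(i)})\}}{\ell(x)} \;\le\; w \cdot 3 \;\le\; 3 \cdot 2^{-\kappa+1},
\]
squared this is at most $9 \cdot 2^{-2\kappa+2} \le 2^{-2\kappa+6}$ (using $9 < 16$). No step is particularly delicate; the only thing to be careful about is the division-by-zero issue, which is precisely ruled out because $(x, x^{(i)}) \in G'$ together with $\ell(x) = 0$ would force the edge to be uneven (so oriented into $x$, not out of $x$ by the $G^{[u]}$ rule) or zero (excluded from $G'$), as the preceding paragraph in the excerpt already explained.

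Putting the two bounds together and summing over out-edges in each $G^{[u]}$ and $G^{[\kappa]}$ separately yields the stated inequality. There is no real obstacle — the lemma is essentially a bookkeeping step that packages the edge-weight classification into a per-vertex bound usable in the subsequent argument; the scale-dependent constant $2^{-2\kappa+6}$ is a loose upper bound coming from the factor $3$ above and the width of the scale interval.
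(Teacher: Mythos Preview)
Your proposal is correct and follows essentially the same approach as the paper: partition out-edges into $G^{[u]}$ and the $G^{[\kappa]}$'s, bound the contribution of an uneven out-edge by $1$ via the orientation rule $\ell(x)>\ell(x^{(i)})$, and for an even edge use $w<2/3$ to get $\max/\min\le 3$ (hence $\max/\ell(x)\le 3$) so that the ratio is at most $3w\le 3\cdot 2^{-\kappa+1}$, squared $\le 2^{-2\kappa+6}$. The only cosmetic difference is that the paper bounds via $w\cdot\max/\min\le 3w\le 2^{-\kappa+3}$ rather than your (slightly tighter) $w\cdot\max/\ell(x)$, but the argument and resulting constant are the same.
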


\begin{proof}
First note that each edge $(x,x^{(i)})\in G'$ is nonzero and   
  either lies in $G^{[u]}$ or $G^{[\kappa]}$ for some $\kappa\ge 1$.
If $(x,x^{(i)})$ is in $G^{[u]}$, then by the way we orient edges in $G^{[u]}$,
  we have $\ell(x)>\ell(x^{(i)})$ and 
  this implies that the contribution of each such edge to the sum is at most $1$.

Next assume that $(x,x^{(i)})\in G^{[\kappa]}$ for some $\kappa\ge 1$.
Since $\{x,x^{(i)}\}$ is even, we have   $\ell(x),\ell(x^{(i)})>0$ 
  (otherwise it is either zero or uneven).
Using $w(\{x,x^{(i)}\})<2/3$ (otherwise it is uneven), we have 
$$
\frac{\max\{\ell(x),\ell(x^{(i)})\}}{\min\{\ell(x),\ell(x^{(i)})\}}\le 3.
$$
As a result, we have
$$
\frac{|\ell(x) - \ell(x^{(i)})|}{\ell(x)}\le
w(\{x, x^{(i)}\}) \cdot \frac{\max\{ \ell(x), \ell(x^{(i)})\}}{\min\{ \ell(x) , \ell(x^{(i)})\}} \leq 3 w(\{x, x^{(i)}\}) \leq 2^{-\kappa + 3}.
$$
So the contribution of each such edge is at most $2^{-2\kappa+6}$ and we
  thus obtain the desired bound.
\end{proof}

We are now ready to prove the main lemma of this subsection:
\begin{lemma}\label{lem:bucket}
Letting $\beta=\dtv(\ell,\Uniform)$, one of the following two conditions must hold:\vspace{0.1cm}
\begin{itemize}
\item Either the directed graph $G^{[u]}$ of uneven edges satisfies:
\begin{align*}
\Ex_{\bx \sim \ell}\left[ \sqrt{\outdegree (\bx, G^{[u]})} \right]  \gsim \frac{\beta}{\log n} 
\end{align*}
\item Or, there exists a $\kappa \in [O(\log(n/\beta))]$ such that the directed graph $G^{(\kappa)}$ satisfies:
\begin{align*}
\Ex_{\bx \sim \ell}\left[ \sqrt{\outdegree(\bx, G^{(\kappa)} )}\right] &\gsim \frac{2^{\kappa} \cdot \beta }{\log n\cdot  \log(n/\beta )}. 
\end{align*}
\end{itemize}
\end{lemma}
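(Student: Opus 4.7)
}
The plan is to combine the robust Pisier inequality (applied to $f$ and $G$ just before the lemma statement, together with Khintchine) with the bound in Lemma \ref{lem:uneven-to-degree}, and then extract the dominant term by a truncation-and-bucketing argument. Concretely, chaining the two displays that precede Lemma \ref{lem:uneven-to-degree} with that lemma yields
\[
\frac{\beta}{\log n}\;\lesssim\;\Ex_{\bx\sim \ell}\!\left[\sqrt{\outdegree(\bx,G^{[u]})+\sum_{\kappa\ge 1} 2^{-2\kappa+6}\cdot\outdegree(\bx,G^{[\kappa]})}\,\right].
\]
Applying the subadditivity $\sqrt{a+b}\le \sqrt{a}+\sqrt{b}$ inside the expectation and then linearity of expectation gives
\[
\frac{\beta}{\log n}\;\lesssim\;\Ex_{\bx\sim\ell}\!\left[\sqrt{\outdegree(\bx,G^{[u]})}\right] + \sum_{\kappa\ge 1} 2^{-\kappa+3}\cdot\Ex_{\bx\sim\ell}\!\left[\sqrt{\outdegree(\bx,G^{[\kappa]})}\right].
\]

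Next I would truncate the infinite sum on the right. Since $\outdegree(\bx,G^{[\kappa]})\le m\le n$, for any threshold $K$ the tail contribution is at most $\sum_{\kappa>K} 2^{-\kappa+3}\sqrt{n}\lesssim 2^{-K}\sqrt{n}$. Choosing $K=\Theta(\log(n/\beta))$ large enough makes this tail at most, say, half of $\beta/\log n$, which can be absorbed into the left-hand side. Thus after truncation
\[
\frac{\beta}{\log n}\;\lesssim\;\Ex_{\bx\sim\ell}\!\left[\sqrt{\outdegree(\bx,G^{[u]})}\right] + \sum_{\kappa=1}^{K} 2^{-\kappa+3}\cdot\Ex_{\bx\sim\ell}\!\left[\sqrt{\outdegree(\bx,G^{[\kappa]})}\right].
\]

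Finally I would do a straightforward pigeonhole over the $K+1$ terms on the right. Either the $G^{[u]}$ term alone is at least a constant fraction of $\beta/\log n$, which is exactly the first alternative in the lemma; or one of the $K=O(\log(n/\beta))$ terms in the sum dominates, yielding some $\kappa\in[K]$ with
\[
2^{-\kappa+3}\cdot\Ex_{\bx\sim\ell}\!\left[\sqrt{\outdegree(\bx,G^{[\kappa]})}\right]\;\gtrsim\;\frac{\beta}{\log n\cdot\log(n/\beta)},
\]
and rearranging produces the second alternative. The only step requiring any care is the calibration of the truncation threshold $K$ (so that the tail is truly negligible relative to $\beta/\log n$ while keeping $K=O(\log(n/\beta))$); the rest is purely mechanical once the robust Pisier bound and Lemma \ref{lem:uneven-to-degree} are in hand.
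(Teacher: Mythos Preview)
Your proposal is correct and follows essentially the same approach as the paper: chain robust Pisier plus Khintchine with Lemma~\ref{lem:uneven-to-degree}, split the square root via subadditivity, truncate the tail using the trivial out-degree bound $\le n$ at $K=\Theta(\log(n/\beta))$, and pigeonhole over the remaining $K+1$ terms. The paper's proof is the same (it phrases the split as ``concavity of $\sqrt{\cdot}$'' but uses exactly the subadditivity you state), so there is nothing to add.
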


\begin{proof}
It follows from Lemma \ref{lem:uneven-to-degree} that
\begin{align}
\nonumber\frac{\beta}{\log n}&\lsim \Ex_{\bx\sim \ell}\left[\sqrt{\outdegree(\bx,G^{[u]})+\sum_{\kappa\ge 1}
  2^{-2\kappa+2}\cdot \outdegree(\bx,G^{[\kappa]})}\right]\\
  &\le \Ex_{\bx\sim \ell}\left[\sqrt{\outdegree(\bx,G^{[u]})}\right]+\sum_{\kappa=1}^{O(\log (n/\beta))}
  2^{-\kappa+1}\cdot \Ex_{\bx\sim \ell}\left[ \sqrt{\outdegree(\bx,G^{[\kappa]})}\right]
  +o\left(\frac{\beta}{\log n}\right).\label{eqeqeq}
\end{align}
In (\ref{eqeqeq}) we used the concavity of $\sqrt{\cdot}$ as well as the fact that the degrees are always bounded by $n$.
Lemma \ref{lem:bucket} then follows from (\ref{eqeqeq}).
\end{proof}

\subsection{Bucketing}\label{sec:bucketing}

We now start the proof of Lemma \ref{lem:main-tech}.
For each $t$-subset $T$ of $[n]$, let
$\alpha(T)=\dtv(p_{\ol{T}},\Uniform)$ and thus, $\alpha=\Ex_{\bT\sim \calS(t)}[\alpha(T)]$. 
For each $T$, we partition undirected edges in $\calH( {T})$ into 
  $\smash{\calH^{[z]}( {T})}$ (zero edges), $\smash{\calH^{[u]}( {T})}$ (uneven edges), and 
  $\smash{\calH^{[\kappa]}( {T})}$ (even edges at scale $\kappa\ge 1$).
We orient these edges to obtain directed graphs $\smash{G^{[u]}( {T})}$ and $\smash{G^{[\kappa]}( {T})}$.
We apply Lemma \ref{lem:bucket} on $p_{\ol{T}}$ to conclude that one of the two conditions 
  holds for either  $\smash{G^{[u]}(T)}$ or one of the graphs $\smash{G^{[\kappa]}(T)}$, $\kappa\in [O(\log (n /\alpha(T)))]$.

Since $\alpha(T) \in [0, 1]$, there exists a  
$\zeta>0$ such that with probability at least $\zeta$ over $\bT \sim \calS(t)$, 
\[ \alpha(\bT) \gsim \frac{\alpha}{\zeta \log(1/\alpha)}. \]
Therefore, via another bucketing argument and Lemma~\ref{lem:bucket}, there exist two cases:
\begin{flushleft}\begin{itemize}
\item \textbf{Case 1}: With probability at least $\zeta / 2$ over the draw of $\bT \sim \calS(t)$, we have that the directed graph $G^{[u]}(\bT)$ of {uneven edges} of $p_{\ol{\bT}}$ over 
  $\smash{\{-1,1\}^{\ol{\bT}}}$ satisfies
\[ \Ex_{\bx \sim \p_{\ol{\bT}}}\left[ \sqrt{\outdegree\big(\bx, G^{[u]}(\bT)\big)} \right] \gsim \frac{\alpha}{\zeta \log n \log(1/\alpha)}.  \]
Since the out-degree is always between $0$ and $n$, there exist two parameters $d \in [n]$ and $\xi>0$
  such that with probability $\zeta / (2 \log n)$ over the draw of $\bT \sim \calS(t)$,  we have
\[ \Prx_{\bx \sim \p_{\ol{\bT}}}\left[ d \leq \outdegree(\bx, G^{[u]}(\bT)) \leq 2d \right] \geq \xi \]
and $\xi$ satisfies
\begin{align}
\sqrt{d} \cdot \xi &\gsim \frac{\alpha}{\zeta \log^2 n \log(1/\alpha)} .\label{eq:case-1-params}
\end{align}
\item \textbf{Case 2}: There exists a parameter $\kappa \in [O(\log(n/\alpha))]$ (using $\zeta\le 1$) such that with probability at least $\smash{\zeta / (2 \cdot O(\log(n/\alpha)))}$ over the draw of $\bT \sim \calS(t)$, the directed graph $G^{[\kappa]}(\bT)$ of {even edges} at scale $\kappa$ of
  $p_{\ol{\bT}}$ over $\smash{\{-1,1\}^{\ol{\bT}}}$ satisfies
\[ \Ex_{\bx \sim \p_{\ol{\bT}}}\left[ \sqrt{\outdegree(\bx, G^{[\kappa]}(\bT))}\right] \gsim \frac{\alpha \cdot 2^{\kappa}}{\zeta \log n \log(n/\alpha)\log(1/\alpha)}. \]
By a bucketing argument again, there exist parameters $d \in [n]$ and $\xi >0$
  such that with probability at least $\zeta / (2 \cdot \log n \cdot O(\log(n/\alpha)))$ over the draw of $\bT \sim \calS(t)$,  we have
\[ \Prx_{\bx \sim \p_{\ol{\bT}}}\left[ d \leq \outdegree(\bx, G^{[u]}(\bT)) \leq 2d \right] \geq \xi \]
and $\xi$ satisfies
\begin{align}
\sqrt{d} \cdot \xi \gsim \frac{\alpha \cdot 2^{\kappa}}{\zeta \log^2 n \log(n/\alpha)\log(1/\alpha)}. \label{eq:case-2-params}
\end{align}
\end{itemize}\end{flushleft} 
\subsection{From Directed Graphs to Mean Vectors}\label{sec:graphs-to-mean}

In this section, we will show the crucial connection between analyzing the family of graphs defined in Section~\ref{sec:color} and \ref{sec:bucketing} and mean vectors of restrictions of the distribution. Consider a fixed distribution $\p$ supported on $\{-1,1\}^n$ and let $t \in [n-1]$. We consider the family of directed graphs $G^{[u]}(T)$ and $G^{[\kappa]}(T)$, $\kappa\ge 1$, for each $t$-subset $T$ of $[n]$.

It will be convenient to represent directed edges of these directed graphs
  as $(y,i)  \in \{-1,1\}^{\ol{T}} \times \ol{T}$: we say
  $(y,i)$ is in a graph if it is the case for $(y,y^{(i)})$. 
Let $\pi=(\pi(1),\ldots,\pi(t+1))$ be an (ordered) sequence
  of $t+1$ distinct indices from $[n]$.
We use $S(\pi)$ to denote the corresponding $(t+1)$-subset $\{\pi(1),\ldots,\pi(t+1)\}$.
Given $\pi$ and $y\in \{-1,1\}^n$, we define a restriction $\rho(\pi,y)\in \{-1,1,*\}^n$ as 
\[ \rho(\pi,y)_{i} = \left\{\begin{array}{ll} * & i=\pi(j)\ \text{for some $j\in [t+1]$} \\[0.5ex]
							y_i & \text{otherwise} \end{array} \right. .\]
We will also consider sequences $\tau=(\tau(1),\ldots,\tau(t))$ of $t$ distinct indices
  from $[n]$; its corresponding set $S(\tau)$ and the restriction $\rho(\tau,y)$ given $y\in \{-1,1\}^n$
  are defined similarly.							
							
We consider a slightly different but equivalent way of 
  drawing $\brho$ from $\calD(t+1,\p)$ and $\brho'$ from $\calD(t,\p)$ which we will use to analyze Case 1 and Case 2 as specified in Section~\ref{sec:bucketing}.
We consider sampling $\brho \sim \calD(t+1, \p)$ according to the following procedure:
\begin{flushleft}\begin{enumerate}
\item First, sample a sequence of $t+1$ random indices $\bpi=(\bpi(1),\ldots,\bpi(t+1))$ uniformly from $[n]$
  without replacements (so the set $S(\bpi)$ can be viewed equivalently as drawn from $\calS(t+1)$).
\item Then, sample $\by \sim \p$.
\item Finally, set $\brho=\rho(\bpi,\by)$.
\end{enumerate}\end{flushleft}
Similarly we consider sampling $\brho'\sim \calD(t,\p)$ according to the following procedure:
\begin{flushleft}\begin{enumerate}
\item First, sample a sequence of $t$ random indices $\btau=(\btau(1),\ldots,\btau(t ))$ uniformly from $[n]$
  without replacements (so the set $S(\btau)$ can be viewed equivalently as drawn from $\calS(t )$).
\item Then, sample $\by \sim \p$.
\item Finally, set $\brho'=\rho(\btau,\by)$.
\end{enumerate}\end{flushleft}
A useful observation is that for each $i\in [t+1]$, the distribution of $(\bpi_{-i},\by)$ is the 
  same as that of $(\btau,\by)$, where $\bpi_{-i}$ denotes the $t$-sequence obtained from $\bpi$
  after removing 
  its $i$th entry. As a result, $\rho (\bpi_{-i},\by)$ is distributed according to $\calD(t,\p)$.

Next we state the lemma that gives the connection between graphs and mean vectors.

\begin{lemma}\label{lem:graph-to-mean}
Let $\pi$ be a $(t+1)$-sequence of distinct elements in $[n]$ and $y \in \{-1,1\}^n$. Then~for~every $i \in [t+1]$, we have
\begin{align}
 \left|\hspace{0.02cm}\mu\big(\p_{|\rho(\pi, y)}\big)_{\pi(i)}\right| &\geq \frac{1}{3} \cdot \ind\left\{ \big(y_{\ol{S(\pi_{-i})}}, \pi(i)\big) \in G^{[u]}( {S(\pi_{-i})})  \right\} \nonumber \\[0.4ex]
 	&\qquad\qquad  + \sum_{k \ge 1} 2^{- \kappa-1}  \cdot \ind\left\{ \big(y_{\ol{S(\pi_{-i})}}, \pi(i)\big) \in G^{[\kappa]}( {S(\pi_{-i})})  \right\} . \label{eq:graph-to-mean}
\end{align}
\end{lemma}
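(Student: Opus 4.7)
\medskip

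\noindent\textbf{Proof plan.} The starting point is to unpack $\mu(\p_{|\rho(\pi,y)})_{\pi(i)}$ as a ratio of two probability masses under the projected distribution $\p_{\ol{T}}$, where $T=S(\pi_{-i})=S(\pi)\setminus\{\pi(i)\}$. Since $\ol{T}=\ol{S(\pi)}\cup\{\pi(i)\}$, conditioning on $\bx_{\ol{S(\pi)}}=y_{\ol{S(\pi)}}$ and then looking at coordinate $\pi(i)$ is precisely the same as conditioning the two-point marginal of $\p_{\ol{T}}$ on the subcube consistent with $y_{\ol{S(\pi)}}$. Writing $a=\p_{\ol{T}}(y_{\ol{T}})$ and $b=\p_{\ol{T}}(y_{\ol{T}}^{(\pi(i))})$, a direct computation gives
\[
\bigl|\mu(\p_{|\rho(\pi,y)})_{\pi(i)}\bigr|=\frac{|a-b|}{a+b},
\]
assuming $a+b>0$ (if $a+b=0$, the right-hand side of \eqref{eq:graph-to-mean} will turn out to also be zero because the relevant edge must be a zero edge, so nothing to check).

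Next I would verify the bound case-by-case, using the mutual exclusivity of the indicators on the right-hand side (a directed edge in the disjoint union belongs to at most one of $G^{[u]}(T)$ and the $G^{[\kappa]}(T)$'s). Suppose first $(y_{\ol{T}},\pi(i))\in G^{[u]}(T)$. By construction of $G^{[u]}$, the orientation of an uneven edge goes from the endpoint of larger $\p_{\ol{T}}$-mass, so $a>b$, and ``uneven'' means $w=(a-b)/a\geq 2/3$, i.e.\ $b\leq a/3$. Substituting into $(a-b)/(a+b)$ yields at least $1/2\geq 1/3$, matching the uneven contribution on the right. Suppose instead $(y_{\ol{T}},\pi(i))\in G^{[\kappa]}(T)$. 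Then the weight $w=|a-b|/\max\{a,b\}$ lies in $(2^{-\kappa},2^{-\kappa+1}]$, and regardless of the $\varrho_\kappa$-based orientation the inequality $|a-b|/(a+b)\geq |a-b|/(2\max\{a,b\})=w/2>2^{-\kappa-1}$ gives exactly the scale-$\kappa$ contribution.

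Finally, if $(y_{\ol{T}},\pi(i))$ is a zero edge or its orientation in the appropriate graph is opposite, all indicators on the right-hand side vanish and \eqref{eq:graph-to-mean} holds trivially. Combining the three situations, \eqref{eq:graph-to-mean} follows coordinate-wise.

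\medskip

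\noindent\textbf{Main obstacle.} The argument is essentially bookkeeping once the right identification is made: the bulk of the work is recognizing that the restriction-conditional marginal of the coordinate $\pi(i)$ is governed by a single edge of the projected distribution's hypercube $\{-1,1\}^{\ol{T}}$, with $T=S(\pi_{-i})$. The only subtle point is keeping straight the three layers of objects (the original restriction $\rho(\pi,y)$, the removed index $\pi(i)$, and the projection $\p_{\ol{T}}$ onto which the directed graphs in Section~\ref{sec:color} were built), and handling the edge cases $a=0$ or $b=0$ gracefully by noting that they correspond either to uneven edges with the correct orientation (so $a>0=b$ and the bound $1/3$ still holds) or to zero edges (so the right-hand side vanishes).
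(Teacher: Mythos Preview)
Your proposal is correct and follows essentially the same approach as the paper: identify $T=S(\pi_{-i})$, express $|\mu(\p_{|\rho(\pi,y)})_{\pi(i)}|=|a-b|/(a+b)$ in terms of the two masses $a,b$ of the projected distribution $\p_{\ol{T}}$ at the endpoints of the edge $\{y_{\ol{T}},y_{\ol{T}}^{(\pi(i))}\}$, and then read off the bound from the edge's weight class. The paper uses the single inequality $w(\{z,z'\})\le 2|\mu(\ell)_{\pi(i)}|$ uniformly for both cases (so it gets $\ge 1/3$ in the uneven case without invoking the orientation), whereas you invoke the orientation of $G^{[u]}$ to get the slightly sharper $\ge 1/2$; this is a cosmetic difference, not a different route.
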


\begin{proof}
First recall that $\smash{G^{[u]}( {S(\pi_{-i})})}$ and $\smash{G^{[\kappa]}( {S(\pi_{-i})})}$ 
  are orientations of disjoint undirected edges. 
So the right-hand side of (\ref{eq:graph-to-mean}) is non-zero for at most one value. 

Let $\ell=p_{|\rho(\pi,y)}$, $T=S(\pi_{-i})$, $z = y_{\ol{T}}$ and $z' = z^{(\pi(i))}$. Writing 
\[ a = \Prx_{\bx \sim \p}\left[\bx_{\ol{T}} = z \right] \ \quad \text{and}\ \quad a' = \Prx_{\bx \sim \p}\left[ \bx_{\ol{T}} = z' \right],\]
we have $|\mu(\ell)_{\pi(i)}| = |a - a' | / (a + a')$. The weight $w( \{ z, z'\})$,   defined as $|a - a'| / \max\{a, a'\}$, is at most $2\cdot|\mu(\ell)_{\pi(i)}|$. If $\smash{(z, \pi(i)) \in G^{[u]}(T)}$ is uneven, then the weight is at least $2/3$ and thus,~$|\mu(\ell)_{\pi(i)}| \geq 1/3$. If $\smash{(z, \pi(i)) \in G^{[\kappa]}(T)}$ for some $\kappa$, then the weight is at least $2^{-\kappa }$ and $|\mu(\ell)_{\pi(i)}|\geq 2^{-\kappa - 1}$.
\end{proof}
 
\subsection{Case 1}\label{sec:case1}

In this section, we prove \cref{lem:main-tech} assuming that we are in the first case outlined in \cref{sec:bucketing}. As per \cref{sec:graphs-to-mean} and the first case described in \cref{sec:bucketing}, we consider a distribution $\p$ supported on $\{-1,1\}^n$, a parameter $t \in [n-1]$, and we focus on the family of directed graphs $\smash{G^{[u]}(T)}$, one for each size-$t$ subset $T$ of $[n]$.
We assume that there are parameters $\zeta', \xi$, and $d\ge 1$ such that with probability at least $\zeta'$ over the draw of $\bT \sim \calS(t)$, 
\begin{equation}\label{hehe6} 
\Prx_{\bx \sim \p_{\ol{\bT}}}\left[ d \leq \outdegree(\bx, G^{(u)}(\bT)) \leq 2d \right] \geq \xi. 
\end{equation}
Notice that we will set $\zeta' = \zeta / (2 \log n)$, so that (\ref{eq:case-1-params}) implies 
\begin{align} 
\sqrt{d} \cdot \xi \gsim \frac{\alpha}{\zeta' \log^3 n \log(1/\alpha)}. \label{eq:case-1-newvars}
\end{align}
In this case, we will show that~\eqref{eq:restrict-t} holds. The following definition and lemma will be useful:  
\begin{definition}
Let $\tau$ be a $t$-sequence of distinct indices from $[n]$ and let $y\in \{-1,1\}^n$.
We say the pair $(\tau,y)$ is \emph{\tc} if the restricted distribution $p_{|\rho(\tau,y)}$ satisfies
\begin{equation}\label{ufuf}
\left\|\mu\big(\p_{|\rho(\tau,y)}\big)\right\|_2 \geq \frac{\sqrt{d}}{20},
\end{equation} and 
  we say $(\tau,y)$ is \emph{\tpc} otherwise. 
\end{definition} 
  Note that the definition implies that a pair is either \tc or \tpc.

\begin{lemma}\label{hehe5}
Let $\pi$ be a $(t+1)$-sequence of distinct indices from $[n]$, and let $y\in \{-1,1\}^n$.
 If there are $d+1$ distinct indices $i_1,\ldots,i_{d+1} \in [t+1]$ such that 
\begin{equation}\label{haheha1}
\left(y_{\ol{S(\pi_{-i_k})}}, \pi(i_k) \right)\in G^{[u]}(S(\pi_{-i_k}))
\end{equation}
for all $k\in [d+1]$,
then $(\pi_{-i_k},y)$ is a \tc pair for all $k\in [d+1]$.
\end{lemma}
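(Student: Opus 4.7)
The plan is to fix an arbitrary $k \in [d+1]$, set $\tau = \pi_{-i_k}$, and establish the stronger bound $\|\mu(p_{|\rho(\tau, y)})\|_2 \geq \sqrt{d}/3$, which exceeds the $t$-contributing threshold $\sqrt{d}/20$. The strategy is to exhibit $d$ coordinates of $\mu(p_{|\rho(\tau, y)})$, each with absolute value at least $1/3$; summing the squares then yields the desired $\ell_2$ bound. Concretely, for every $l \in [d+1] \setminus \{k\}$ (and note $\pi(i_l)$ is a valid coordinate of $\tau$ since $i_l \neq i_k$), I will show $|\mu(p_{|\rho(\tau, y)})_{\pi(i_l)}| \geq 1/3$.

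To prove this per-coordinate bound, fix $l \neq k$ and consider the joint distribution of $(\bx_{\pi(i_k)}, \bx_{\pi(i_l)})$ under $p$ with the outer coordinates fixed to $y_{\overline{S(\pi)}}$. Let
\[ A = \Pr\bigl[\bx_{\pi(i_k)} = y_{\pi(i_k)},\ \bx_{\pi(i_l)} = y_{\pi(i_l)},\ \bx_j = y_j\ \forall j \in \overline{S(\pi)}\bigr], \]
and define $B, C, D$ analogously by flipping the sign of $\bx_{\pi(i_l)}$ alone, of $\bx_{\pi(i_k)}$ alone, or of both, respectively. Expanding the projection $p_{\overline{S(\pi_{-i_k})}}$ (which lives on $\overline{S(\pi)} \cup \{\pi(i_k)\}$) shows $p_{\overline{S(\pi_{-i_k})}}(y_{\overline{S(\pi_{-i_k})}}) = A + B$ and that its flip in direction $\pi(i_k)$ has mass $C + D$. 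The hypothesis $(y_{\overline{S(\pi_{-i_k})}}, \pi(i_k)) \in G^{[u]}(S(\pi_{-i_k}))$ then asserts both that this edge is uneven (weight $\geq 2/3$) and that it is oriented outward from $y_{\overline{S(\pi_{-i_k})}}$, which together give $A + B \geq 3(C + D)$. Symmetrically, the hypothesis on $i_l$ yields $A + C \geq 3(B + D)$.

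Taking $3 \times$ the second inequality minus the first produces $A \geq 2B + 3D$, so in particular $B \leq A/2$. Since $p_{|\rho(\tau, y)}$ fixes $\bx_{\pi(i_k)} = y_{\pi(i_k)}$ in addition to $\bx_{\overline{S(\pi)}} = y_{\overline{S(\pi)}}$, its $\pi(i_l)$-th mean coordinate equals (up to the sign $y_{\pi(i_l)}$) the ratio $(A - B)/(A + B)$; combined with $B \leq A/2$, this ratio has absolute value at least $1/3$ (and $A > 0$ is ensured because $A + B > 0$, which itself follows from the strictness of the uneven orientation at $i_k$). Summing these $d$ bounds over $l \in [d+1] \setminus \{k\}$ gives $\|\mu(p_{|\rho(\pi_{-i_k}, y)})\|_2^2 \geq d/9$, hence $\|\mu(p_{|\rho(\pi_{-i_k}, y)})\|_2 \geq \sqrt{d}/3 \geq \sqrt{d}/20$. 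Iterating over all $k \in [d+1]$ completes the proof.

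The main subtlety, and only nontrivial step, is the translation of each graph-theoretic uneven condition on the projection $p_{\overline{S(\pi_{-i_\cdot})}}$ into an inequality on the joint two-coordinate probabilities $(A, B, C, D)$; once the accounting is right, the rest is short linear algebra. Conceptually, this argument extends the single-coordinate reasoning of Lemma~\ref{lem:graph-to-mean} by exploiting \emph{two} uneven conditions at once, which is precisely what lets us control a conditional mean (under an additional fixed coordinate $\bx_{\pi(i_k)} = y_{\pi(i_k)}$) rather than an unconditional one.
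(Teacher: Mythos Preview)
Your proposal is correct and follows essentially the same approach as the paper: both fix one index (your $i_k$, the paper's $i$), define the four joint probabilities over the two coordinates $\pi(i_k),\pi(i_l)$ with the rest of $\overline{S(\pi)}$ fixed, translate the two uneven hypotheses into the inequalities $A+B\ge 3(C+D)$ and $A+C\ge 3(B+D)$, and deduce a lower bound on $|A-B|/(A+B)$ for each $l\ne k$.

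The one genuine (small) improvement over the paper is in how you combine the two inequalities. The paper proceeds by a case split on whether $a_3\le 9a_1/10$ and obtains only $|a_1-a_3|/(a_1+a_3)\ge 1/20$; you instead take the linear combination (first inequality) $+\,3\times$(second inequality), which yields $4A\ge 8B+12D$, i.e., $A\ge 2B+3D$, hence $B\le A/2$ and $|A-B|/(A+B)\ge 1/3$. This gives the sharper bound $\|\mu(p_{|\rho(\pi_{-i_k},y)})\|_2\ge \sqrt{d}/3$ in place of the paper's $\sqrt{d}/20$. One wording nit: you write ``$3\times$ the second inequality \emph{minus} the first,'' but subtracting like-direction inequalities is not valid; the correct (and intended) manipulation is to \emph{add} the first inequality to three times the second, after which the $C$-terms cancel and $A\ge 2B+3D$ follows.
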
 
\begin{proof}
We prove the lemma for $k=1$.
For convenience we refer to $i_1$ as $i$ and $i_2,\ldots,i_{d+1}$ as $j_1,\ldots,j_d$.
We show that for every $j=j_1,\ldots,j_d$,
$\big|\mu(p_{|\rho(\pi_{-i},y)})_{\pi(j)}\big|\ge 1/20$. The lemma then follows.

To simplify notation, we  $R=\ol{S(\pi)}\cup \{\pi(i),\pi(j)\}$. 
Let $z=y_{R}\in \{-1,1\}^{R}$. Let
\begin{align*}
a_1 &= \Prx_{\bx \sim \p}\big[ \bx_R = z \big] \qquad\qquad\hspace{.5cm} a_2 = \Prx_{\bx \sim \p}\left[ \bx_{R} = z^{(\pi(i))}\right] \\[0.5ex]
a_3 &= \Prx_{\bx \sim \p}\left[ \bx_R = z^{(\pi(j))}\right] \qquad\qquad a_4 = \Prx_{\bx \sim \p}\left[\bx_{R} = z^{(\{\pi(i),\pi(j)\})}\right] 
\end{align*}
Notice that since $(y_{\ol{S(\pi_{-i})}}, \pi(i))$ and $(y_{\ol{S(\pi_{-j})}}, \pi(j))$ are uneven edges, we have
\begin{align}
a_1 + a_3 \geq 3 \left( a_2  +a_4 \right) \qquad \text{and}\qquad a_1 + a_2 \geq 3 \left( a_3 + a_4\right). \label{eq:i-j-relate}
\end{align}
We use these two inequalities to show that
\[
\left|\mu(p_{|\rho(\pi_{-i},y)})_{\pi(j)}\right|=\frac{|a_1-a_3|}{a_1+a_3}\ge \frac{1}{20}.
\]
Suppose first that $a_3 \leq 9 a_1/10$, then
\begin{align*}
\frac{|a_1 - a_3|}{a_1 + a_3} \geq \frac{a_1/10}{a_1+a_3} \geq \frac{1}{10} \cdot \frac{1}{2} \geq \frac{1}{20}.
\end{align*}
If $a_3 \geq 9a_1/10$, then by the second equality in (\ref{eq:i-j-relate}), $a_2 \geq 17 a_1/10 + 3a_4$. 
 Furthermore, substituting into the first inequality of (\ref{eq:i-j-relate}), $a_1 + a_3 \geq 3(17a_1 / 10 + 3a_4) + 3a_4$, which implies $a_3 \geq 4 a_1$, so
\begin{align*}
 \frac{|a_1 - a_3|}{a_1 + a_3} \geq \frac{3a_3/4}{a_1 + a_3} \geq \frac{3}{4} \cdot \frac{1}{2} \geq \frac{1}{20}.
\end{align*}
This finishes the proof of the lemma.
\end{proof}  \medskip

Now we start to lower bound the expectation of $\| \mu(\p_{|\brho})\|_2$ as $\brho\sim \calD(t+1,\p)$, or equivalently as $\brho$ is drawn as $\rho(\bpi,\by)$.
For this purpose we introduce an indicator random variable $\bX_i$ for each $i\in [t+1]$:
  $\bX_i$ is $1$ if the following event holds:
\begin{equation}\label{hehe1}
 \left(\by_{\overline{S(\bpi_{-i})}}, \bpi(i) \right)
\in G^{[u]}(S(\bpi_{-i}))\ \text{and}\ \text{$(\bpi_{-i},\by)$ is \tpc} \end{equation}
On the one hand, it follows from \cref{lem:graph-to-mean} (and the first part of
  the event above) that
\[
 \big\| \mu(\p_{|\rho(\bpi,\by)})\big\|_2 \gsim \sqrt{\bX_1+\cdots
  +\bX_{t+1}}.
\]
On the other hand, it follows from \cref{hehe5} and the second part of
  (\ref{hehe1}) that $\bX_1+\cdots +\bX_{t+1}$ is at most $d$ with probability $1$.
As a result, we have 
\[
\Ex_{\bpi,\by} \left[\big\| \mu(\p_{|\rho(\bpi,\by)})\big\|_2\right]
\gsim  \Ex_{\bpi,\by}\Big[ \bX_1+\cdots
  +\bX_{t+1} \Big] \Big/{\sqrt{d}} 
\]
This simplifies the task now to bound the probability of $\bX_i=1$.
We claim that for each $i\in [t+1]$,
\begin{equation}\label{hehe3}
\Pr_{\bpi,\by}\big[\bX_i=1\big]\ge 
\Big(\zeta'\cdot \xi-\Pr_{\bpi,\by}\big[\text{$(\bpi_{-i},\by)$ is \tc}\big]\Big)\cdot \frac{d}{n}, 
\end{equation}
To see this is the case, we consider drawing $\bpi$ and $\by$
  by drawing $\by$ and $\bpi_{-i}$ first and then $\bpi(i)$.
We consider the following event $F$ over $\by$ and $\bpi_{-i}$:   
\begin{quote}Event $\bF$: $S(\bpi_{-i})$ as $\bT$ and $\by_{\overline{ S(\bpi_{-i})}}$ as $\bx$ 
  satisfy (\ref{hehe6}) 
  and $(\bpi_{-i},\by)$ is \tpc.\end{quote}
Note that it follows from our assumption at the beginning of this section
  that the first part of $\bF$ occurs with probability at least $\zeta' \cdot \xi$.
It follows that the probability of $\bF$ is at least 
\[
\zeta'\cdot \xi-\Pr_{\bpi,\by}\big[\text{$(\bpi_{-i},\by)$ is \tc}\big].
\]
Finally, conditioning on $\bpi_{-i}$ and $\by$ satisfying $\bF$, 
  $\bpi(i)$ (together with $\bpi_{-i}$ and $\by$) leads to $\bX_i=1$ if it is one of the
  at least $d$ edges of $\by_{\ol{S(\bpi_{-i})}}$ in $G^{[u]}(S(\bpi_{-i}))$. This occurs with probability at least $d/n$. 

Continuing from (\ref{hehe3}), we note that the probability of $(\bpi_{-i},\by)$ being \tc
  is the same as $(\btau,\by)$ being \tc, where $\btau$ is drawn uniformly from all
  $t$-sequences of distinct indices.
Combining everything,  
\begin{align}
\Ex_{\bpi,\by} \left[\big\| \mu(\p_{|\rho(\bpi,\by)})\big\|_2\right] \gsim \frac{1}{\sqrt{d}}\cdot  (t+1)\cdot \left( \zeta' \cdot \xi - \Prx_{\btau,\by}\big[\text{$(\btau,\by)$ is \tc}\big] \right) \cdot \frac{d}{n}. \label{eq:hehehe}
\end{align}
Thus, either the probability of $(\btau,\by)$ being \tc is at least $\zeta' \cdot \xi / 2$,
  in which case we have
\[
\Ex_{\brho\sim \calD(t,p)}\Big[\big\|\mu(p_{|\brho})\big\|_2\Big]=
\Ex_{\btau,\by} \left[\big\| \mu(\p_{|\rho(\btau,\by)})\big\|_2\right] \gsim \zeta'\cdot \xi\cdot \sqrt{d}\gsim
\frac{\alpha}{\log^3 n \log(1/\alpha)},
 \]
using \eqref{eq:case-1-newvars} for the last inequality;  or (\ref{eq:hehehe}) is lower bounded by
\[ 
\frac{1}{\sqrt{d}}\cdot (t+1)\cdot \frac{\zeta'\cdot \xi}{2}\cdot \frac{d}{n}  \gsim \frac{t }{n} \cdot \frac{\alpha}{\log^3 n \log(1/\alpha)},   \]
which shows~\eqref{eq:restrict-t} holds since $t\le n-1$.

\begin{remark}\label{remarkcase1}
Note that if $dt/n\ge 1$, one may set the right hand side of (\ref{ufuf}) to be
  $\sqrt{dt/n}/20$ instead of $\sqrt{d}/20$, and this would result in a stronger lower bound 
  in (\ref{eq:restrict-t}) where we can replace the $t/n$ by $\sqrt{t/n}$.
However, we do not have any control on the parameter $d$, which could be $1$ in the worst case. %
\end{remark}
 
\subsection{Case 2}\label{sec:case2}

In this section, we prove \cref{lem:main-tech} assuming that we are in the second case outlined in \cref{sec:bucketing}. As per \cref{sec:graphs-to-mean} and the second case of \cref{sec:bucketing}, we consider a distribution $\p$ over $\smash{\bool^{n}}$, a parameter $t \in [n-1]$, and a parameter $\smash{\kappa \in [O(\log(n/\alpha))]}$; we focus on the family of directed graphs 
$G^{[\kappa]}(T)$ for each $t$-subset $T$ of $[n]$.
 
From Case 2 of \cref{sec:bucketing}, we assume there are parameters $\zeta', \xi$, and $d$ such that with probability at least $\zeta'$ over the draw of $\bT \sim \calS(t)$, we have
\begin{align}
\Prx_{\bx \sim \p_{\ol{\bT}}}\left[ d \leq \outdegree\big(\bx, G^{[\kappa]}(\bT)\big) \leq 2d\right] \geq \xi, \label{eq:degree-prob-case2}
\end{align}
where we set $\zeta' = \zeta / (2 \log n \cdot O(\log(n/\alpha)))$. Thus, (\ref{eq:case-2-params}) implies
\[ \sqrt{d} \cdot \xi \gsim \frac{\alpha \cdot 2^{\kappa}}{\zeta' \log^3 n \log^2(n/\alpha) \log(1/\alpha)}.  \]
The goal is to lower bound 
  the expectation of $\| \mu(\p_{|\brho})\|_2$ with $\brho\sim \calD(t,p)$ or $\brho\sim \calD(t+1,p)$. For this, we consider a similar notion of bad pairs to that of \cref{sec:case1}, and start with a lemma that will be useful to the analysis. 
\begin{definition}
Let $\gamma\ge 1$.\footnote{We will set $\gamma=d$ at the end but would like to keep it as a 
  parameter for the discussion in \cref{remarkcase2}.}{} We say a restriction $\rho$ with $t$ stars is \emph{\gtc} if
\[ \left\| \mu\big(\p_{|\rho }\big) \right\|_2 \geq \frac{\sqrt{\gamma}}{2^{\kappa+1}}, \]
and we say that $\rho$ is \emph{\gtpc} otherwise. 
\end{definition}
Similar to before, a restriction is either \gtc or \gtpc.
\begin{lemma}\label{capcapcap}
Let $\pi$ be a $(t+1)$-sequence of distinct indices from $[n]$ and let $y \in \{-1,1\}^n$. Suppose that $i\in[t+1]$ is such that both 
  restrictions $\rho(\pi_{-i}, y)$ and $\smash{\rho(\pi_{-i}, y^{(\pi(i))})}$ are \gtpc. Then we have
\[ \sum_{j \in [t+1] \setminus \{i\}} \left(\mu\big(\p_{|\rho(\pi, y)}\big)_{\pi(j)}\right)^2 < \frac{\gamma}{2^{2\kappa+2}}. \]
\end{lemma}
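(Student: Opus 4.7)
The plan is to express the mean vector of $\p_{|\rho(\pi,y)}$, restricted to the $t$ coordinates indexed by $S(\pi_{-i})=\{\pi(j):j\in[t+1]\setminus\{i\}\}$, as a convex combination of the two mean vectors $\mu(\p_{|\rho(\pi_{-i},y)})$ and $\mu(\p_{|\rho(\pi_{-i},y^{(\pi(i))})})$, and then finish by the triangle inequality together with the \gtpc hypothesis on both refinements.

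The first step is a direct unfolding. Since $\overline{S(\pi_{-i})}=\overline{S(\pi)}\cup\{\pi(i)\}$, conditioning a draw $\bx\sim \p_{|\rho(\pi,y)}$ further on $\bx_{\pi(i)}=y_{\pi(i)}$ (resp.\ $\bx_{\pi(i)}=-y_{\pi(i)}$) yields precisely a draw from $\p_{|\rho(\pi_{-i},y)}$ (resp.\ $\p_{|\rho(\pi_{-i},y^{(\pi(i))})}$). Setting $p=\Pr_{\bx\sim \p_{|\rho(\pi,y)}}[\bx_{\pi(i)}=y_{\pi(i)}]$ and conditioning the expectation $\Ex[\bx_{\pi(j)}]$ on the two values of $\bx_{\pi(i)}$, I obtain for every $j\in[t+1]\setminus\{i\}$ the identity
\[
\mu(\p_{|\rho(\pi,y)})_{\pi(j)} \;=\; p\cdot\mu(\p_{|\rho(\pi_{-i},y)})_{\pi(j)} \;+\; (1-p)\cdot\mu(\p_{|\rho(\pi_{-i},y^{(\pi(i))})})_{\pi(j)}.
\]
Thus, writing $v_1=\mu(\p_{|\rho(\pi_{-i},y)})$ and $v_2=\mu(\p_{|\rho(\pi_{-i},y^{(\pi(i))})})$ (both indexed by $S(\pi_{-i})$), the subvector of $\mu(\p_{|\rho(\pi,y)})$ on coordinates $S(\pi_{-i})$ equals the convex combination $pv_1+(1-p)v_2$.

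The second step is to invoke the triangle inequality with the hypothesis that $\rho(\pi_{-i},y)$ and $\rho(\pi_{-i},y^{(\pi(i))})$ are \gtpc, i.e.\ $\|v_1\|_2,\|v_2\|_2<\sqrt{\gamma}/2^{\kappa+1}$. This yields
\[
\left(\sum_{j\in [t+1]\setminus\{i\}} \bigl(\mu(\p_{|\rho(\pi,y)})_{\pi(j)}\bigr)^2\right)^{\!1/2}
\;=\; \|pv_1+(1-p)v_2\|_2 \;\le\; p\|v_1\|_2+(1-p)\|v_2\|_2 \;<\; \frac{\sqrt{\gamma}}{2^{\kappa+1}},
\]
and squaring gives the stated bound $\gamma/2^{2\kappa+2}$.

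I do not anticipate any real obstacle: the lemma is essentially a convexity statement, and the only content is the observation that the mean vector of the coarser restriction is the correct convex combination of the two finer ones, which follows from the law of total expectation together with the subcube structure of the conditioning.
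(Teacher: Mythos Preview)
Your proposal is correct and follows essentially the same approach as the paper: both establish the convex-combination identity for the coordinate means via the law of total expectation, and then finish by convexity. The only cosmetic difference is that the paper applies Jensen's inequality coordinatewise to the square and sums, whereas you apply the triangle inequality directly to the $\ell_2$ norm and then square; these are interchangeable finishing steps.
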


\begin{proof}
Let $a$ be the following conditional probability:
\[ a = \Prx_{\bx \sim \p}\left[ \bx_{\pi(i)} = y_{\pi(i)}\hspace{0.06cm} \big|\hspace{0.06cm} \bx_{\ol{S(\pi)}} = y_{\ol{S(\pi)}}\hspace{0.04cm}\right],\]
and notice that for all $j \in [t+1] \setminus \{i\}$, we have
\[ \mu\big(\p_{\rho(\pi, y)}\big)_{\pi(j)} = a \cdot \mu\big(\p_{|\rho(\pi_{-i}, y)}\big)_{\pi(j)} + (1-a)\cdot \mu\big(\p_{|\rho(\pi_{-i},y^{(\pi(i))})}\big)_{\pi(j)}.  \] 
By Jensen's inequality,
\begin{align*}
\sum_{j \in [t+1] \setminus \{i\}} \left(\mu\big(\p_{|\rho(\pi, y)}\big)_{\pi(j)}\right)^2 
 \leq \sum_{j \in [t+1] \setminus \{i\}} \left( a \cdot \left(\mu\big(\p_{|\rho(\pi_{-i}, y)}\big)_{\pi(j)}\right)^2 + (1-a) \cdot \left(\mu\big(\p_{|\rho(\pi_{-i}, y^{(\pi(i))})}\big)_{\pi(j)}\right)^2  \right)
\end{align*}
which less than $\gamma/2^{2\kappa +2}$
since $\rho(\pi_{-i}, y)$ and $\rho(\pi_{-i}, y^{(\pi(i))})$ are both \gtpc.
\end{proof}
 
We proceed in a similar fashion to Case 1. 
To bound the expectation of $\| \mu(\p_{|\brho})\|_2$ as $\brho\sim \calD(t+1,\p)$, or equivalently as $\brho$ is drawn as $\rho(\bpi,\by)$, we introduce an indicator random variable $\bX_i$ for each $i\in [t+1]$:
  $\bX_i$ is $1$ if the following event holds:
\begin{equation}\label{eq:eventevent}
\left(\by_{\ol{S(\bpi_{-i})}}, \bpi(i)\right) \in G^{[\kappa]}(S(\bpi_{-i})) \text{ and both } \rho(\bpi_{-i}, \by)\ \text{and}\ \rho(\bpi_{-i}, \by^{(\bpi(i))})  \text{ are \gtpc}
\end{equation} 
First, it follows from \cref{lem:graph-to-mean} (and the first part of
  the event above) that
\[
 \left\| \mu\big(\p_{|\rho(\bpi,\by)}\big)\right\|_2 \ge \frac{1}{2^{\kappa+1}}\cdot \sqrt{\bX_1+\cdots
  +\bX_{t+1}}.
\]
But this, along with \cref{capcapcap}
 and the second part of
  (\ref{eq:eventevent}), implies that $\bX_1+\cdots +\bX_{t+1}$ is at most $\gamma$ with probability $1$.
As a result, we have 
\[
\Ex_{\bpi,\by} \bigg[\left\| \mu\big(\p_{|\rho(\bpi,\by)}\big)\right\|_2\bigg]
\gsim  \frac{1}{2^{\kappa+1}}\cdot \Ex_{\bpi,\by}\Big[ \bX_1+\cdots
  +\bX_{t+1} \Big]\cdot \frac{1}{\sqrt{\gamma}} 
\] 

\noindent This reduces the task now to bounding the probability of $\bX_i=1$.

We start with some notation. 
Let $T$ be a $t$-subset of $[n]$ and let $z\in \{-1,1\}^{\ol{T}}$.
We use $\rho(z)$ to denote the restriction $\rho\in \{-1,1,*\}^n$ with $\rho_i=z_i$ for all $\smash{i\in \ol{T}}$
  and $\rho_i=*$ for all $i\in T$.
To lower bound the probability of $\smash{\bX_i=1}$, we define the following two disjoint 
  subsets of $\smash{\{-1,1\}^{\ol{T}}}$ for each $t$-subset $T$:
\begin{align*}
 A_{T} &= \left\{ z \in \{-1,1\}^{\ol{T}} : d\le \outdegree(z,G^{[\kappa]}(T))\le 2d\ \text{and $\rho(z)$ is \gtpc}\right\} \\[0.8ex]
 B_T &= \left\{ w \in \{-1, 1\}^{\ol{T}} : \text{$\rho(w)$ is \gtc} \right\}. 
 \end{align*}
Then the probability of $\bX_i=1$  (i.e., the event described in (\ref{eq:eventevent}))
  is at least the probability of the following event $\mathbf{E}$,  where 
  we first draw a $t$-subset $\bT$, then $\bz$ from $p_{\ol{\bT}}$, and 
  finally draw $\bi$ from $\ol{\bT}$:
\begin{align*}
\text{Event $\mathbf{E}$: 
  $(\bz,\bi)\in G^{[\kappa]}(\bT)$, $\bz\in A_{\bT}$ and
  $\bz^{(\bi)}\notin  B_{\bT}$}
\end{align*}
We write the probability of $\mathbf{E}$ as
\begin{align}
\Prx_{\bT,\bz,\bi}\big[\mathbf{E}\big]&=\Prx_{\bT,\bz,\bi}\Big[(\bz,\bi)\in G^{[\kappa]}(\bT)\wedge \bz\in A_\bT\Big]\nonumber\\[0.5ex]
  &\hspace{1cm}-\Prx_{\bT,\bz,\bi}\Big[(\bz,\bi)\in G^{[\kappa]}(\bT) \wedge \bz\in A_\bT \wedge 
  \bz^{(\bi)}\in B_\bT\Big].\label{haheuu}
\end{align}
The first probability on the right hand side of (\ref{haheuu})
  is at least 
\[\left( \zeta' \cdot \xi - \Prx_{\brho\sim \calD(t,p)}\big[\brho\ \text{is \gtc}\hspace{0.03cm} \big]\right)\cdot \frac{d}{n-t}.\]
To see this we first draw $\bT$ and $\bz$ and impose the condition 
  that $\bz\in A_{\bT}$.
Similar to the analysis in Case 1, this happens with probability at least
\[
\zeta'\cdot \xi-\Prx_{\bT,\bz}\big[\bz\in B_\bT\hspace{0.02cm}\big]
=\zeta'\cdot \xi-\Prx_{\brho\sim \calD(t,p)}\big[\brho \text{\ is \gtc}\hspace{0.03cm}\big].
\]
Then we draw $\bi$ and the probability we get an outgoing edge in $G^{[\kappa]}(\bT)$ is at least $d/(n-t)$.

The probability we subtract on the right hand side of (\ref{haheuu}) can be written as
\begin{align*}
&\frac{1}{\binom{n}{t}} \sum_{T \in \calP(t)} \sum_{z \in A_{T}} \Prx_{\bx \sim\p_{\ol{T}}}\big[\bx=z\big] \cdot \sum_{i\in \ol{T}} \frac{1}{|\ol{T}|} \cdot \ind\left\{ (z, i) \in G^{[\kappa]}(T) \wedge z^{(i)} \in B_T\right\} \\[0.8ex]
&\qquad \leq \frac{3}{\binom{n}{t}}\cdot \frac{1}{n-t} \sum_{T \in \calP(t)} \sum_{z \in A_T} \sum_{i \in \ol{T}} \hspace{0.15cm}\Prx_{\bx \sim \p_{\ol{T}}}\big[ \bx = z^{(i)}\big] \cdot \ind\left\{ (z, i) \in G^{[\kappa]}(T) \wedge z^{(i)} \in B_T \right\} \\[0.8ex]
&\qquad = \frac{3}{\binom{n}{t}}\cdot \frac{1}{n-t} \sum_{T \in \calP(t)} \sum_{w \in B_T} \sum_{i\in \ol{T}}
\hspace{0.1cm}\Prx_{\bx\sim p_{\ol{T}}} \big[\bx=w\big]\cdot \ind\left\{(w^{(i)},i)\in G^{[\kappa]}(T)\wedge w^{(i)}\in A_T\right\}\\[0.8ex]
&\qquad = \frac{3}{\binom{n}{t}}\cdot \frac{1}{n-t} \sum_{T \in \calP(t)} \sum_{w \in B_T}  
 \Prx_{\bx\sim p_{\ol{T}}} \big[\bx=w\big]\cdot \left[\hspace{0.04cm}\text{number of edges from $A_T$ to $w$ in $G^{[\kappa]}$}\hspace{0.04cm}\right].
\end{align*}
where the first inequality used 
  $\Prx_{\bx}[\bx=z]\le 3\cdot \Prx_{\bx}[\bx=z^{(i)}]$ because each $\{z,z^{(i)}\}$ is an even edge,
and the second equation is just a change of variable.
Given that every vertex in $A_T$ has out-degree at most $2d$ in $G^{[\kappa]}$,
  we can now apply \cref{lem:greedy} to conclude that 
  the number of edges from $A_T$ to every $w\in B_T$ is at most $2d$.
As a result, the probability we subtract can be bounded from above by
\[
\frac{3}{\binom{n}{t}}\cdot \frac{1}{n-t} \sum_{T \in \calP(t)} \sum_{w \in B_T} \Prx_{\bx \sim \p_{\ol{T}}}\left[ \bx = w \right] \cdot 2d \leq  \frac{6d}{n-t}\cdot \Prx_{\bT,\bz}\big[\bz\in B_\bT\hspace{0.02cm}\big]. 
\]

Therefore, we have (\ref{haheuu}) that
\begin{align*}
\Prx_{\bpi,\by}\big[\bX_i=1\big]\ge 
\Prx_{\bT, \bz, \bk}\big[\mathbf{E}\big] 
\ge \left(\zeta' \cdot \xi - 7\cdot  \Prx_{\brho \sim \calD(t, \p)} \big[ \text{$\brho$ is \gtc} \big] \right)\cdot  \frac{d}{n-t}. 
\end{align*}

To conclude the proof of \cref{lem:main-tech} for Case 2, we set $\gamma=d$.
Then 
  either we have 
  \[
  \Prx_{\brho \sim \calD(t, \p)}\big[ \text{$\brho$ is \gtc} \big] \geq 
    \frac{\zeta' \cdot \xi}{ 14},
  \]
  which implies
\[ \Ex_{\brho  \sim \calD(t, \p)}\bigg[\left\|\mu\big(\p_{|\brho'}\big)\right\|_2 \bigg] \geq \frac{\zeta' \cdot \xi}{14} \cdot  \frac{\sqrt{d}}{2^{\kappa+1}} \gsim \frac{\alpha  }{ \log^3 n \log^2(n/\alpha) \log(1/\alpha)},\] 
or we have  \begin{align*}
\Ex_{\brho \sim \calD(t+1, \p)}\bigg[ \left\| \mu\big(\p_{|\brho}\big)\right\|_2 \bigg] \gsim 
\frac{t+1}{2^{\kappa}\cdot \sqrt{d}}\cdot \frac{d}{n-t}\cdot \frac{\zeta' \cdot \xi}{14}
\gsim \frac{t}{n}\cdot \frac{\alpha}{\log^3 n \log^2(n/\alpha)\log (1/\alpha)}.  
\end{align*}
Taking the sum always gives us (\ref{eq:restrict-t}).\medskip

\begin{remark}\label{remarkcase2}
Similar to Remark \ref{remarkcase1}, note that if $dt/n\ge 1$, one may 
  set $\gamma= dt/n$ instead of $d$, and this would result in a stronger lower bound 
  in (\ref{eq:restrict-t}) where we can replace the $t/n$ by $\sqrt{t/n}$.
Similarly to Remark~\ref{remarkcase1}, we do not have control over the parameter $d$, which could be $1$ in the worst case.%
\end{remark}

\section{Mean Testing}
For any $\dimss \in \N$ and any distribution $\p$ supported on $\{-1,1\}^{\dimss}$, we consider $\bX=(\bx^{(1)},\dots,\bx^{(\ns)}),\bY=(\by^{(1)},\dots,\by^{(\ns)})$, a set of $2\ns$ i.i.d.\ samples from $\p$, and let
\[
    \bar{\bX} \eqdef \frac{1}{\ns}\sum_{i=1}^\ns \bx^{(i)}\,,\quad \bar{\bY} \eqdef \frac{1}{\ns}\sum_{i=1}^\ns \by^{(i)}
\]
be the empirical means in $\R^\dimss$. Our core test statistic will take $2\ns$ i.i.d.\ samples from $\p$, and compute the expression
\begin{equation}\label{eq:statistic:z}
  \bZ \eqdef \dotprod{\bar{\bX}}{ \bar{\bY}}.
\end{equation}
We will write $\mu(\p) = \Ex_{\bx \sim \p}[\bx] \in [-1,1]^{\dimss}$ as the mean vector and $\Sigma(\p)\in \R^{\dimss \times \dimss}$ as the symmetric matrix with $\Sigma(\p)_{ij} = \Ex_{\bx \sim \p}[\bx_i \bx_j]$ for all $i, j \in [\dimss]$.\footnote{In particular, note that due to the diagonal terms we have $\norm{\Sigma(\p)}_F \geq \sqrt{\dimss}$ for all $\p$.}

\begin{lemma}  \label{lemma:variance:z:general:p} The random variable $\bZ$ obtained from two tuples of $\ns$ samples from $\p$ satisfies
\begin{align}
  \Ex[\bZ] &= \dotprod{\Ex[\bar{\bX}]}{\Ex[\bar{\bY}]} = \dotprod{\mu(\p)}{\mu(\p)} = \normtwo{\mu(\p)}^2\, , \label{eq:expectation:z} \\
  \Var[\bZ] &\leq \frac{1}{\ns^2}\norm{\Sigma(\p)}_F^2 + \frac{4}{\ns}\normtwo{\mu(\p)}^2\norm{\Sigma(\p)}_F. \label{eq:variance-z}
\end{align}
\end{lemma}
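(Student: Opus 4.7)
The expectation claim is immediate: since $\bar{\bX}$ and $\bar{\bY}$ are independent (drawn from disjoint sample tuples) and each has mean $\mu(\p)$, linearity gives
\[
    \Ex[\bZ] = \sum_{i=1}^\dimss \Ex[\bar{\bX}_i]\Ex[\bar{\bY}_i] = \normtwo{\mu(\p)}^2.
\]
So the only real work is the variance bound, and my plan is to compute $\Ex[\bZ^2]$ directly by expanding and then cancelling $\Ex[\bZ]^2$.

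Writing $\bZ = \frac{1}{\ns^2}\sum_{i,j=1}^\ns \langle \bx^{(i)}, \by^{(j)}\rangle$, I would expand
\[
    \bZ^2 = \frac{1}{\ns^4}\sum_{i,j,k,l=1}^\ns \langle \bx^{(i)}, \by^{(j)}\rangle \langle \bx^{(k)}, \by^{(l)}\rangle
\]
and split the sum into four cases according to whether $i=k$ and/or $j=l$. By independence of $\bX$ from $\bY$ and independence of distinct samples within each tuple, a straightforward coordinate-by-coordinate computation (writing $\mu = \mu(\p)$ and $\Sigma = \Sigma(\p)$) gives the per-term expectation: $\normtwo{\mu}^4$ when $i\ne k$ and $j\ne l$, $\mu^\top \Sigma \mu$ when exactly one of the equalities holds, and $\norm{\Sigma}_F^2$ when both hold. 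Counting tuples of each type yields
\[
    \Ex[\bZ^2] = \frac{(\ns-1)^2}{\ns^2}\normtwo{\mu}^4 + \frac{2(\ns-1)}{\ns^2}\mu^\top\Sigma\mu + \frac{1}{\ns^2}\norm{\Sigma}_F^2.
\]

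Subtracting $\Ex[\bZ]^2 = \normtwo{\mu}^4$, the coefficient of $\normtwo{\mu}^4$ becomes $\frac{(\ns-1)^2 - \ns^2}{\ns^2} = -\frac{2\ns-1}{\ns^2} \le 0$, so that term may be dropped. This leaves
\[
    \Var[\bZ] \le \frac{2(\ns-1)}{\ns^2}\mu^\top\Sigma\mu + \frac{1}{\ns^2}\norm{\Sigma}_F^2 \le \frac{2}{\ns}\mu^\top\Sigma\mu + \frac{1}{\ns^2}\norm{\Sigma}_F^2.
\]
To finish I would bound the cross term by viewing $\mu^\top \Sigma \mu = \langle \mu\mu^\top, \Sigma\rangle_F$ and applying Cauchy--Schwarz in the Frobenius inner product: $\mu^\top \Sigma \mu \le \norm{\mu\mu^\top}_F \cdot \norm{\Sigma}_F = \normtwo{\mu}^2 \norm{\Sigma}_F$ (with $\mu^\top \Sigma \mu \ge 0$ since $\Sigma$ is PSD as a second-moment matrix, so we need not worry about sign). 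Plugging this in yields the stated inequality with the slightly loose constant $4/\ns$ in place of $2/\ns$.

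The only mildly delicate step is the bookkeeping of the four cases in $\Ex[\bZ^2]$; everything else is routine. The key ingredients are independence across the two sample tuples (for the mean) and within each tuple across distinct indices (to reduce mixed moments to products of $\mu$'s or to $\Sigma$), and the Frobenius-norm Cauchy--Schwarz to absorb the cross term into the two terms appearing in \eqref{eq:variance-z}.
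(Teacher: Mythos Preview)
Your proof is correct and follows the same overall strategy as the paper: compute $\Ex[\bZ^2]$ exactly, subtract $\Ex[\bZ]^2$, and bound the cross term via Cauchy--Schwarz. The organization differs, however. The paper expands $\Ex[\bZ^2]=\sum_{i,j\in[\dimss]}\Ex[\bar\bX_i\bar\bX_j]^2$ over \emph{coordinate} indices, splits diagonal from off-diagonal, and picks up intermediate terms like $\norm{\mu}_4^4$ and a standalone $\tfrac{2}{\ns}\normtwo{\mu}^2$ that it later absorbs using $\norm{\Sigma}_F\ge 1$; this is where the constant $4$ comes from. Your expansion over \emph{sample} indices $(i,j,k,l)$ arrives directly at the closed form $\tfrac{(\ns-1)^2}{\ns^2}\normtwo{\mu}^4+\tfrac{2(\ns-1)}{\ns^2}\mu^\top\Sigma\mu+\tfrac{1}{\ns^2}\norm{\Sigma}_F^2$, and applying Cauchy--Schwarz to the full quantity $\mu^\top\Sigma\mu=\langle \mu\mu^\top,\Sigma\rangle_F$ rather than only its off-diagonal part gives the sharper constant $2/\ns$ instead of $4/\ns$. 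So your route is a cleaner packaging of the same computation and in fact proves a slightly stronger inequality than stated.
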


The proof of (\ref{eq:expectation:z}) follows from the fact that $\bar{\bX}$ and $\bar{\bY}$ are independent, and (\ref{eq:variance-z}) is a computation which appears in Appendix~\ref{sec:mean-appendix}. We define the \emph{blowup distribution} $\blowup{\p}$ as the distribution on $\bool^{\dimss^2}$ such that, ordering $[\dimss]\times[\dimss]$ in the lexicographic way, $\blowup{\p}$ is the distribution of the vector
\begin{align}
    (\bx_i \bx_j)_{(i,j)\in[\dimss]\times[\dimss]} = (\bx_1\bx_1, \bx_1\bx_2,\dots,\bx_1\bx_\dimss,\bx_2\bx_1,\bx_2\bx_\dimss,\dots,\bx_\dimss \bx_\dimss) \label{eq:convert-sample}
\end{align}
when $\bx\sim\p$. For $k\in\N$, we let $\blowupit{k}{\p}$ denote the distribution over $\bool^{\dimss^{2^k}}$ obtained by iterating this process $k$ times, so that in particular $\blowupit{0}{\p}=\p$. Note that, for any $k\geq 0$, a sample from $\blowupit{k}{\p}$ can be obtained from a sample from $\p$ in time $O(\dimss^{2^k})$. 

\begin{fact}\label{fact:mean:blowup}
  For any distribution $\p$ over $\bool^\dimss$ and $k\in\N$, we have
  $
        \normtwo{ \mu( \blowupit{k+1}{\p} )}^2 = \norm{\Sigma(\blowupit{k}{\p})}_F^2
  $.
\end{fact}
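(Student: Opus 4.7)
The plan is to unroll the definitions directly; essentially no real work is required beyond observing that the mean vector of the blowup distribution is precisely the second-moment matrix of the base distribution, reshaped as a vector.

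First, I would treat the base case $k=0$. By definition, $\blowupit{1}{\p} = \blowup{\p}$ is supported on $\bool^{\dimss^2}$, and for each coordinate $(i,j) \in [\dimss] \times [\dimss]$ we have
\[
  \mu(\blowup{\p})_{(i,j)} = \Ex_{\bx \sim \p}[\bx_i \bx_j] = \Sigma(\p)_{ij}.
\]
Squaring and summing over all $(i,j)$ gives $\normtwo{\mu(\blowup{\p})}^2 = \sum_{i,j} \Sigma(\p)_{ij}^2 = \norm{\Sigma(\p)}_F^2$, which is exactly the claim when $k=0$.

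For general $k$, I would simply apply this identity to the distribution $\blowupit{k}{\p}$ in place of $\p$: since $\blowupit{k+1}{\p}$ is obtained by applying the blowup operation one more time to $\blowupit{k}{\p}$, the same calculation yields
\[
  \normtwo{\mu(\blowupit{k+1}{\p})}^2 = \normtwo{\mu(\blowup{\blowupit{k}{\p}})}^2 = \norm{\Sigma(\blowupit{k}{\p})}_F^2.
\]
There is no real obstacle here; the only thing to be careful about is matching the lexicographic indexing in \eqref{eq:convert-sample} with the matrix indexing of $\Sigma(\cdot)$, so that the sum of squared entries of the flattened vector coincides with the squared Frobenius norm. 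Since the Frobenius norm is defined entrywise, this bookkeeping is automatic.
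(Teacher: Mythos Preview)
Your proposal is correct and is exactly the natural unrolling-of-definitions argument; the paper states this as a ``Fact'' without proof, so there is nothing further to compare.
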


Consider the following threshold test:
\begin{algorithm}[H]
  \begin{algorithmic}[1]
    \Require A threshold $\tau>0$ and a (multi)set $S = \{ x_1, \dots, x_{\ns}, y_1, \dots, y_{\ns}\}$.
    \State Compute $Z$ from the samples in $S$ according to (\ref{eq:statistic:z}).
    \If{ $\smash{Z > \tau}$ } \Return \reject
    \EndIf
    \State \Return \accept 
  \end{algorithmic}
  \caption{$\textsc{ThresholdZTest}(\tau, S)$}
\end{algorithm}
\noindent Hereafter, for $\epsilon\in (0,1]$, we consider the sequence $(\tau_k)_{k \in \Z^{\geq 0}}$ of real numbers, defined recursively as 
\begin{equation}\label{eq:setting:tauk}
      \tau_k \eqdef
      \begin{cases}
          \frac{\dst^2n}{2} &\text{ if }k = 0\\
          \frac{1}{5000} \cdot \ns^2 \tau_{k-1}^2 &\text{ if }k\geq 1
      \end{cases}
\end{equation}
In particular, writing $a\eqdef 1/5000$, we have 
\begin{align}
\tau_k &= \frac{1}{a\ns^2}\cdot \left(\frac{a\ns^2\dst^2n}{2}\right)^{2^k} \label{eq:unrolled-tau}
\end{align}
for $k\geq 0$. We are now ready to state the main testing algorithm. The algorithm takes as input a distribution $\p$ which is supported on $\{-1,1\}^\dims$, and is written with two unspecified parameters, $k_0$ and $q$. The parameter $k_0$ denotes the number of rounds and $q$ denotes the sample complexity.

\begin{algorithm}[H]
  \begin{algorithmic}[1]
    \Require A distribution $\p$ supported on $\{-1,1\}^\dims$.
    \State Draw a set $\bS$ of $2\ns$ i.i.d.\ random samples from $\p$. 
    \ForAll{$0\leq k\leq k_0$}
      \State Set $\tau_k$ as in~\eqref{eq:setting:tauk}.
      \State Convert the $2\ns$ samples from $\bS$ to a (multi)set $\bS^{(k)}$ of samples from $\blowupit{k}{\p}$ as in (\ref{eq:convert-sample}).
        \If{ $\textsc{ThresholdZTest}(\tau_k,\bS^{(k)})$ returns \reject } \label{step:call:threshold}
            \Return \reject \label{step:majority:reject} 
        \EndIf
    \EndFor
    \State \Return \accept  \Comment{All $k_0 + 1$ tests were successful}
  \end{algorithmic}
  \caption{$\textsc{MeanTester}(\p, \dst)$}\label{algo:testing:l2:unrolled}
\end{algorithm}

\begin{theorem}
  \label{theo:testing:l2:unrolled}
 Fix any $k_0 \in \N$. There exists an algorithm (\cref{algo:testing:l2:unrolled}) which, given sample access to an arbitrary distribution $\p$ on $\bool^\dims$ and a parameter $\dst \in( 0,1]$, has the following behavior:
  \begin{itemize}
    \item If $\p$ is the uniform distribution, the algorithm outputs \accept with probability at least $2/3$;
    \item If $\p$ satisfies $\normtwo{\mu(\p)} \geq \dst\sqrt{n}$,
  the algorithm outputs \reject with probability at least $2/3$.
  \end{itemize}
  These guarantees hold as long as
  \[
    \ns \gsim \max\left\{ \frac{{1}}{\dst^2\sqrt{n}}, \left(\frac{1}{\dst^2}\right)^{\frac{2^{k_0+1}}{2^{k_0+2} - 2}}\right\}\,.
  \] 
 The algorithm runs in time $O\left(q \cdot \dims^{2^{k_0}} \right)$.
\end{theorem}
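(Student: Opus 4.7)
The plan is to analyze the $k_0+1$ threshold tests via Chebyshev's inequality, tied together across rounds by the identity $M_{k+1} = F_k$ from \cref{fact:mean:blowup}, where I write $M_k \eqdef \normtwo{\mu(\blowupit{k}{\p})}^2$ and $F_k \eqdef \norm{\Sigma(\blowupit{k}{\p})}_F^2$. By \cref{lemma:variance:z:general:p}, $\Ex[Z^{(k)}] = M_k$ and $\Var[Z^{(k)}] \leq F_k/\ns^2 + (4/\ns) M_k \sqrt{F_k}$. The guiding intuition is that a round in which the variance swamps the signal must have large $F_k$, which by the identity forces $M_{k+1}$ to be large; the recursion $\tau_k = \ns^2 \tau_{k-1}^2/5000$ is calibrated precisely so that this ``signal escalation'' outpaces the threshold growth.

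For completeness (when $\p$ is uniform), I apply Chebyshev's inequality at each round to bound $\Pr[Z^{(k)} > \tau_k] \leq \Var[Z^{(k)}]/(\tau_k - M_k)^2$. The values $M_k, F_k$ for uniform $\p$ can be tracked recursively via $M_{k+1} = F_k$, together with the deterministic bound $F_k \leq \dims^{2^{k+1}}$; combined with the unrolled formula~\eqref{eq:unrolled-tau} for $\tau_k$, this shows each $\tau_k$ exceeds $M_k$ by a sufficient multiple of the standard deviation. A union bound over the $k_0+1$ rounds then yields the completeness conclusion.

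For soundness, assume $\normtwo{\mu(\p)} \geq \dst\sqrt{\dims}$ and induct on $k$ to establish the invariant: until the algorithm rejects, $M_k \geq 2\tau_k$. The base case $M_0 \geq \dst^2 \dims = 2\tau_0$ is immediate. For the inductive step, Chebyshev's lower-tail bound (using $M_k \geq 2\tau_k$, hence $M_k - \tau_k \geq M_k/2$) gives $\Pr[Z^{(k)} \leq \tau_k] \leq 4\Var[Z^{(k)}]/M_k^2$; if this is at most $1/3$, the test rejects with probability $\geq 2/3$ and we are done. Otherwise $\Var[Z^{(k)}] > M_k^2/12$, and a short inspection of the two terms of the variance bound shows that in either case, $F_k \gtrsim \ns^2 M_k^2$ (up to an absolute constant $\le 10^4$). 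Combining this with $M_k \geq 2\tau_k$ and $\tau_{k+1} = \ns^2 \tau_k^2/5000$ yields $M_{k+1} = F_k \geq 2\tau_{k+1}$, closing the induction.

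The main obstacle, and the source of the $k_0$-dependent term in the sample complexity, is forcing a rejection by round $k_0$. If the invariant survives all the way to round $k_0$ without a rejection, then on one hand the argument above yields $F_{k_0} \gtrsim \ns^2 M_{k_0}^2 \geq 4\ns^2 \tau_{k_0}^2$, while on the other hand the deterministic bound $F_{k_0} \leq \dims^{2^{k_0+1}}$ holds, since $\blowupit{k_0}{\p}$ is supported on $\{-1,1\}^{\dims^{2^{k_0}}}$ with entries in $[-1,1]$. Combining these and unrolling~\eqref{eq:unrolled-tau} converts $\ns^2 \tau_{k_0}^2 \lesssim \dims^{2^{k_0+1}}$ into $\ns \gtrsim (1/\dst^2)^{2^{k_0+1}/(2^{k_0+2} - 2)}$; taking the maximum with the $1/(\dst^2 \sqrt{\dims})$ constraint needed at the base round yields the stated sample complexity. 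The running time bound is immediate, since forming a single sample from $\blowupit{k_0}{\p}$ takes time $O(\dims^{2^{k_0}})$.
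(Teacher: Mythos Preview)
Your soundness argument is essentially the paper's: the induction $M_k \geq 2\tau_k$, driven by the variance bound and $M_{k+1}=F_k$, matches \cref{lemma:recurse:blowup:mean} and \cref{lemma:soundness}, and the terminal contradiction against the trivial upper bound $M_{k_0+1}\leq \dims^{2^{k_0+1}}$ is exactly how the paper extracts the $k_0$-dependent sample requirement.

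The gap is in completeness. You state that ``the deterministic bound $F_k \leq \dims^{2^{k+1}}$'' suffices to show each $\tau_k$ dominates $M_k$. That bound is just the entrywise bound $|\Sigma_{ij}|\leq 1$ on a $\dims^{2^k}\times\dims^{2^k}$ matrix, and it is too weak. With it you only get $M_k = F_{k-1} \leq \dims^{2^k}$, whereas $\tau_k = (a\ns^2)^{-1}(a\ns^2\dst^2\dims/2)^{2^k}$; for instance with $\dst=\Theta(1)$ (so $\ns=\Theta(1)$) and $k\geq 2$ the ratio $\tau_k/\dims^{2^k} = (a\ns^2)^{-1}(a\ns^2\dst^2/2)^{2^k}$ is a constant less than $1$ raised to $2^k$, hence $\tau_k \ll \dims^{2^k}$ and you cannot even conclude $M_k < \tau_k$, let alone run Chebyshev on the upper tail. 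The paper does not use the trivial bound here: it proves a combinatorial estimate specific to the uniform distribution (\cref{lemma:mean:blowup:uniform}), namely $\norm{\Sigma(\blowupit{k}{\Uniform})}_F^2 \leq (\dims\cdot 2^k)^{2^k}$, by observing that $\Ex[\prod_{\ell} \bx_{i_\ell}]$ vanishes unless every index occurs with even multiplicity. This is a square-root improvement (roughly $\dims^{2^k}$ in place of $\dims^{2^{k+1}}$) and is exactly what makes $M_k \leq \tau_k/(6\cdot 2^k)$ hold, enabling the Chebyshev-plus-union-bound completeness proof.
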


In particular, by setting $k_0 = \log \log \dims$, we obtain an algorithm for distinguishing the uniform distribution from a distribution $\p$ on $\{-1,1\}^{\dims}$ with $\|\mu(\p)\|_2 \geq \dst\sqrt{n}$ which runs in time $n^{\Theta(\log n)}$ and has sample complexity
\[ O\left( \max\left\{\frac{1}{\dst ^2\sqrt{n}}, \frac{1}{\dst} \right\}\right).\] 

\begin{remark}
  \label{rk:improved:runtime}
  As stated the algorithm is not computationally efficient, as for $k_0 = \log \log \dims$ it runs in superpolynomial time $\dims^{O(\log\log\dims)}$. This follows from using the obvious but naive approach to computing the statistic $Z$ in~\eqref{eq:statistic:z} for the various blowup distributions $\blowupit{k}{\p}$; however, this can be greatly improved by computing this statistic in a more careful way, rephrasing it as a sum of inner products of tensor products of the original samples and relying on the mixed-product property of tensor products. Doing so results in a running time polynomial in both $q$ and $\dims$; we refer the reader to the proof of~\cite[Theorem~6]{chen2020learning} for details.
\end{remark}

\begin{proofof}{\cref{theo:testing:l2:unrolled}}
The proof will proceed as follows: we first show that, when $\p$ is the uniform distribution $\uniform$ (the completeness case), then all $k_0 +1$ tests, when run on Line~\ref{step:call:threshold} of \cref{algo:testing:l2:unrolled}, return \accept with high probability. To do so, notice that~\cref{step:call:threshold} of \cref{algo:testing:l2:unrolled} considers samples from $\blowupit{k}{\uniform}$. Hence, we analyze the mean and variance of the statistic for each $\blowupit{k}{\uniform}$, and apply Chebyshev's inequality to show that, for any given $k$, each call to~\cref{step:call:threshold} then returns \accept with probability at least $1-2^{-k}/6$. By a union bound over all $k$, we get that overall all calls will return \accept with probability at least $1-\sum_{k=0}^\infty 2^{-k}/6 = 2/3$.
\begin{lemma}\label{lemma:mean:blowup:uniform}
  For the uniform distribution $\uniform$ over $\bool^\dims$ and $k\in\N$, we have
  \[
        \norm{\Sigma(\blowupit{k}{\uniform})}_F^2 \leq (\dims 2^k)^{2^{k}}\,.
  \]
\end{lemma}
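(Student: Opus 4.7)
The plan is to evaluate $\Sigma(\blowupit{k}{\uniform})$ entrywise and then bound the resulting combinatorial count. First I would observe, by induction on $k$ starting from the defining relation $\blowup{\bx}_{(i,j)} = \bx_i \bx_j$, that a sample from $\blowupit{k}{\uniform}$ is indexed by sequences $I = (i_1, \ldots, i_{2^k}) \in [\dims]^{2^k}$ and is given explicitly by
\[
\blowupit{k}{\bx}_I \;=\; \prod_{a=1}^{2^k}\bx_{i_a},
\]
where $\bx \sim \uniform$. Since the coordinates of $\uniform$ are i.i.d.\ Rademacher,
\[
\Sigma(\blowupit{k}{\uniform})_{I,J} \;=\; \Ex\!\bigg[\prod_{a=1}^{2^k}\bx_{i_a}\prod_{b=1}^{2^k}\bx_{j_b}\bigg] \in \{0,1\},
\]
and it equals $1$ exactly when every element of $[\dims]$ appears an even number of times in the concatenated sequence $K \eqdef (i_1,\ldots,i_{2^k},j_1,\ldots,j_{2^k}) \in [\dims]^{2^{k+1}}$. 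Squaring and summing, $\norm{\Sigma(\blowupit{k}{\uniform})}_F^2 = |\mathcal{E}_k|$, where $\mathcal{E}_k \eqdef \{K \in [\dims]^{2^{k+1}} : \text{each value appears an even number of times in } K\}$.

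The bulk of the work is an upper bound on $|\mathcal{E}_k|$. I would use a standard encoding trick: to each $K \in \mathcal{E}_k$ we can associate at least one perfect matching $M$ of the $2^{k+1}$ positions such that the two positions in each matched pair carry the same value in $K$ (just pair up the occurrences of each value, which is possible since each value appears evenly many times). Hence $|\mathcal{E}_k|$ is bounded by the number of pairs $(K,M)$ in which $K$ respects $M$. There are $(2^{k+1}-1)!!$ perfect matchings of $2^{k+1}$ elements, and for any fixed matching there are $\dims^{2^k}$ ways to independently assign a value in $[\dims]$ to each of the $2^k$ pairs. Therefore
\[
|\mathcal{E}_k| \;\le\; (2^{k+1}-1)!! \cdot \dims^{2^k}.
\]

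The final step is to simplify the double factorial sharply enough to match the stated bound. Using $\sum_{i=1}^{2^k}(2i-1) = (2^k)^2$ and the AM-GM inequality,
\[
(2^{k+1}-1)!! \;=\; \prod_{i=1}^{2^k}(2i-1) \;\le\; \Big(\frac{(2^k)^2}{2^k}\Big)^{2^k} \;=\; (2^k)^{2^k},
\]
so $|\mathcal{E}_k| \leq (\dims \cdot 2^k)^{2^k}$, as required. The only mildly delicate point is this last inequality: the naive estimate $(2^{k+1}-1)!! \le (2^{k+1})^{2^k}$ would produce $(2\dims \cdot 2^k)^{2^k}$, which is off by a multiplicative $2^{2^k}$ factor, and it is the AM-GM refinement that recovers exactly the constant claimed in the lemma.
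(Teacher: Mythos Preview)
Your argument is correct and follows essentially the same route as the paper: identify the entries of $\Sigma(\blowupit{k}{\uniform})$ as $0/1$ indicators of ``all-even'' $2^{k+1}$-tuples, then bound that count by $(2^{k+1}-1)!!\cdot \dims^{2^k}$ via a matching/pairing argument. The only cosmetic difference is in the last step: the paper rewrites $(2^{k+1}-1)!! = (2^{k+1})!/\big(2^{2^k}(2^k)!\big)$ and uses $(2^{k+1})!/(2^k)! \le (2^{k+1})^{2^k}$, whereas you apply AM--GM directly to the odd factors; both give exactly $(2^k)^{2^k}$ and hence the same final bound.
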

\begin{proof}
  Let $K\eqdef 2^{k+1}$. Since whenever $\bx_i^2=1$ for all $i\in[\dims]$ when $\bx \sim \Uniform$, and the $\bx_i$'s are independent and have zero-mean, 
 we have that, for any ordered $K$-tuple $(i_1,\dots,i_K)\in[\dims]^K$
  \[
    \Ex_{x\sim\uniform}\left[\prod_{\ell=1}^K \bx_{i_\ell} \right] = 
    \begin{cases}
        1 & \text{if each $i_\ell$ appears an even number of times }\\
        0 & \text{otherwise.}
    \end{cases}
  \]
  Therefore, $\norm{\Sigma(\blowupit{k}{\uniform})}_F^2$ is upper bounded by the number of ordered $K$-tuples of $[\dims]$ in which each index appears an even number of times. This in turn is at most 
  \[
      \dims(K-1)\cdot \dims(K-3) \cdot \dots \cdot \dims = \left(\frac{\dims}{2}\right)^{K/2}\cdot \frac{K!}{(K/2)!}
      \leq \left(\frac{\dims}{2}\right)^{K/2} \cdot K^{K/2}\,,
  \]
which one can see as follows: from $K$ variables, we may choose the value of the first (there are $\dims$ choices), and then pair this variables with an other one to which we assign the same value (there are $K-1$ such choices). We then recurse on the remaining $K-2$ variables: this process, though it may lead to double-counting, ensures that each value appears an even number of times in the resulting $K$-tuple~--~as we always assign any chosen value to two variables.
\end{proof}

  \noindent Combining~\cref{fact:mean:blowup} and~\cref{lemma:mean:blowup:uniform}, this implies
  \begin{align}
        \normtwo{ \mu( \blowupit{k}{\uniform} )}^2 \leq(\dims 2^{k-1})^{2^{k-1}}
        =  \sqrt{(\dims 2^k / 2)^{2^{k}}}. \label{eq:mean:blowup:uniform}
  \end{align}

\begin{lemma}[Completeness]\label{lemma:completeness}
There exists a large enough universal constant $C> 0$, such that for any $k\in\N$ where $2^k \leq \log^2 n$. If $\ns \geq C/\dst^2\sqrt{\dims}$, then letting $\bS = \{ \bx_1, \dots, \bx_{\ns}, \by_1, \dots, \by_{\ns}\}$ be $2\ns$ i.i.d.\ samples from $\blowupit{k}{\uniform}$. Then,
  \begin{align}
  \Prx_{\bS}\left[ \textsc{ThresholdZTest}(\tau_k, \bS) \text{ outputs } \reject \right] &\leq \frac{2^{-k}}{6}. \label{eq:prob-lb} 
  \end{align}
\end{lemma}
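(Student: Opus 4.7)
The proof is a Chebyshev-based argument using Lemma~\ref{lemma:variance:z:general:p} applied to $\p = \blowupit{k}{\uniform}$, which gives $\Ex[\bZ] = \norm{\mu(\blowupit{k}{\uniform})}_2^2$ and a variance bound in terms of $\|\Sigma(\blowupit{k}{\uniform})\|_F$ and $\|\mu(\blowupit{k}{\uniform})\|_2^2$. The aim is to establish two inequalities: (i) $\tau_k \geq 2\Ex[\bZ]$, so that Chebyshev yields $\Pr[\bZ > \tau_k] \leq 4\Var[\bZ]/\tau_k^2$; and (ii) $\Var[\bZ] \leq \tau_k^2 \cdot 2^{-k}/24$. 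Together these give the target $2^{-k}/6$ bound.

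I first handle the base case $k=0$: since $\blowupit{0}{\uniform} = \uniform$, we have $\Ex[\bZ]=0$, $\Sigma = I_n$, and $\Var[\bZ] = n/\ns^2$. With $\tau_0 = \dst^2 n/2$ and $\ns \geq C/(\dst^2\sqrt{n})$, Chebyshev yields $\Pr[\bZ > \tau_0] \leq 4/(\dst^4 n\ns^2) \leq 4/C^2$, which is at most $1/6$ for $C \geq \sqrt{24}$. For $k \geq 1$, I would combine~\eqref{eq:mean:blowup:uniform} to bound $\Ex[\bZ] \leq (n\cdot 2^{k-1})^{2^{k-1}}$ and Lemma~\ref{lemma:mean:blowup:uniform} to bound $\|\Sigma(\blowupit{k}{\uniform})\|_F^2 \leq (n\cdot 2^k)^{2^k}$; feeding these into the variance formula gives $\Var[\bZ] \leq (n\cdot 2^k)^{2^k}\left(\frac{1}{\ns^2} + \frac{4}{\ns}\right)$.

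To verify (i) and (ii), I will use the unrolled form $\tau_k = T_0^{2^k}/(a\ns^2)$ where $T_0 \eqdef a\ns^2\dst^2 n/2$ and $a = 1/5000$. The hypothesis $\ns \geq C/(\dst^2\sqrt{n})$ together with $\dst\leq 1$ gives $T_0 \geq aC^2/2$. Crucially, $\tau_k$ grows doubly exponentially in $k$, while the mean and variance bounds grow only singly exponentially, as $(n\cdot 2^k)^{\Theta(2^k)}$. Both inequalities (i) and (ii) then reduce, after extracting $2^{k-1}$- or $2^k$-th roots, to conditions on the ratio $T_0^2/(n\cdot 2^k)$ that hold for a sufficiently large universal $C$ provided $2^k \leq \log^2 n$. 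The main obstacle will be careful bookkeeping of the $\poly(\ns, 2^k)$ correction factors in this reduction, ensuring they do not swamp the doubly exponential advantage; this relies on the constraint $2^k \leq \log^2 n$ to tame logarithmic overheads and on $T_0$ being bounded well away from $1$, which the choice of universal $C$ guarantees.
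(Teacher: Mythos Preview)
Your proposal follows essentially the same strategy as the paper: apply Chebyshev with the variance bound of Lemma~\ref{lemma:variance:z:general:p}, control $\normtwo{\mu(\blowupit{k}{\uniform})}^2$ and $\norm{\Sigma(\blowupit{k}{\uniform})}_F^2$ via~\eqref{eq:mean:blowup:uniform} and Lemma~\ref{lemma:mean:blowup:uniform}, and compare to $\tau_k$. The paper organizes the last step differently: instead of unrolling $\tau_k$ and comparing directly to $(n\cdot 2^k)^{2^k}$, it first records the single relation~\eqref{eq:relation}, then uses Fact~\ref{fact:mean:blowup} to rewrite $\norm{\Sigma(\blowupit{k}{\uniform})}_F^2=\normtwo{\mu(\blowupit{k+1}{\uniform})}^2$ and invokes~\eqref{eq:relation} at level $k{+}1$. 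Plugging in the recursion $\tau_{k+1}=a\ns^2\tau_k^2$ then makes both terms of the Chebyshev bound collapse to absolute constants times $2^{-k}$, with no residual dependence on $n$, $\ns$, or $\dst$ to track. Your route via the unrolled $\tau_k$ reaches the same destination but with more explicit bookkeeping of these parameters.

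One remark on your stated intuition: the claim that ``$\tau_k$ grows doubly exponentially in $k$, while the mean and variance bounds grow only singly exponentially'' is not accurate. Both $\tau_k$ and $(n\cdot 2^k)^{2^k}$ are of the form $(\text{base})^{2^k}$; the comparison is between the bases, namely $T_0^2$ versus $n\cdot 2^k$. Your subsequent reduction to the ratio $T_0^2/(n\cdot 2^k)$ is the right object, and the constraint $2^k\le \log^2 n$ is indeed what keeps the $2^k$ in the base under control. So the plan is sound, but the heuristic you give for why it works should be corrected.
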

\begin{proof}
Given $\bS$, let $\bZ$ be the random variable given by (\ref{eq:statistic:z}), so $\textsc{ThresholdZTest}(\tau_k, \bS)$ rejects whenever $\bZ > \tau_k$. From~\cref{lemma:variance:z:general:p}, (\ref{eq:unrolled-tau}) and (\ref{eq:mean:blowup:uniform}), for all $k \in \N$, $\Ex[\bZ] = \|\mu(\blowupit{k}{\uniform})\|_2^2$ satisfies 
\begin{align}
\|\mu(\blowupit{k}{\uniform})\|_2^2 \leq \frac{\tau_k}{6\cdot 2^k}. \label{eq:relation}
\end{align}
The case of $k=0$ is trivial, and the case $k=1$ follows from our choice of $q>C/\epsilon^2\sqrt{n}$.
As a result, the left-hand side of (\ref{eq:prob-lb}) is at most
\begin{align}
\Prx_{\bS}\left[ \bZ > \tau_k \right] &\leq \Prx_{\bS}\left[ |\bZ - \Ex[\bZ]| > \frac{\tau_k}{2}\right] \leq \dfrac{4 \Var[\bZ]}{ \tau_k^2 } \label{eq:eeee} \\
	&\leq \frac{4}{\tau_k^2} \left( \frac{1}{q^2} \|\Sigma(\blowupit{k}{\uniform}\|_F^2 + \frac{4}{q} \|\mu(\blowupit{k}{\uniform})\|_2^2 \|\Sigma(\blowupit{k}{\uniform})\|_2 \right) \label{eq:eheh} \\
	&\leq \frac{4}{\tau_k^2 q^2} \cdot \frac{\tau_{k+1}}{40\cdot 2^k} + \frac{16}{\tau_k^2 q} \cdot \frac{\tau_k}{20\cdot 2^k} \cdot \sqrt{\frac{\tau_{k+1}}{10}} < \frac{1}{6\cdot 2^k}, \label{eq:final}
\end{align}
where we used Chebyshev's inequality in (\ref{eq:eeee}),~\cref{lemma:variance:z:general:p} in (\ref{eq:eheh}), and~\cref{fact:mean:blowup} and $\tau_{k+1} = a\tau_k^2 q^2$, as well as (\ref{eq:relation}) in (\ref{eq:final}).
\end{proof}

\medskip
\noindent By a union bound over all $k$, we thus get that the algorithm, when run on the uniform distribution $\uniform$, outputs $\reject$ with probability at most $\sum_{k=0}^\infty \frac{2^{-k}}{6} = 1/3$. For the soundness case, the following lemma will be useful. 
\begin{lemma}\label{lemma:recurse:blowup:mean}
Let $\p$ be a distribution supported on $\{-1,1\}^{\dimss}$, satisfying
\begin{enumerate}
\item $\|\mu(\p)\|_2^2 > 2\tau$, and
\item When $\bS = \{ \bx_1, \dots, \bx_q, \by_1, \dots, \by_q\}$ is set to $2q$ i.i.d.\ samples from $\p$, 
\[ \Prx_{\bS}\left[ \textsc{ThresholdZTest}(\tau, \bS) \text{ outputs \accept}\right] \geq \frac{1}{3}. \]
\end{enumerate}  
Then $\normtwo{ \mu( \blowup{\p} )}^2 \geq \frac{1}{48^2} \cdot \tau^2\ns^2$.
\end{lemma}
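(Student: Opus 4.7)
The plan is to leverage both hypotheses in tandem via a lower-tail / Paley--Zygmund style argument on the random variable $\bZ$. First, I would recall from \cref{lemma:variance:z:general:p} that $\Ex[\bZ] = \normtwo{\mu(\p)}^2 > 2\tau$ by condition (1), while $\Pr[\bZ \leq \tau] \geq 1/3$ by condition (2) (since \textsc{ThresholdZTest} accepts exactly when $\bZ \leq \tau$). On the event $\{\bZ \leq \tau\}$, we have $\Ex[\bZ] - \bZ > \normtwo{\mu(\p)}^2 - \tau > \normtwo{\mu(\p)}^2/2$, using $\normtwo{\mu(\p)}^2 > 2\tau$. Squaring, $(\bZ - \Ex[\bZ])^2 \geq \normtwo{\mu(\p)}^4/4$ with probability at least $1/3$, and Markov's inequality applied to the nonnegative random variable $(\bZ - \Ex[\bZ])^2$ then yields the variance lower bound
\[
\Var[\bZ] \;\geq\; \frac{\normtwo{\mu(\p)}^4}{12}.
\]

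Next, I would combine this lower bound with the variance upper bound in \cref{lemma:variance:z:general:p}. Writing $F \eqdef \norm{\Sigma(\p)}_F$ and $M \eqdef \normtwo{\mu(\p)}^2$, the two bounds together give
\[
\frac{F^2}{\ns^2} \;+\; \frac{4MF}{\ns} \;\geq\; \frac{M^2}{12},
\]
so at least one of the two summands on the left must be $\geq M^2/24$. In the first case, $F \geq M\ns/\sqrt{24}$; in the second case, $F \geq M\ns/96$. Either way, $F \geq M\ns/96$. Since $M > 2\tau$ by condition (1), this gives $F > 2\tau \ns /96 = \tau \ns /48$. Squaring and invoking \cref{fact:mean:blowup}, which equates $\normtwo{\mu(\blowup{\p})}^2$ with $F^2$, yields the claimed $\normtwo{\mu(\blowup{\p})}^2 \geq \tau^2 \ns^2 / 48^2$.

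The only conceptual subtlety I anticipate is the first step: the naive Chebyshev bound $\Var[\bZ] \geq \tau^2/3$, obtained from $\Pr[|\bZ - \Ex[\bZ]| > \tau] \geq 1/3$, is too weak and leads only to $F \gtrsim \tau\sqrt{\ns}$ after the same case analysis, falling short of the target $\tau \ns$ by a factor of $\sqrt{\ns}$. The trick is to exploit the full gap $\Ex[\bZ] > 2\tau$ to obtain a variance lower bound proportional to $M^2$ rather than $\tau^2$, which after the case analysis yields a bound of the form $F \gtrsim M\ns$; the substitution $M > 2\tau$ at the very end then recovers the correct scaling in $\tau$. Everything else is mechanical algebra.
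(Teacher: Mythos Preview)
Your proof is correct and is essentially identical to the paper's: both observe that $\{\bZ\leq\tau\}\subseteq\{|\bZ-\Ex[\bZ]|\geq \Ex[\bZ]/2\}$ (since $\Ex[\bZ]=\|\mu(\p)\|_2^2>2\tau$), apply Chebyshev/Markov to obtain $\Var[\bZ]\geq \|\mu(\p)\|_2^4/12$, and then combine with the variance upper bound of \cref{lemma:variance:z:general:p} via the same two-term case split to conclude $\norm{\Sigma(\p)}_F\geq \tau\ns/48$. The only cosmetic difference is that you isolate the variance lower bound as an intermediate step, whereas the paper chains the inequalities in one line; your closing remark about why the naive deviation $\tau$ (rather than $\Ex[\bZ]/2$) is insufficient is a nice addition.
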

\begin{proof}
Recall that $\textsc{ThresholdZTest}(\tau, \bS)$ outputs accept if $\bZ \leq \tau$, where $\bZ$ is set according to $\bS$ by (\ref{eq:statistic:z}).~\cref{lemma:variance:z:general:p} implies $\Ex[\bZ] = \|\mu(p)\|_2^2 > 2\tau$, so
\begin{align}
\frac{1}{3} &\leq \Prx_{\bZ}\left[ \bZ \leq \tau \right] \leq \Prx_{\bZ}\left[ |\bZ - \Ex[\bZ]| \geq \frac{\Ex[\bZ]}{2}\right] \leq \frac{4}{\|\mu(\p)\|_2^4} \left(\frac{1}{q^2} \norm{ \Sigma(\p) }_F^2 + \frac{4}{q} \|\mu(\p)\|_2^2 \norm{ \Sigma(\p) }_F \right), \label{eq:asdf}
\end{align}
where we used Chebyshev's inequality for the last inequality. Hence, writing $\|\mu(\blowup{\p})\|_2^2$ for $\norm{ \Sigma(\p) }_F^2$ by~\cref{fact:mean:blowup}. In particular, at least one of the two terms in the right-most side of (\ref{eq:asdf}) is at least $1/6$, and thus either $\normtwo{ \mu( \blowup{\p} )}^2 \geq \tau^2\ns^2/6$ or $\normtwo{ \mu( \blowup{\p} )} \geq \tau\ns/48$.
\end{proof}

\begin{lemma}[Soundness]\label{lemma:soundness}
There exists a large enough $C > 0$ such that setting 
\[ q \geq \left(\frac{C}{\eps^2} \right)^{\frac{2^{k_0 + 1}}{2^{k_0+2}-2}}, \] 
the following holds. For any distribution $\p$ on $\bool^{\dims}$ with $\normtwo{\mu(\p)} > \dst\sqrt{n}$, there is some $k \in \{0, \dots, k_0\}$ such that letting $\bS = \{ \bx_1, \dots, \bx_q, \by_1, \dots, \by_q\}$ be $2q$ i.i.d.\ samples from $\blowupit{k}{\p}$,
\[ \Prx_{\bS}\left[ \textsc{ThresholdZTest}(\tau_k, \bS) \text{ outputs } \reject\right] \geq \frac{2}{3}. \]
\end{lemma}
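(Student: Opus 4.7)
The plan is to argue by contrapositive: suppose that for \emph{every} $k \in \{0, 1, \ldots, k_0\}$, the test $\textsc{ThresholdZTest}(\tau_k, \cdot)$ outputs \accept with probability at least $1/3$ when given $2q$ i.i.d.\ samples from $\blowupit{k}{\p}$. The strategy is then to iteratively apply \cref{lemma:recurse:blowup:mean} along the chain $\p, \blowup{\p}, \blowupit{2}{\p}, \ldots, \blowupit{k_0+1}{\p}$ to obtain an upper bound on $q$ which, for $C$ chosen large enough in the hypothesis, contradicts the assumed lower bound on $q$.

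The base case uses the assumption $\normtwo{\mu(\p)}^2 > \dst^2 n = 2\tau_0$, which together with the supposed acceptance probability at $k=0$ is exactly the hypothesis of \cref{lemma:recurse:blowup:mean} with $\tau = \tau_0$. Applying it yields $\normtwo{\mu(\blowup{\p})}^2 \geq \tau_0^2 q^2 / 48^2$. To continue the induction, one needs $\normtwo{\mu(\blowupit{k+1}{\p})}^2 > 2 \tau_{k+1}$, which by the recursion $\tau_{k+1} = q^2 \tau_k^2 / 5000$ reduces to the numerical inequality $5000 > 2 \cdot 48^2 = 4608$; this is precisely the reason the constant $5000$ was chosen in~(\ref{eq:setting:tauk}). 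Iterating for $k=0, 1, \ldots, k_0$ then yields
\[
    \normtwo{\mu(\blowupit{k_0+1}{\p})}^2 \;\geq\; \frac{\tau_{k_0}^2 \, q^2}{48^2}.
\]

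To derive a contradiction, observe the trivial bound $\normtwo{\mu(\blowupit{k_0+1}{\p})}^2 \leq n^{2^{k_0+1}}$, which holds because $\blowupit{k_0+1}{\p}$ is supported on $\bool^{n^{2^{k_0+1}}}$ and each coordinate of its mean vector lies in $[-1, 1]$. Plugging in the closed form $\tau_{k_0} = \frac{1}{aq^2}(aq^2\dst^2 n/2)^{2^{k_0}}$ from~(\ref{eq:unrolled-tau}) with $a = 1/5000$, canceling the common factor of $n^{2^{k_0+1}}$, and writing $m = 2^{k_0+1}$, the resulting inequality $(aq^2 \dst^2/2)^m \leq 48^2 a^2 q^2$ rearranges to $q^{2m-2} \leq C^m / \dst^{2m}$ for an absolute constant $C$, i.e., $q \leq (C/\dst^2)^{m/(2m-2)} = (C/\dst^2)^{2^{k_0+1}/(2^{k_0+2}-2)}$. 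Choosing the constant in the sample complexity hypothesis of \cref{lemma:soundness} sufficiently large then contradicts this bound, completing the proof.

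The main obstacle is purely one of bookkeeping: propagating the recursion cleanly through $k_0+1$ iterations while carefully tracking the exponents of $q$, $n$, and $\dst$, and verifying that the constants $1/48^2$ from \cref{lemma:recurse:blowup:mean} and $1/5000$ from~(\ref{eq:setting:tauk}) remain compatible (namely $1/48^2 > 2/5000$) at every level, so that the inductive chain never stalls.
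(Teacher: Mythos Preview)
Your proposal is correct and follows essentially the same argument as the paper: assume by contradiction that all $k_0+1$ tests accept with probability at least $1/3$, inductively apply \cref{lemma:recurse:blowup:mean} (using precisely the numerical fact $a=1/5000<1/(2\cdot 48^2)$, which the paper states as ``$a<1/(48^2\cdot 2)$'') to get $\|\mu(\blowupit{k}{\p})\|_2^2\geq 2\tau_k$ for all $k\le k_0+1$, and then contradict the trivial bound $\|\mu(\blowupit{k_0+1}{\p})\|_2^2\leq n^{2^{k_0+1}}$ via the closed form~(\ref{eq:unrolled-tau}). The bookkeeping you carry out matches the paper's derivation line for line.
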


\begin{proof}
Assume for the sake of contradiction that for every $k \in \{0, \dots, k_0\}$, drawing $\bS$ from $\blowupit{k}{\p}$ satisfies $\textsc{ThresholdZTest}(\tau_k, \bS)$ outputs \accept with probability at least $1/3$. Then, by~\cref{lemma:recurse:blowup:mean} and a simple induction using the definition of $\tau_k$ in (\ref{eq:setting:tauk}) and the fact that $a < 1/(48^2 \cdot 2)$, every $k \in \{ 0, \dots, k_0 + 1\}$ satisfies $\|\mu(\blowupit{k}{\p})\|_2^2 \geq 2 \tau_{k}$. Furthermore, we always have $\|\mu(\blowupit{k}{\p})\|_2^2 \leq n^{2^k}$, so by (\ref{eq:unrolled-tau}), we may {lower bound $\normtwo{ \mu(\blowupit{k_0+1}{\p}) }^2$ as}
\begin{align*}
\frac{2}{a q^2} \left( \frac{a q^2 \eps^2 n}{2}\right)^{2^{k_0+1}} &\geq  q^{2^{k_0+2} - 2} (\dst\sqrt{n})^{2^{k_0+2}} \left(\frac{a}{2}\right)^{2^{k_0+1}-1},
\end{align*}
which contradicts the upper bound of $n^{2^{k_0+1}}$ for the setting of $q$ when $C > 0$ is a large enough constant.
\end{proof}

\medskip
As per the foregoing discussion and \cref{lemma:soundness}, there exists a parameter $0\leq k\leq k_0$ such~that $\textsc{ThresholdZTest}(\tau_{k},\bS)$ returns \reject with probability at least $2/3$ when $\bS$ is drawn from $\blowupit{k}{\p}$. For this setting of $k$, the algorithm will return \reject in~\cref{step:majority:reject} with probability at least $2/3$. Finally, from the above analysis, the sample complexity $q$ is set high enough to satisfy the constraints of \cref{lemma:completeness} and \cref{lemma:soundness}. 
\end{proofof}

\subsection{Application to Gaussian Mean Testing}\label{ssec:testing:mean:gaussian}

\begin{theorem}\label{theo:testing:l2:gaussian}
  There exists an algorithm which, given $\ns$ i.i.d.\ samples from an arbitrary Gaussian distribution $\p$ on $\R^\dims$ and a distance parameter $\dst \in (0,1]$, has the following behavior:\vspace{0.03cm}\begin{flushleft}
  \begin{itemize}
    \item If $\p$ is the standard Gaussian $\gaussian{0_{\dims}}{\mathrm{I}_{\dims}}$, then it outputs \accept with probability at least $2/3$;\vspace{0.05cm}
    \item If $\p$ is some $\gaussian{\mu}{\Sigma}$ with $\normtwo{\mu} > \dst$ (and any $\Sigma$), then it outputs \reject with probability at least~$2/3$.\vspace{0.03cm}
  \end{itemize}\end{flushleft}
  These guarantees hold as long as 
  \[
    \ns \geq C\cdot \frac{\sqrt{\dims}}{\dst^2}\,,
  \] where $C>0$ is an absolute constant, and the algorithm runs in time $\poly(\ns,\dims^{\log\dims})$. Moreover, any algorithm for this task must have sample complexity $\bigOmega{\dims^{1/2}/\dst^2}$.
\end{theorem}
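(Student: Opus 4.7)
The plan is a two-stage reduction to the Boolean mean tester from~\cref{theo:testing:l2:unrolled}, combining (i)~a coordinate-wise sign transformation on the Gaussian samples with (ii)~a preliminary ``variance screening'' step that rules out Gaussians with wildly inflated marginal variances. Concretely, I would draw $q = C\sqrt{\dims}/\dst^2$ i.i.d.\ samples $\bx_1,\dots,\bx_q\sim \p$, and then: (1) compute the empirical second moments $\hat m_i = \frac{1}{q}\sum_j (\bx_j)_i^2$ and reject if $\hat m_i > 10$ for any $i\in[\dims]$; (2) form Boolean vectors $\bs_j = (\mathrm{sign}((\bx_j)_1),\dots,\mathrm{sign}((\bx_j)_\dims))\in \bool^\dims$; (3) invoke~\cref{theo:testing:l2:unrolled} with $k_0 = \log\log\dims$, distance parameter $\dst' = c\dst/\sqrt{\dims}$ for a sufficiently small constant $c>0$, on the samples $\bs_1,\dots,\bs_q$, returning its verdict.

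For completeness, assume $\p=\gaussian{0_\dims}{\mathrm{I}_\dims}$. Then the $(\bx_j)_i$ are i.i.d.\ standard Gaussians; standard $\chi^2$ concentration together with a union bound over $\dims$ coordinates shows $\hat m_i \in (0,10)$ for all $i$ with probability at least $5/6$ once $q \gtrsim \log \dims$ (well satisfied by our choice of $q$). Moreover, the coordinates of each $\bx_j$ are independent and symmetric about zero, so $\bs_j$ is uniform on $\bool^\dims$; \cref{theo:testing:l2:unrolled} then accepts with probability at least $5/6$. A union bound yields overall acceptance probability at least $2/3$.

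For soundness, fix $\p=\gaussian{\mu}{\Sigma}$ with $\normtwo{\mu} > \dst$ and arbitrary $\Sigma$. If $\Sigma_{ii}+\mu_i^2 > 20$ for some $i$, then $\hat m_i$ concentrates above $10$ and step (1) rejects with high probability. Otherwise every $\Sigma_{ii}\le 20$, and we turn to step (3): since $(\bx_j)_i \sim \gaussian{\mu_i}{\Sigma_{ii}}$ marginally, we have $\Ex[\bs_{j,i}] = 2\Phi(\mu_i/\sqrt{\Sigma_{ii}})-1$, and the elementary Gaussian-CDF inequality $|2\Phi(u)-1|^2 \gtrsim \min(u^2,1)$ yields
\[
  \normtwo{\mu(\p_{\bs})}^2 \;=\; \sum_{i=1}^\dims \bigl(2\Phi(\mu_i/\sqrt{\Sigma_{ii}})-1\bigr)^2 \;\gtrsim\; \sum_{i=1}^\dims \min\Bigl(\tfrac{\mu_i^2}{\Sigma_{ii}},\,1\Bigr) \;\gtrsim\; \frac{\normtwo{\mu}^2}{20} \;>\; \frac{\dst^2}{20}\,.
\]
Crucially, only the diagonal of $\Sigma$ enters this bound: the off-diagonal correlations of $\Sigma$ affect the joint law of $\bs$ but not its mean vector. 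Choosing $c$ small enough that $\dst'\sqrt{\dims}\le \normtwo{\mu(\p_{\bs})}$ places $\p_{\bs}$ in the reject regime of~\cref{theo:testing:l2:unrolled}, which outputs reject with probability at least $5/6$.

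Plugging $\dst' = \Theta(\dst/\sqrt{\dims})$ and $k_0=\log\log\dims$ into the sample-complexity bound of~\cref{theo:testing:l2:unrolled} gives $q = O(\max\{1/(\dst'^2\sqrt{\dims}),1/\dst'\}) = O(\sqrt{\dims}/\dst^2)$ (since $\dst\le 1$ makes the first term dominate), and a runtime of $O(q \cdot \dims^{2^{k_0}}) = \poly(q,\dims^{\log\dims})$ --- both matching the theorem. The matching lower bound $\Omega(\sqrt{\dims}/\dst^2)$ is inherited from the special case $\Sigma=\mathrm{I}_\dims$: this is the standard spherical-Gaussian mean testing problem, whose lower bound follows from Ingster's $\chi^2$ method applied to a product prior on $\mu\in\{\pm \dst/\sqrt{\dims}\}^\dims$. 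The main obstacle will be calibrating the screening threshold against the Gaussian-CDF inequality to get the case split clean without polylog loss; the key underlying observation that makes this go through is that the sign transformation decouples the diagonal of $\Sigma$ from its off-diagonal structure, so once the diagonal is controlled by screening, the rest of the covariance is irrelevant to the Boolean reduction.
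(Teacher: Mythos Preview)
Your proposal is correct and follows essentially the same approach as the paper: a coordinatewise sign transformation to reduce to Boolean mean testing via~\cref{theo:testing:l2:unrolled}, preceded by a variance-screening step to control $\max_i \Sigma_{ii}$, together with the Gaussian-CDF inequality $(2\Phi(u)-1)^2 \gtrsim \min(u^2,1)$. The only cosmetic differences are that the paper uses the empirical median (rather than the empirical second moment) for the screening step and explicitly invokes constant independent repetition plus majority vote to boost the Boolean tester's $2/3$ guarantee to the $5/6$ you assume; you should make that boosting explicit.
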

\begin{proof}
The idea is to reduce the above question to $\lp[2]$ mean testing over $\bool^\dims$, and invoke~\cref{theo:testing:l2:unrolled}. The natural approach is, given a sample $x\in\R^\dims$ from the unknown Gaussian $\p$, to convert it to $y\in\bool^\dims$ by setting $y_i \eqdef \sign(x_i)$ for all $i\in[\dims]$. This idea, used e.g., in~\cite{CanonneKMUZ19} in the case of identity-covariance Gaussian distributions, clearly maps the standard Gaussian $\gaussian{0_{\dims}}{\mathrm{I}_{\dims}}$ to the uniform distribution $\uniform$ on $\bool^\dims$; therefore, the crux is to argue about the soundness case, when $\normtwo{\mu(\p)} > \dst$: we will obtain a distribution $\q$ on $\bool^\dims$ with arbitrary covariance matrix, and need to show that $\normtwo{\mu(\q)} \gtrsim \dst$.

Let $\q$ denote the distribution on $\bool^\dims$ obtained in the above fashion from $\p=\gaussian{\mu}{\Sigma}$, and note that the linear-time transformation maps a sample from $\p$ to a sample from $\q$. Further, it is easy to check that
\[
    \normtwo{\mu(\q)}^2 = \sum_{i=1}^\dims \bE{y\sim\q}{ y_i }^2 = \sum_{i=1}^\dims (2\probaDistrOf{x\sim\p}{x_i > 0}-1)^2
    = \sum_{i=1}^\dims \Paren{\operatorname{Erf}\!\Paren{\frac{\mu_i}{\sqrt{2}\Sigma_{ii}}}}^2
\]
and, from a relatively straightforward analysis of the error function $\operatorname{Erf}$, we get that $\operatorname{Erf}(x)^2 \geq \operatorname{Erf}(1)^2\cdot \min(x^2,1) > \frac{2}{3}\min(x^2,1)$ for all $x\in\R$. Therefore, we get, whenever $\normtwo{\mu(\p)} > \dst$, that
\[
    \normtwo{\mu(\q)}^2 > \frac{1}{\max_{1\leq i\leq \dims} \Sigma_{ii}^2 } \cdot \frac{\dst^2}{3}\,.
\]
Thus, it suffices to check (with high probability, say $11/12$) that (i)~all $\Sigma_{ii} = \Var_{\p}[x_i]  \leq \bE{x\sim\p}{x_i^2}$ are at most $2$, and (ii)~call the mean testing algorithm (\cref{algo:testing:l2:unrolled}) on $\q$ with parameter $\dst/(2\sqrt{3\dims})$. The first step can be done coordinate-wise (checking each $\bE{x\sim\p}{x_i^2}$ by taking the empirical median, and concluding overall by a union bound over the $\dims$ estimates)\footnote{In more detail, for an arbitrary univariate Gaussian $X\sim\gaussian{\mu}{\sigma}$, testing $\bE{}{X^2} \leq 1$ vs. $\bE{}{X^2} > 2$ only takes, by considering the median, a constant number of samples (independent of $\mu,\sigma$).}{} with $O(\log\dims)$ samples in total; the second will cost $O(\frac{\sqrt{\dims}}{\dst^2})$ samples and be correct with probability at least $2/3$ by the foregoing discussion, so overall the test is correct with probability at  least $7/12$. Repeating independently a constant number of times and taking a majority vote then bring the success probability to the desired $2/3$.
\end{proof}

\paragraph{Acknowledgments.} The authors would like to thank Rajesh Jayaram, whose suggestions (in particular,~\cref{rk:improved:runtime}, and a tighter union bound for the completeness case of~\cref{theo:testing:l2:unrolled}) helped improve an earlier version of the paper.

\bibliographystyle{alpha}
\bibliography{biblio}

 \appendix
 \section{Proof of~\cref{lemma:variance:z:general:p}}
\label{sec:mean-appendix}
  We start by computing the second moment of the statistic:
\begin{align}
  \bE{}{(\bZ(\p))^2} &= \bE{}{\dotprod{\bar{\bX}}{ \bar{\bY}}\dotprod{\bar{\bX}}{ \bar{\bY}}}
  = \sum_{1\leq i,j\leq \dimss} \bE{}{ \bar{\bX}_i\bar{\bX}_j }\bE{}{ \bar{\bY}_i\bar{\bY}_j }
  = \sum_{1\leq i,j\leq \dimss} \bE{}{ \bar{\bX}_i\bar{\bX}_j }^2 \notag\\
  &= \sum_{i=1}^\dimss \bE{}{ \bar{\bX}_i^2 }^2 + 2\sum_{i<j} \bE{}{ \bar{\bX}_i\bar{\bX}_j }^2
  = \frac{\dimss}{\ns^2} + 2\frac{\ns-1}{\ns^2}\normtwo{\mu(\p)}^2+\frac{(\ns-1)^2}{\ns^2}\norm{\mu(\p)}_4^4 + 2\sum_{i<j} \bE{}{ \bar{\bX}_i\bar{\bX}_j }^2\notag\\
  &\leq \frac{\dimss}{\ns^2} + \frac{2}{\ns}\normtwo{\mu(\p)}^2+\norm{\mu(\p)}_4^4 + 2\sum_{i<j} \bE{}{ \bar{\bX}_i\bar{\bX}_j }^2. \label{eq:variance:z}
\end{align}
The final equality follows since, for all $i\in[\dimss]$,
\begin{equation}\label{eq:expectation:barxisquared}
\ns^2\bE{}{ \bar{\bX}_i^2 } = \sum_{k=1}^\ns \bE{}{ {\bX^{(k)}_i}^2 }+ 2\sum_{1\leq k<\ell\leq \ns}\bE{}{ \bX^{(k)}_i \bX^{(\ell)}_i }
  = \ns + \ns(\ns-1)\mu(\p)_i^2
\end{equation}

  For any $i<j$, let $\sigma(\p)_{ij} \eqdef \bE{x\sim\p}{x_ix_j}$. We have
\begin{align*}
  \bE{}{ \bar{\bX}_i\bar{\bX}_j }
  &= \frac{1}{\ns^2}\sum_{1\leq k,\ell\leq \ns} \bE{}{ \bX^{(k)}_i \bX^{(\ell)}_j }
  = \frac{1}{\ns^2}\Paren{ \sum_{k=1}^\ns \bE{}{ \bX^{(k)}_i \bX^{(k)}_j } +  2\sum_{1\leq k<\ell\leq \ns} \bE{}{ \bX^{(k)}_i } \bE{}{ \bX^{(\ell)}_j } } \\
  &= \frac{1}{\ns}\sigma(\p)_{ij} + \frac{\ns-1}{\ns} \mu(\p)_i\mu(\p)_j\,,
\end{align*}
so that
\begin{align*}
2\sum_{i<j} \bE{}{ \bar{\bX}_i\bar{\bX}_j }^2
  &= \frac{1}{\ns^2}\sum_{i\neq j}\sigma(\p)_{ij}^2 + \frac{(\ns-1)^2}{\ns^2}\sum_{i\neq j}\mu(\p)_i^2\mu(\p)_j^2 + \frac{2(\ns-1)}{\ns^2}\sum_{i\neq j}\sigma(\p)_{ij}\mu(\p)_i\mu(\p)_j\\
  &\leq \frac{1}{\ns^2}\sum_{i\neq j}\sigma(\p)_{ij}^2 + \normtwo{\mu(\p)}^4-\norm{\mu(\p)}_4^4 + \frac{2}{\ns}\sum_{i\neq j}\sigma(\p)_{ij}\mu(\p)_i\mu(\p)_j\,.
\end{align*}

Combining this with~\eqref{eq:expectation:z} and~\eqref{eq:variance:z}, we have 
\begin{align}\label{eq:variance:z:partial}
  \Var[\bZ(\p)] &= \bE{}{(\bZ(\p))^2} - \bE{}{\bZ(\p)}^2 
  \leq \frac{\dimss}{\ns^2} + \frac{2}{\ns}\normtwo{\mu(\p)}^2 
  + \frac{1}{\ns^2}\sum_{i\neq j}\sigma(\p)_{ij}^2 
  + \frac{2}{\ns}\sum_{i\neq j}\sigma(\p)_{ij}\mu(\p)_i\mu(\p)_j 
\end{align}
We use the Cauchy-Schwarz inequality to simplify the last term:
\begin{align*}
  \Var [\bZ(\p)]
  &\leq \frac{\dimss}{\ns^2} + \frac{2}{\ns}\normtwo{\mu(\p)}^2 
  + \frac{1}{\ns^2}\sum_{i\neq j}\sigma(\p)_{ij}^2 
  + \frac{2}{\ns}\sqrt{\sum_{i\neq j}\sigma(\p)_{ij}^2}\sqrt{\sum_{i\neq j}\mu(\p)_i^2\mu(\p)_j^2}  \\
  &= \frac{\dimss}{\ns^2} + \frac{2}{\ns}\normtwo{\mu(\p)}^2 
  + \frac{1}{\ns^2}\sum_{i\neq j}\sigma(\p)_{ij}^2 
  + \frac{2}{\ns}\sqrt{\sum_{i\neq j}\sigma(\p)_{ij}^2}\sqrt{\normtwo{\mu(\p)}^4-\norm{\mu(\p)}_4^4}  \\
  &\leq \frac{\dimss}{\ns^2} + \frac{2}{\ns}\normtwo{\mu(\p)}^2 
  + \frac{1}{\ns^2}\sum_{i\neq j}\sigma(\p)_{ij}^2 
  + \frac{2}{\ns}\normtwo{\mu(\p)}^2\sqrt{\sum_{i\neq j}\sigma(\p)_{ij}^2}
\end{align*}
  Letting $\Sigma(\p) \eqdef \bE{x\sim\p}{xx^T}$, we have 
\[
  \Sigma(\p)_{ij} = \begin{cases}1 &\text{ if } i=j \\ \sigma(\p)_{ij} &\text {otherwise}\end{cases}
\]
and thus can rewrite the above as
\begin{align*}
  \Var [\bZ(\p)]
  &\leq \frac{1}{\ns^2}\norm{\Sigma(\p)}_F^2 + \frac{2}{\ns}\normtwo{\mu(\p)}^2  + \frac{2}{\ns}\normtwo{\mu(\p)}^2\sqrt{\norm{\Sigma(\p)}_F^2-\Tr[\Sigma(\p)]} \nonumber \\
  &\leq \frac{1}{\ns^2}\norm{\Sigma(\p)}_F^2 + \frac{2}{\ns}\normtwo{\mu(\p)}^2  + \frac{2}{\ns}\normtwo{\mu(\p)}^2\norm{\Sigma(\p)}_F \\
  &\leq \frac{1}{\ns^2}\norm{\Sigma(\p)}_F^2 + \frac{4}{\ns}\normtwo{\mu(\p)}^2\norm{\Sigma(\p)}_F,
\end{align*}
as desired.

\end{document}